\theoremstyle{theorem}
\newtheorem{lemma}{Lemma}
\newtheorem{theorem}{Theorem}
\newtheorem{proposition}{Proposition}
\theoremstyle{definition}
\newtheorem{definition}{Definition}
\title{Exclusion Zones of Instant Runoff Voting}
\author[1]{Kiran Tomlinson\thanks{kitomlinson@microsoft.com}}
\author[2]{Johan Ugander}
\author[3]{Jon Kleinberg}
\affil[1]{Microsoft Research}
\affil[2]{Stanford University}
\affil[3]{Cornell University}
\date{}
\begin{document}

\maketitle

\begin{abstract}
Recent research on instant runoff voting (IRV) shows that it exhibits a striking combinatorial property in one-dimensional preference spaces: there is an {\em exclusion zone} around the median voter such that if a candidate from the exclusion zone is on the ballot, then the winner must come from the exclusion zone. Thus, in one dimension, IRV cannot elect an extreme candidate as long as a sufficiently moderate candidate is running. In this work, we examine the mathematical structure of exclusion zones as a broad phenomenon in more general preference spaces. We prove that with voters uniformly distributed over any $d$-dimensional hyperrectangle (for $d > 1$), IRV has no nontrivial exclusion zone. However, we also show that IRV exclusion zones are not solely a one-dimensional phenomenon. For irregular higher-dimensional preference spaces with fewer symmetries than hyperrectangles, IRV can exhibit nontrivial exclusion zones. As a further exploration, we study IRV exclusion zones in graph voting, where nodes represent voters who prefer candidates closer to them in the graph. Here, we show that IRV exclusion zones present a surprising computational challenge: even checking whether a given set of positions is an IRV exclusion zone is NP-hard. We develop an efficient randomized approximation algorithm for checking and finding exclusion zones. We also report on computational experiments with exclusion zones in two directions: (i) applying our approximation algorithm to a collection of real-world school friendship networks, we find that about 60\% of these networks have probable nontrivial IRV exclusion zones; and (ii) performing an exhaustive computer search of small graphs and trees, we also find nontrivial IRV exclusion zones in most graphs. While our focus is on IRV, the properties of exclusion zones we establish provide a novel method for analyzing voting systems in metric spaces more generally.
\end{abstract}

% Paper body
\section{Introduction}

An important principle in voting theory is that different voting systems can implicitly favor different regions of the underlying ideological space, for example by benefitting more moderate or more extreme positions. The classic result in this vein is the Median Voter Theorem~\cite{black1948rationale}, which states that when preferences are single-peaked, voting systems satisfying the Condorcet criterion elect the candidate closest to the median voter. That is, Condorcet methods strongly favor moderate candidates in one-dimensional preference spaces. However, this result does not provide us with tools to understand non-Condorcet methods or voting systems under higher-dimensional preferences. In those cases, prior work has often turned to simulation to explore the typical behavior of various voting systems~\cite{elkind2017multiwinner,mcgann2002party,merrill1984comparison}. 

Recent work on instant runoff voting (IRV), which is not a Condorcet method, provides a new approach to theoretically characterizing the candidate positions favored by a voting system. Tomlinson, Ugander, and Kleinberg~\cite{tomlinson2024moderating} proved that IRV with symmetrically-distributed one-dimensional preferences exhibits a striking combinatorial property that is not apparent from its definition: it can only elect a candidate in an interval around the median voter, termed the \emph{exclusion zone}. More generally, an exclusion zone for a given voting system and voter distribution is a set $S$ of candidate positions with the property that if any candidate belongs to $S$, then the winner of the election must belong to $S$. One of the main results of Tomlinson et al.~\cite{tomlinson2024moderating} is that with voters uniformly distributed over $[0, 1]$, the interval $[1/6, 5/6]$ is an exclusion zone of IRV. Thus, treating as usual the more central points of the interval as the more moderate positions, IRV cannot elect an extreme candidate over a sufficiently moderate one~\cite[Theorem 1]{tomlinson2024moderating}.

Our earlier work~\cite{tomlinson2024moderating} left open the question of whether exclusion zones were specific to the structure of the unit interval, or whether they represented a more general principle.  In this work we address this question, extending the analysis of IRV exclusion zones to voters uniformly distributed over metric spaces beyond the unit interval, and fleshing out an understanding of exclusion zones as a general phenomenon of voting systems. We resolve the natural open question about whether the one-dimensional IRV exclusion zone generalizes to higher dimensions, showing for both $L_1$ and $L_2$ distance metrics that uniform voters over any $d$-dimensional hyperrectangle with $d >1$ produce no (nontrivial) IRV exclusion zones. While this strong nonexistence result seems to suggest that IRV exclusion zones are a purely one-dimensional phenomenon, we show this is not the case via a counterexample, an irregular higher-dimensional preference space with a nontrivial IRV exclusion zone. In order to prove our main result for hyperrectangles, we derive general properties of exclusion zones that reveal their structure. We show that the exclusion zones of any voting system in any metric preference space form a nesting from the trivial exclusion zone containing all positions down to a unique \emph{minimal exclusion zone}. These properties of exclusion zones allow us to develop a constructive proof recipe for the nonexistence of nontrivial exclusion zones, which we term the Condorcet Chain Lemma. Informally, this lemma states that if we can construct a sequence of elections such that (1) the first election is lost by a Condorcet winner, (2) the last election is won by a Condorcet loser, and (3) the winner of each election loses the next one, then there is no nontrivial exclusion zone. Our proof for hyperrectangles constructs such a sequence of elections.

In addition to higher-dimensional spatial preferences, we also consider a setting motivated by friendship- or allegiance-based preferences: voting on graphs, where nodes represent voters (some of which run for election) who prefer candidates closer to them in the graph. (Note that the analysis of one-dimensional voting and its implication for exclusion zones has a close analogue in the setting of voting on graphs, by considering elections on path graphs.)  In this setting, we consider the algorithmic problem of identifying IRV exclusion zones in general graphs, which we show is computationally hard. We find that even checking whether a given set of nodes is an IRV exclusion zone is co-NP-complete, and finding a graph's minimal IRV exclusion zone is at least as hard. 

On the positive side, we are able to obtain tractability results by introducing a notion of approximate exclusion zones; roughly speaking, a set of nodes $S$ is an \emph{approximate exclusion zone} (for some approximation parameter $\epsilon$) if a random set of candidates containing at least one element of $S$ has its winner from $S$ with probability at least $1 - \epsilon$.   Using this definition, we are able to give a randomized approximation with the following two-sided guarantee: it outputs a set of nodes that is both guaranteed to be a subset of the true minimal IRV exclusion zone, and also with high probability to be an approximate exclusion zone.  (For this algorithm, we must also design an efficient subroutine for checking if a set of nodes is an approximate exclusion zone.)  This algorithm is useful---both theoretically and in practice---as a certifier that a graph has no nontrivial exclusion zone: its underlying guarantee tells us that if it outputs the full node set, then it has provided such a certificate. We also show that the problem of checking IRV exclusion zones is fixed-parameter tractable in the number of nodes outside the candidate set, so large IRV exclusion zones can be efficiently checked. 

While identifying graph IRV exclusion zones is hard in general, we derive the minimal IRV exclusion zones for several families of graphs, including distance-regular graphs~\cite{van2016distance}, paths, perfect binary trees, and bistars. As in the earlier case of continuous metric spaces, this analysis is aided by the Condorcet Chain Lemma. We also demonstrate the usefulness of our approximation algorithm on a collection of 56 real-world school friendship networks~\cite{paluck2016changing} with several hundred nodes each, much too large for exact computation. We find that 33 of these 56 networks (59\%) likely have approximate IRV exclusion zones, and certify that the other 23 have no nontrivial IRV exclusion zones. The approximate exclusion zones we find tend to be large, comprising an average of 95\% of the network. Informally, nodes that are excluded tend to be on the fringe of the network, although sometimes entire communities are excluded. Through an exhaustive computer search of small graphs and trees, we find that most connected graphs have nontrivial IRV exclusion zones. 

Overall, our work highlights exclusion zones as a rich mathematical object that can illuminate the behavior of voting systems with metric voter preferences. We focus on IRV, partly because it is the predominant single-winner alternative to plurality voting---used for instance in Australia, Ireland, Alaska, and Maine, as well as cities including San Francisco, New York City, and Minneapolis---and partly to resolve open questions from prior work introducing exclusion zones. However, our exploration of properties of exclusion zones applies more broadly, situating them as a general phenomenon for voting, and setting the stage for future work on the exclusion zones of other voting systems.

\section{Related work}
Our work builds directly on the concept of exclusion zones introduced by \citet{tomlinson2024moderating}, where we studied them for IRV and plurality with one-dimensional Euclidean preferences. The concepts of Condorcet winning sets and $\theta$-winning sets~\cite{elkind2011choosing,elkind2015condorcet,bloks2018condorcet,charikar2024six} are related in spirit to exclusion zones, in that they describe sets of candidates preferred by voters to candidates outside the set. However, there are a few important distinctions. First, exclusion zones are sets of possible candidate positions across a family of profiles (the voting theoretic term for a collection of voter preferences over a set of candidates) rather than candidates in a specific profile. As a second matter, exclusion zones are defined by the outcome of the election under a given voting system rather than pairwise preferences of voters.

Another notable approach to studying voting systems in metric spaces comes from the literature on utilitarian metric distortion~\cite{procaccia2006distortion,anshelevich2015approximating,anshelevich2018approximating}. In this framework, voters and candidates have unknown positions in a metric space and the \emph{distortion} of a voting system is the worst-case ratio (over positions consistent with expressed voter rankings) between the total distance from voters to the elected candidate and the minimal total distance to any candidate. No voting system can have metric distortion better than 3~\cite{anshelevich2015approximating} and voting systems achieving this bound are known~\cite{gkatzelis2020resolving,kizilkay2022apluralityveto}. The distortions of other voting rules are also known, including for Borda count, plurality, IRV, and Copeland~\cite{anshelevich2018approximating,anagnostides2022dimensionality}. Utilitarian distortion is a valuable tool for comparing voting systems, but answers a different question than our work. We ask what regions of a space are favored by a voting system over all possible candidate sets rather than how bad an outcome can be in a worst case over unknown voter and candidate positions.  

Relating to our study of voting on graphs, graph-based preferences have a long history in the facility location literature, where facilities can be viewed as candidates and customers can be viewed as voters. A Condorcet node is then a facility preferred by more than half of customers to any other facility~\cite{wendell1981new,bandelt1985networks,hansen1986equivalence}. Graph-based preferences also occasionally appear in the social choice literature as a special case of metric preferences~\cite{skowron2017social}. One working paper uses graph-based voting to explain the success of the Medici family in medieval Florence~\cite{telek2016power}. From a different angle, graph-distance voting has recently been proposed as a node centrality measure~\cite{brandes2022voting,skibski2023closeness}. 

\section{Voting preliminaries}
We begin by defining the terms we will need for our analysis. An \emph{election} consists of a set of candidates $C$, voters $V$ with preferences over $C$, and a voting rule $r$ specifying how a winner should be elected from $C$ based on the voter preferences. We focus on metric voter preferences (also called spatial preferences), where voters and candidates have positions in a bounded metric space $(M,d)$ with set $M$ and metric $d$. Voters prefer candidates closer to them according to $d$. To model a large voting population, we let $V$ be a continuous distribution over $M$ and measure votes with real-valued shares, although similar results would also hold for discrete sets of voters. In this work, we always have $V$ uniformly distributed over $M$, for simplicity rather than necessity. The tuple $(M, d, V, r)$ defines an \emph{election setting}, which in combination with a candidate configuration $C$ fully specifies an election. 

The simplest voting rule is \emph{plurality}, where the candidate who is most preferred by the greatest number of voters is elected (with ties broken in some pre-specified way, say at random). We focus on a prominent alternative to plurality, \emph{instant runoff voting} (IRV), also known as ranked-choice voting, single transferable vote, or the Hare method. Under IRV, we repeatedly compute plurality vote shares for each candidate and eliminate the candidate with the fewest votes, until only the winner remains. In practice, this procedure is accomplished by asking voters to rank the candidates in order of preference, so that these ``runoffs'' can be computed ``instantly.''

A few other concepts from voting theory will be useful to us, with slight modifications for our setting. Given $(M, d, V)$, a \emph{Condorcet winner} is a candidate position $x\in M$ such that for any other $y \in M$, $x$ is preferred to $y$ by at least half of the voters (such a position may not exist). A \emph{Condorcet loser} is a position  $y\in M$ such that for any other $x \in M$, $x$ is preferred to  $y$ by at least half of the voters.  For example, with uniformly distributed voters over $[0, 1]$, the position $1/2$ is a Condorcet winner, while the positions $0$ and $1$ are Condorcet losers. Note that this definition is a slight deviation from the standard one, which defines Condorcet winners and losers as candidates in a particular election. Here, we consider a \emph{position} to be a Condorcet winner if it is preferred by a majority to any alternative in \emph{any} election generated by $(M, d, V)$ (and likewise for losers). A Condorcet voting rule elects a Condorcet winner whenever one is present in the election. Neither IRV nor plurality are Condorcet.

\section{Exclusion zones and their properties}
We now define our object of focus: exclusion zones, first formalized in our prior work~\cite{tomlinson2024moderating}. Intuitively, an exclusion zone of an election setting is a set $S$ of candidate positions such that any election with at least one candidate from $S$ is won by a candidate in $S$. This provide a sense of which positions are favored by the voting rule in the given preference space.

\begin{definition}
  A nonempty set $S\subseteq M$ is an \emph{exclusion zone} of an election setting  $(M, d, V, r)$ if the winner under $r$ with voters $V$ is guaranteed to be in $S$ for any candidate configuration $C$ with $C \cap S \ne \emptyset$, regardless of how ties are broken. 
\end{definition}

When $M = [0, 1]$, $d$ is the Euclidean metric (i.e., voters have \emph{1-Euclidean preferences}~\cite{elkind2022preference}), and the voters $V$ are uniform over $[0, 1]$,  $S = [1/6, 5/6]$ is an exclusion zone of IRV~\cite{tomlinson2024moderating}. IRV also has exclusion zones for other symmetric voter distributions over $[0, 1]$~\cite{tomlinson2024moderating}. However, this earlier work did not further explore the concept of exclusion zones, which we now show exhibit some elegant structure. We begin by showing that exclusion zones are nested inside $M$. 

\begin{proposition}\label{prop:nested}
    Let $S \ne T$ be two exclusion zones of $(M, d, V, r)$. Either $S \subset T$ or $T \subset S$. 
\end{proposition}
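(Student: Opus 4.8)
The plan is to argue by contradiction, ruling out the possibility that $S$ and $T$ are \emph{incomparable} under inclusion. Suppose that neither $S \subseteq T$ nor $T \subseteq S$ holds. Then I can choose witnesses $s \in S \setminus T$ and $t \in T \setminus S$; note that $s \neq t$, since $s \notin T$ while $t \in T$. The goal is to build a single election in which the exclusion-zone guarantees for $S$ and for $T$ impose contradictory constraints on the winner.

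The key construction is the two-candidate election with $C = \{s, t\}$. Whatever the voting rule $r$ is, and however ties are broken, the winner of this election is one of the two candidates on the ballot, i.e.\ it lies in $\{s, t\}$. I will exploit that both exclusion-zone conditions apply simultaneously to this single $C$. Since $s \in C \cap S$, the configuration $C$ meets $S$, so the definition of exclusion zone forces the winner into $S$; likewise, since $t \in C \cap T$, the winner must also lie in $T$. Hence the winner lies in $S \cap T$. But $s \notin T$ and $t \notin S$, so neither candidate belongs to $S \cap T$, i.e.\ $\{s,t\} \cap (S \cap T) = \emptyset$. This contradicts the fact that the winner lies in both $\{s,t\}$ and $S \cap T$. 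Therefore $S$ and $T$ must be comparable, and combined with the hypothesis $S \neq T$ this yields $S \subset T$ or $T \subset S$.

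I do not anticipate a serious obstacle here; the single point to verify carefully is that the winner of any election is itself one of the candidates in $C$, so that in the two-candidate case it is forced to be either $s$ or $t$. This holds for IRV, plurality, and any reasonable rule, since a voting rule selects its winner from the candidate set, even after applying tie-breaking. It is worth remarking that the same two-candidate reasoning shows more: applying both exclusion-zone conditions to an arbitrary $C$ meeting $S \cap T$ (respectively $S \cup T$) shows that $S \cap T$ and $S \cup T$ are themselves exclusion zones whenever nonempty. Together with \Cref{prop:nested}, this indicates that the family of exclusion zones forms a chain descending to a unique minimal element, the structural fact the paper develops next.
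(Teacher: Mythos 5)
Your proof is correct and follows essentially the same route as the paper's: both arguments take witnesses $s \in S \setminus T$ and $t \in T \setminus S$, form the two-candidate election $\{s,t\}$, and derive a contradiction from the fact that both exclusion-zone guarantees apply to it simultaneously (the paper phrases this as ``$s$ must win and $t$ must win,'' while you phrase it as ``the winner lies in $S \cap T$, which contains neither candidate''---the same argument). Your closing observation that $S \cap T$ and $S \cup T$ are themselves exclusion zones when nonempty is a correct and harmless bonus, consistent with the nesting structure the paper establishes.
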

\begin{proof}
    By contradiction. Suppose that $S \not\subset T$ and $T \not\subset S$. Then there is some position $s \in S$ such that $s \notin T$ and some $t \in T$ such that $t \notin S$. Consider the election with candidate set $C = \{s, t\}$. If $s$ is not guaranteed to win, then $S$ is not an exclusion zone. If $t$ is not guaranteed to win, then $T$ is not an exclusion zone. So either $S$ or $T$ are not exclusion zones, a contradiction.
\end{proof}

Indeed, in the 1-Euclidean uniform voters case, our prior work~\cite{tomlinson2024moderating} showed that any interval $[c, 1-c]$ for $c \in [0, 1/6]$ is an exclusion zone of IRV, exactly such a nesting, and that no interval smaller than $[1/6, 5/6]$ is an exclusion zone. Here, we formalize the notion of the smallest exclusion zone of a voting system.

\begin{definition}
  An exclusion zone $S$ is \emph{minimal} if no proper subset of $S$ is an exclusion zone.
\end{definition}

In our language, $[1/6, 5/6]$ is the minimal IRV exclusion zone for uniform overs over $[0, 1]$. \Cref{prop:nested} reveals that there cannot be more than one minimal exclusion zone, and we can also show one always exists. 

\begin{proposition}
Let $\mathcal S$ be the set of all exclusion zones of $(M, d, V, r)$. The unique minimal exclusion zone is given by $S^* = \cap_{S \in \mathcal S} S$.
\end{proposition}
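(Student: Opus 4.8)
The plan is to verify the three things bundled into the statement: that $S^*$ is a bona fide exclusion zone (in particular nonempty), that it is minimal, and that it is the unique minimal one. The structural backbone is \Cref{prop:nested}, which tells us that $\mathcal S$ is totally ordered by inclusion, i.e.\ a chain with $M$ as its top element; this makes $S^*$ the natural candidate for a bottom element.

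First I would check the defining exclusion property for $S^*$. Take any candidate configuration $C$ with $C \cap S^* \neq \emptyset$ and fix $c \in C \cap S^*$. Since $c$ lies in the intersection, $c \in S$ for every $S \in \mathcal S$, so $C \cap S \neq \emptyset$ for every exclusion zone $S$. As each such $S$ is an exclusion zone, the winner $w$ of the election on $C$ must lie in $S$, for every $S \in \mathcal S$ and under every tie-break. Hence $w \in \cap_{S \in \mathcal S} S = S^*$, which is exactly the exclusion property.

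Minimality and uniqueness then follow quickly and require no further machinery. By construction $S^* \subseteq T$ for every $T \in \mathcal S$, so no proper subset of $S^*$ can itself be an exclusion zone (any such subset would have to contain $S^*$, a contradiction); thus $S^*$ is minimal. For uniqueness, any minimal exclusion zone $R$ satisfies $S^* \subseteq R$ since $R \in \mathcal S$, and minimality of $R$ (together with the fact that $S^*$ is an exclusion zone) forces $R = S^*$.

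The one genuine gap, and the step I expect to be the main obstacle, is establishing that $S^*$ is nonempty, which the definition of an exclusion zone demands. This is precisely the subtlety that the intersection of a descending chain of nonempty sets can be empty. I would resolve it by exhibiting a universal witness, i.e.\ a position forced to lie in every exclusion zone. When a Condorcet winner $x$ exists, the two-candidate election $C = \{s, x\}$ for any $s \in S$ is won by $x$ (with two candidates IRV reduces to the pairwise majority), so $x \in S$ for every exclusion zone $S$, hence $x \in S^*$. In the general case I would instead use the winner $w$ of the election whose candidate set is all of $M$: for every exclusion zone $S$ we have $C \cap S = S \neq \emptyset$, so $w \in S$, and therefore $w \in S^*$. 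For finite graphs this is immediate, while for a continuous compact $M$ one must pass to a limit of winners over a sequence of finite candidate sets exhausting $M$ and use compactness (or closedness of the exclusion zones) to extract a limit point lying in $\cap_{S \in \mathcal S} S$. Making this limiting argument precise is where the real care is required.
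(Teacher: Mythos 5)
Your treatment of the exclusion property, minimality, and uniqueness is essentially identical to the paper's proof: the paper argues by contradiction that a winner outside $S^*$ would violate some $S \in \mathcal S$, which is just the contrapositive of your direct argument, and minimality/uniqueness follow in both write-ups from the trivial fact that $S^*$ is contained in every member of $\mathcal S$. (Your appeal to \Cref{prop:nested} is inessential; neither proof actually uses the chain structure.)

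The place where you depart from the paper is the nonemptiness of $S^*$, which you rightly note is demanded by the definition of an exclusion zone and which the paper's proof does not address at all; an infinite descending chain of nonempty exclusion zones could in principle have empty intersection, in which case no minimal exclusion zone exists and the proposition would fail. However, your proposed repairs do not close this gap. First, taking the candidate set to be all of $M$ is not a legal configuration when $M$ is infinite: candidate configurations are finite sets, and IRV is not even defined otherwise (in the graph setting, where $C = V$ is legal, your witness argument does work, and there nonemptiness is automatic anyway since a nested family of nonempty finite sets has a least element). Second, the limiting argument for continuous $M$ fails on both counts you would need: exclusion zones need not be closed---the paper's own characterization for Condorcet rules with symmetric 1-Euclidean voters includes the open intervals $(c, 1-c)$---and a small exclusion zone such as $\{1/2\}$ can be disjoint from every finite set $C_n$ in your exhausting sequence, so knowing that each winner $w_n$ lies in the zones met by $C_n$ gives no control over precisely the zones that matter. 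Third, your Condorcet-winner witness uses the assumption that two-candidate elections are decided by pairwise majority; this hypothesis appears in \Cref{prop:exclusion-facts} but not in this proposition, which is stated for arbitrary $r$. So the portion of your proof that matches the paper is correct, while the additional issue you identify is genuine but remains unresolved in your write-up as well; an honest fix would either restrict to settings where nonemptiness is automatic (e.g., finite $M$) or add nonemptiness of $\cap_{S \in \mathcal S} S$ as an explicit hypothesis.
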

\begin{proof}
First, we'll show that $S^*$ is an exclusion zone. Consider any candidate configuration including at least one from $S^*$. Suppose, for the purposes of contradiction, that the winner $w$ does not belong to $S^*$. Since $w \notin S^*$, there exists some $S \in \mathcal S$ such that $w\notin S$. But this configuration includes a candidate from $S$ and the winner is not in $S$, contradicting that $S$ is an exclusion zone. Thus, for any candidate configuration including a candidate from $S^*$, the winner belongs to $S^*$---hence it's an exclusion zone. Next, $S^*$ is clearly minimal, as it is the intersection of all exclusion zones: there cannot be an exclusion zone $T$ with $T \subset S^*$.
\end{proof}

Next, we say that exclusion zone $S = M$ is the \emph{trivial} exclusion zone, which always satisfies the definition. If $S$ is a proper subset of $M$, we call $S$ \emph{nontrivial}. We thus have that the set of exclusion zones forms a nested chain from the trivial exclusion zone $M$ down to the minimal exclusion zone $S^*$. In some cases, $S^* = M$, in which case there is no nontrivial exclusion zone. As a warmup, we now provide some examples of voting systems and their exclusion zones.

\begin{proposition}\label{prop:condorcet-median}
  Let $(M, d)$ be a bounded interval in one dimension with the Euclidean metric. For any Condorcet voting rule $r$ and any voter distribution $V$ over $M$, $\{\text{median}(V)\}$ is the minimal exclusion zone of $(M, d, V, r)$.  
\end{proposition}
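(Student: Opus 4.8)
The plan is to establish the result in two stages: first showing that the singleton $\{\text{median}(V)\}$ is an exclusion zone, and then arguing it is the \emph{minimal} one. Write $m = \text{median}(V)$. The crux of the first stage is the observation that $m$ is a Condorcet winner of the election setting in the sense defined above: for every alternative position $y \in M$, the position $m$ is preferred to $y$ by at least half of the voters. Once this is in hand, the exclusion-zone property is immediate. Any candidate configuration $C$ with $m \in C$ is an election in which $m$ is the Condorcet winner, so any Condorcet rule $r$ must elect $m$; hence the winner lies in $\{m\}$ whenever $\{m\} \cap C \neq \emptyset$, which is exactly the definition of an exclusion zone.

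To verify that $m$ is a Condorcet winner, I would use the standard median-voter midpoint argument adapted to the continuous setting. Fix any $y \neq m$ and assume $y > m$ (the case $y < m$ is symmetric). A voter at position $x$ prefers $m$ to $y$ exactly when $x$ is closer to $m$, i.e., when $x < (m+y)/2$. Since the midpoint $(m+y)/2$ lies strictly above $m$ and $m$ is the median, the mass of voters below the midpoint is at least the mass below $m$, which is at least $1/2$. Thus at least half of the voters prefer $m$ to $y$, establishing the Condorcet-winner property. If $V$ has full support the inequality is strict, so $m$ is in fact the \emph{unique} Condorcet winner; this is what makes the median a single point rather than an interval.

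For minimality, I would invoke \Cref{prop:nested} together with the characterization of the unique minimal exclusion zone as the intersection $S^\ast$ of all exclusion zones. Since $\{m\}$ is a nonempty exclusion zone and all exclusion zones are nested, any exclusion zone $T$ satisfies $T \subseteq \{m\}$ or $\{m\} \subseteq T$; the former forces $T = \{m\}$ because $T$ is nonempty, so in either case $\{m\} \subseteq T$. Hence $\{m\}$ is contained in every exclusion zone, and since it is itself an exclusion zone it equals the intersection $S^\ast$, the minimal exclusion zone.

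The main obstacle is the precise handling of the Condorcet-winner definition at the boundary between ``at least half'' and ``strictly more than half,'' together with the associated question of whether the median is a single point. These subtleties are benign here---the paper's weak ``at least half'' definition is exactly what the midpoint argument delivers, and a Condorcet rule is required to elect any position meeting it---but I would be careful to state the distributional assumption (e.g., a strictly increasing CDF) under which $\text{median}(V)$ is well-defined as a unique point, since the statement's singleton notation presumes it.
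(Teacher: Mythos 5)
Your proposal is correct and follows essentially the same route as the paper: the paper simply cites Black's Median Voter Theorem where you prove the midpoint argument inline, and then, as you note, minimality of a singleton exclusion zone is immediate (the paper dispenses with it in one line, since no nonempty proper subset exists, making your appeal to \Cref{prop:nested} and the intersection characterization unnecessary but harmless). Your closing caveat about uniqueness of the median is a real subtlety that the paper glosses over, and the statement's notation $\{\text{median}(V)\}$ indeed presumes it; note that a unique median already forces the strict inequality you want, so full support is not needed.
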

\begin{proof}
  By the Median Voter Theorem~\cite{black1948rationale}, any Condorcet rule will always elect the candidate closest to $\text{median}(V)$ with 1-Euclidean preferences. Thus, if a candidate configuration contains $\text{median}(V)$, it will win, satisfying the definition of an exclusion zone.  Any singleton exclusion zone is minimal.
\end{proof}
In this Condorcet setting, we can actually characterize the entire space of exclusion zones, although the statements are simpler if we restrict $M$ to $[0, 1]$ and restrict $V$ to be a symmetric distribution about $1/2$. To do this, we first develop some useful facts about exclusion zones, which will also prove useful for our main results (see Section~\ref{sec:irv_higher_dim}).

\begin{proposition}\label{prop:exclusion-facts}
  Given a election setting $(M, d, V, r)$ such that $r$ picks the majority winner in pairwise contests,
  \begin{enumerate}
    \item[\it(a)] Any Condorcet winner is in the minimal exclusion zone.
  \item[\it(b)]  For any exclusion zone $S$, if there is some candidate configuration including $x\in S$ where $y$ can win, then $y \in S$. 
  \item[\it(c)]  If the minimal exclusion zone contains a Condorcet loser, then the minimal exclusion zone is trivial.
  \end{enumerate}
\end{proposition}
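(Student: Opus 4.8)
The plan is to prove all three parts using only two-candidate elections, exploiting the assumption that $r$ picks the majority winner in pairwise contests together with the ``regardless of how ties are broken'' clause in the definition of an exclusion zone. Since part (b) is essentially a restatement of that definition, I would dispatch it first and then reuse it as a lemma. For part (b), suppose $S$ is an exclusion zone and $C$ is a configuration containing some $x \in S$, so that $C \cap S \ne \emptyset$. By definition the winner must lie in $S$ for every tie-breaking rule; hence if $y$ can win in $C$ under some tie-break, then $y$ is a possible winner, and every possible winner lies in $S$, so $y \in S$.

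For part (a), let $x$ be a Condorcet winner and let $S$ be an arbitrary exclusion zone; showing $x \in S$ for all $S$ gives $x \in S^* = \cap_{S \in \mathcal{S}} S$. Suppose toward contradiction that $x \notin S$, and pick any $s \in S$ (nonempty). In the two-candidate election $C = \{x, s\}$, the Condorcet property guarantees at least half the voters prefer $x$ to $s$, so under the pairwise-majority rule $x$ either wins outright or is tied; in either case $x$ can be made the winner by an appropriate tie-break. But then $s \in S$ while the winner $x \notin S$, contradicting that $S$ is an exclusion zone. Hence $x \in S$.

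For part (c), suppose the minimal exclusion zone $S^*$ contains a Condorcet loser $y$; to show $S^* = M$, take any $z \in M$ with $z \ne y$ and consider $C = \{y, z\}$, which contains $y \in S^*$. Since $y$ is a Condorcet loser, at least half the voters prefer $z$ to $y$, so $z$ can win under the pairwise-majority rule (breaking ties toward $z$ if needed). Applying part (b) with the exclusion zone $S^*$ yields $z \in S^*$, and since $z$ was arbitrary we conclude $S^* = M$, i.e., the minimal exclusion zone is trivial.

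None of the three parts presents a deep obstacle; the only point requiring care is the consistent treatment of ties, since a Condorcet winner or the opponent of a Condorcet loser is only guaranteed a \emph{weak} majority. In each case the decisive move is to invoke the tie-breaking clause to convert a tie (or a strict win) into a forced membership statement, so I would make sure the direction of the tie-break is spelled out explicitly in every two-candidate comparison.
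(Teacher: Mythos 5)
Your proof is correct and follows essentially the same route as the paper's: all three parts reduce to two-candidate pairwise contests, with ties resolved via the ``regardless of how ties are broken'' clause, and part (c) obtained as a consequence of part (b). Your version is merely more explicit about the tie-breaking direction and about deducing (a) by intersecting over all exclusion zones rather than arguing directly with the minimal one, but the underlying argument is identical.
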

\begin{proof}
 \begin{enumerate}
   \item[(a)]   If not, then consider a pairwise contest between a Condorcet winner and a candidate in the minimal exclusion zone. The Condorcet winner can beat them---contradicting the definition of an exclusion zone.
   \item[(b)]  If not, then we would have found a configuration where a candidate not in $S$ can beat a candidate in $S$, contradicting that $S$ is an exclusion zone.
   \item[(c)]   A Condorcet loser can lose against any candidate $y$ in their pairwise contest, so this follows from (b). \qedhere
 \end{enumerate}
\end{proof}

Using \Cref{prop:exclusion-facts}(b), we can now characterize the space of exclusion zones for Condorcet methods with symmetric 1-Euclidean voters.

\begin{proposition}
  Let $M = [0, 1]$, $d$ be the Euclidean metric, and $V$ be a distribution over $[0, 1]$ symmetric about $1/2$. For any Condorcet voting rule $r$, the set of all exclusion zones of $(M, d, V, r)$ is
  \begin{align*}
    \mathcal S = \{\{1/2\}\} \cup \bigcup_{ c \in [0, 1/2)} \{[c, 1-c], (c, 1-c)\}
  \end{align*}\\
\end{proposition}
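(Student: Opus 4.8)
The plan is to prove set equality by showing each of the two claimed families of sets are exclusion zones, and then showing that no other set can be an exclusion zone. I would structure the argument around the key fact that under any Condorcet rule with symmetric 1-Euclidean voters, the median $1/2$ is a Condorcet winner (by the Median Voter Theorem), and more strongly, in any pairwise contest the candidate closer to $1/2$ wins. This ``closer-to-median wins pairwise'' principle, combined with \Cref{prop:exclusion-facts}(b), will drive the entire characterization.

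First I would verify the forward inclusion, that every set listed in $\mathcal S$ is indeed an exclusion zone. For the singleton $\{1/2\}$ this is immediate from \Cref{prop:condorcet-median}. For a symmetric interval $[c, 1-c]$ with $c \in [0, 1/2)$, I would argue that if some candidate $x \in [c,1-c]$ is on the ballot, the winner must also lie in $[c,1-c]$: since a Condorcet rule elects the candidate closest to $1/2$, and $x$ is within distance $1/2 - c$ of the median, the winner is at least as close to $1/2$ as $x$, hence also within $[c,1-c]$. The open intervals $(c,1-c)$ require the same reasoning, noting that the strict symmetry ensures the candidate strictly closest to $1/2$ still lies strictly inside. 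The main subtlety here is tie-breaking at equal distances from the median (e.g. candidates symmetric about $1/2$): I would use the symmetry of $V$ to argue that whichever of two equidistant candidates wins, both lie in the interval anyway, so the exclusion-zone property is preserved regardless of ties.

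Next, for the reverse inclusion, I would take an arbitrary exclusion zone $S$ and show it must be one of the listed sets. The key tool is \Cref{prop:exclusion-facts}(b): if $x \in S$ and some $y$ can beat $x$ in some configuration, then $y \in S$. Applied pairwise, this says that if $x \in S$, then every $y$ strictly closer to $1/2$ than $x$ must also be in $S$ (since such a $y$ beats $x$ head-to-head), and moreover the reflection $1-x$ (equidistant from $1/2$) must be in $S$ whenever tie-breaking can favor it. From this I would deduce that $S$ is symmetric about $1/2$ and ``downward closed'' toward the median, forcing $S$ to be an interval of the form $[c,1-c]$ or $(c,1-c)$ for some $c$, or the singleton $\{1/2\}$. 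I would also invoke \Cref{prop:exclusion-facts}(a) to confirm $1/2 \in S$ always, and rule out $c = 1/2$ for the interval cases (which would collapse to the singleton, already listed separately).

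The hard part will be handling the boundary/tie-breaking distinctions that separate the open from the closed intervals, and pinning down exactly which endpoint behavior is forced. Specifically, I expect the delicate step is arguing that an exclusion zone containing some point $x$ with $|x - 1/2| = r$ must contain \emph{either} the whole closed shell at radius $r$ (both $1/2 \pm r$) \emph{or} determine cleanly whether the set is open or closed there — since the definition requires the exclusion property to hold ``regardless of how ties are broken,'' a zone must be robust to adversarial tie-breaking among equidistant candidates. I would carefully check that when two candidates are equidistant from the median, the symmetric distribution makes the pairwise contest an exact tie, so both outcomes are possible; this forces $S$ to contain both points $1/2 - r$ and $1/2 + r$ together, explaining why only symmetric intervals appear and why both the open and closed versions (differing only in whether the extreme radius is attained) are valid exclusion zones. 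Completing this case analysis to show no asymmetric or non-interval set survives is the crux, after which the two inclusions combine to give the stated equality for $\mathcal S$.
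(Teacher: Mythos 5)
Your proposal is correct and follows essentially the same route as the paper: the forward inclusion via \Cref{prop:condorcet-median} (the candidate closest to $1/2$ wins, so symmetric intervals trap the winner), and the reverse inclusion via \Cref{prop:exclusion-facts}(b) applied to pairwise contests, with the exact tie between candidates equidistant from $1/2$ ruling out half-open intervals. The paper's proof is just a terser version of the same argument, so there is nothing to add.
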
 
\begin{proof}
  We already saw in \Cref{prop:condorcet-median} that $\{1/2\}$ is the minimal exclusion zone of this election setting and that the candidate closest to $1/2$ will win. Thus, for any symmetric interval around $1/2$ (open or closed), if that interval contains a candidate, then that interval contains the winner, since the candidate closest to $1/2$ is in the interval. Thus all of the sets in $\mathcal S$ are exclusion zones, so it suffices to show no other subset of $[0, 1]$ is an exclusion zone.
  
  Consider any exclusion zone $S$, any $x\in S$, and any $y \in [0, 1]$ that is closer to $1/2$ than $x$. The winner of the election $\{x, y\}$ is $y$, so by \Cref{prop:exclusion-facts}(b), $y \in S$. This implies that only symmetric intervals around $1/2$ can be exclusion zones---otherwise we would have a counterexample configuration. Even half-open intervals $[c, 1-c)$ cannot be exclusion zones, since $1-c$ can win the election $\{c, 1-c\}$ after a tie-breaker.  
\end{proof}

As another example, consider plurality voting with uniformly distributed voters over $[0, 1]$. We already know that in this setting, no proper subset of $(0, 1)$ is an exclusion zone~\cite{tomlinson2024moderating}. We complete this characterization by showing that $(0, 1)$ itself is an exclusion zone, as long as no duplicate candidate positions are allowed (in that case, the minimal exclusion zone of plurality is trivial; throughout the paper, we assume the candidate set has no duplicates, although for IRV this does not affect exclusion zones).

\begin{proposition}
  The minimal exclusion zone of plurality with uniform voters over $[0, 1]$ is $(0, 1)$. 
\end{proposition}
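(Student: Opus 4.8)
The plan is to lean on the prior-work fact, already recalled in the surrounding text, that no proper subset of $(0,1)$ is an exclusion zone of plurality with uniform voters on $[0,1]$. Granting this, it suffices to prove that $(0,1)$ \emph{itself} is an exclusion zone. The minimal exclusion zone is the intersection of all exclusion zones and is therefore contained in $(0,1)$; since no proper subset of $(0,1)$ qualifies, the minimal exclusion zone must equal $(0,1)$ exactly. So the whole task reduces to showing: whenever a candidate configuration $C$ contains a point of $(0,1)$, every possible winner (under any tie-break) also lies in $(0,1)$, i.e.\ neither endpoint $0$ nor $1$ can win.

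The first concrete step is to write the plurality shares explicitly via the nearest-candidate (Voronoi) partition of $[0,1]$. For distinct positions $x_1 < x_2 < \cdots < x_k$, each candidate captures the interval of voters cut at the midpoints with its neighbors, giving the leftmost candidate share $(x_1+x_2)/2$, each interior candidate $x_i$ share $(x_{i+1}-x_{i-1})/2$, and the rightmost candidate share $1-(x_{k-1}+x_k)/2$.

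The key step is to show that an endpoint candidate never attains the maximum share once an interior candidate is present. I would argue that if $x_1 = 0$, then its right neighbor $x_2$ has strictly larger share: when $k=2$ and $x_2 < 1$, the rightmost candidate $x_2$ has share $1 - x_2/2 > 1/2 > x_2/2$; when $k \ge 3$, the interior candidate $x_2$ has share $x_3/2$, which exceeds the share $x_2/2$ of the candidate at $0$ because $x_3 > x_2$. In both cases $x_2 \in (0,1)$, so an interior candidate strictly beats the candidate at $0$, which is thus not even tied for first. A symmetric argument rules out a winner at $1$. The only remaining configurations are degenerate: $C = \{0,1\}$ contains no interior candidate (so the hypothesis is vacuous), while a single-candidate configuration satisfying the hypothesis consists of one point of $(0,1)$, which wins trivially. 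Combining these cases shows every potential winner lies in $(0,1)$.

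I do not expect a genuine obstacle here; the care required is purely in the bookkeeping. I must secure \emph{strict} inequalities throughout, since the definition of exclusion zone quantifies over all tie-breaks, so an endpoint candidate merely tied for first would be disqualifying. I must also cleanly separate the boundary cases $k=1$ and $C=\{0,1\}$, where the interior hypothesis is respectively automatic or absent. Finally, the step $x_3/2 > x_2/2$ relies essentially on distinct positions: with duplicates one could split a neighbor's Voronoi cell and let the endpoint candidate win, which is precisely why the statement assumes no duplicate positions and why the minimal exclusion zone degenerates to the trivial one otherwise.
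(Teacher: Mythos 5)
Your proposal is correct and follows essentially the same route as the paper: invoke the prior-work result that no proper subset of $(0,1)$ is an exclusion zone, then show $(0,1)$ itself is one by arguing that a candidate at an endpoint is strictly beaten by the adjacent interior candidate (your $x_2$ is exactly the paper's ``leftmost candidate in $(0,1)$''). The only difference is presentational---you carry out the Voronoi-share bookkeeping and degenerate cases explicitly where the paper argues more compactly via the even split of $[0,x]$ plus a nonzero share to the right.
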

\begin{proof}
  By \cite[Theorem 2]{tomlinson2024moderating}, no proper subset of $(0, 1)$ is an exclusion zone. Suppose we have a candidate configuration with at least one candidate in $(0, 1)$. It suffices to show that the winner cannot be a candidate at 0 or at 1. Suppose there is a candidate at 0 and consider the leftmost candidate $x \in (0, 1)$. This leftmost candidate evenly splits the vote share from voters in $[0, x]$ with the candidate at $0$, but also gets nonzero vote share to the right of $x$ (since there are no duplicate candidates, there is some distance between $x$ and the candidate to its right). Thus $x$ has larger vote share than the candidate at 0, and the candidate at 0 cannot be the plurality winner. A symmetric argument shows a candidate at 1 cannot win.
\end{proof}

In several instances, we will encounter election settings where the minimal exclusion zone is trivial (equivalently, there are no nontrivial exclusion zones), including our main result for hyperrectangles. Proving that a voting system has no nontrivial exclusion zones appears to require proving a broad statement of nonexistence. However, we show that the structure of exclusion zones explored above enables a constructive proof recipe that we term the Condorcet Chain Lemma. Specifically, the facts from \Cref{prop:exclusion-facts} allow us to chain together a sequence of candidate configurations, starting from a Condorcet winner and ending at a Condorcet loser, implying that there are no nontrivial exclusion zones. 
\begin{lemma}[Condorcet Chain Lemma]\label{lemma:chain}
   Given an election setting $(M, d, V, r)$ such that $r$ picks the majority winner in pairwise contests, if there are candidate configurations $C_1, \dots, C_n$ such that:
  \begin{enumerate}
    \item $C_1$ contains a Condorcet winner, but another candidate $w_1$ can win,
    \item each $C_{i+1}$ includes $w_i$, but some other candidate $w_{i+1}$ can win,
    \item $w_n$ is a Condorcet loser,
  \end{enumerate}
  then $(M, d, V, r)$ has no nontrivial exclusion zones.
\end{lemma}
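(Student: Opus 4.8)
The plan is to show that the existence of such a chain forces the minimal exclusion zone $S^*$ to be the trivial one, $S^* = M$. The strategy is to propagate membership in $S^*$ along the chain using the facts established in \Cref{prop:exclusion-facts}, starting from the guaranteed membership of a Condorcet winner and arriving at the forced inclusion of a Condorcet loser, which then collapses $S^*$ to all of $M$.

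First I would invoke \Cref{prop:exclusion-facts}(a): since $C_1$ contains a Condorcet winner, that Condorcet winner lies in $S^*$. By hypothesis (1), another candidate $w_1$ can win the election $C_1$, which contains a candidate (the Condorcet winner) in $S^*$; applying \Cref{prop:exclusion-facts}(b) with $S = S^*$ then yields $w_1 \in S^*$. This is the base case of the induction.

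Next I would run the inductive step along the chain. Suppose $w_i \in S^*$. By hypothesis (2), the configuration $C_{i+1}$ includes $w_i$, and some other candidate $w_{i+1}$ can win $C_{i+1}$. Since $C_{i+1}$ contains $w_i \in S^*$ and $w_{i+1}$ can win, \Cref{prop:exclusion-facts}(b) gives $w_{i+1} \in S^*$. Iterating this from $i = 1$ up to $i = n-1$ establishes $w_n \in S^*$. Finally, hypothesis (3) says $w_n$ is a Condorcet loser, so $S^*$ contains a Condorcet loser; by \Cref{prop:exclusion-facts}(c) the minimal exclusion zone is trivial, i.e.\ $S^* = M$, and therefore $(M, d, V, r)$ has no nontrivial exclusion zones.

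I do not anticipate a serious obstacle here, since the lemma is essentially a careful bookkeeping argument that strings together the three parts of \Cref{prop:exclusion-facts}. The one point requiring mild care is the precise reading of the phrase ``can win'': each hypothesis must be interpreted as asserting that under some valid tie-breaking rule the named candidate is a possible winner of that configuration, which is exactly the hypothesis form under which \Cref{prop:exclusion-facts}(b) was stated (``if there is some candidate configuration including $x \in S$ where $y$ can win, then $y \in S$''). Making sure the ``can win'' in the chain hypotheses matches the ``can win'' in part (b), and that each $C_{i+1}$ genuinely contains the previously-placed element $w_i$ of $S^*$, is the only place where the argument could go wrong if stated loosely.
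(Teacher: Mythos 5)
Your proof is correct and follows essentially the same route as the paper's: establish that a Condorcet winner lies in the minimal exclusion zone $S^*$ via \Cref{prop:exclusion-facts}(a), propagate membership along the chain by induction using \Cref{prop:exclusion-facts}(b), and conclude $S^* = M$ from \Cref{prop:exclusion-facts}(c) once the Condorcet loser $w_n$ is forced into $S^*$. Your explicit attention to the meaning of ``can win'' (possible winner under some tie-breaking) matches the paper's intended reading, so there is nothing to fix.
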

\begin{proof}
  By \Cref{prop:exclusion-facts}(a), the minimal exclusion zone $S^*$ contains all Condorcet winners. But $w_1$ defeats a Condorcet winner, so $w_1 \in S$ by \Cref{prop:exclusion-facts}(b). Repeated applications of \Cref{prop:exclusion-facts}(b) along the chain of configurations (i.e., induction) show that each $w_i \in S^*$, so $w_n \in S^*$. By \Cref{prop:exclusion-facts}(c), this means $S^* = M$, as $w_n$ is a Condorcet loser, so there are no nontrivial exclusion zones. 
\end{proof}

\section{IRV exclusion zones in higher dimensions}
\label{sec:irv_higher_dim}
Now that we have established some useful properties of exclusion zones, we turn to our first main results. While \citet{tomlinson2024moderating} characterized the exclusion zones of IRV in one-dimension, that work left a significant open question: does IRV have nontrivial exclusion zones with higher-dimensional preferences? The proof technique in one dimension fails to generalize, leaving the answer uncertain. Here, we resolve this open question: using the Condorcet Chain Lemma, we show that IRV has no nontrivial exclusion zones with uniformly distributed voters over hyperrectangles of dimension two or greater (with both $L_1$ and $L_2$ distance metrics). However, we also show that there exist higher-dimensional preference structures that induce nontrivial IRV exclusion zones. The example we construct generates an IRV exclusion zone by breaking the strong symmetry of hyperrectangles, which our proof suggests is the cause of their lack of IRV exclusion zones.  

To apply the Condorcet Chain Lemma to higher-dimensional Euclidean preferences, we start by finding Condorcet winners and losers in hyperrectangles. We say that voters have $L_p$ preferences, or are $L_p$ voters, in a $d$-dimensional space if they vote according to the $L_p$ metric, with distance function $d_p(x, y) =\left(\sum_{i = 1}^d |x_i - y_i|^p\right)^{1/p}$.  

\begin{lemma}\label{lemma:hyperrect-condorcet}
  With uniform $L_p$ voters over a $d$-dimensional hyperrectangle $[0, w_1]\times\dots \times [0, w_d]$, the center $c = (w_1/2, \dots, w_d/2)$ is a Condorcet winner and the corners are Condorcet losers for any $d \ge 1$ and $p \ge 1$. 
\end{lemma}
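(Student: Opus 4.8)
The plan is to exploit the single most important symmetry of the hyperrectangle: the point reflection through its center, $\sigma(v) = 2c - v$. This map is an $L_p$-isometry, sends the box $[0,w_1]\times\cdots\times[0,w_d]$ onto itself, and hence preserves the uniform measure; it is also an involution. Both claims will follow by combining $\sigma$ with the convexity of the $L_p$ norm, so that the argument is uniform in $p$ and never requires an explicit description of the (non-hyperplane, for $p \ne 2$) equidistant surfaces.

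For the center $c$, fix any $y \ne c$ and let $P = \{v : d_p(v,c) \le d_p(v,y)\}$ be the set of voters weakly preferring $c$. Since $\sigma$ is a measure-preserving involution, it suffices to show that $P \cup \sigma(P)$ is all of $M$ up to measure zero, for then $2\,\mu(P) \ge \mu(P) + \mu(\sigma(P)) \ge \mu(M)$, giving $\mu(P) \ge 1/2$. Writing $y' = 2c - y$ for the reflection of $y$ and using that $\sigma$ is an isometry fixing $c$, one checks $\sigma(v) \in P$ iff $d_p(v,c) \le d_p(v,y')$. Hence $v \in P \cup \sigma(P)$ exactly when $d_p(v,c) \le \max\{d_p(v,y), d_p(v,y')\}$. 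Because $c = \tfrac12(y + y')$, convexity of the norm gives $d_p(v,c) = \bigl\| \tfrac12[(v - y)+(v-y')]\bigr\|_p \le \tfrac12(d_p(v,y) + d_p(v,y')) \le \max\{d_p(v,y),d_p(v,y')\}$, so the covering holds for every $v$ and $c$ is a Condorcet winner.

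For a corner $q$, fix any $x \ne q$ and let $R = \{v : d_p(v,q) < d_p(v,x)\}$ be the voters strictly preferring $q$; it suffices to show $\mu(R) \le 1/2$, since then at least half of voters prefer $x$. Mirroring the previous argument, I will instead show that $R$ and its reflection are disjoint: by the same isometry identities, $v \in R \cap \sigma(R)$ would force both $d_p(v,q) < d_p(v,x)$ and $d_p(v,q') < d_p(v,x')$, where $q' = 2c-q$ is the opposite corner and $x' = 2c - x$. Raising to the $p$-th power and adding the two inequalities reduces the claim to a coordinatewise inequality $|v_i - q_i|^p + |v_i - q'_i|^p \ge |v_i - x_i|^p + |v_i - x'_i|^p$. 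The crux is the one-dimensional fact that, for fixed $a = v_i$, the function $t \mapsto |a - t|^p + |a - (w_i - t)|^p$ is convex (as $p \ge 1$) and symmetric about $t = w_i/2$, hence is maximized over $[0,w_i]$ at the endpoints $t \in \{0, w_i\}$---precisely the corner values, since $q_i \in \{0, w_i\}$. Summing over coordinates contradicts the strict inequalities, so $R \cap \sigma(R)$ is null and $\mu(R) \le 1/2$.

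The main obstacle is the generality over all $p \ge 1$: for $p \ne 2$ the equidistant sets are curved, so the clean half-space reasoning available for $L_2$ does not apply. Convexity of the $L_p$ norm (for the winner) and convexity-plus-symmetry of the single-coordinate function (for the loser) are exactly what replace it; verifying that the corner maximizes that one-variable convex function is the one step that genuinely uses \emph{corner} rather than merely \emph{centrally symmetric}, and is where I expect to spend the most care.
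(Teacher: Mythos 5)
Your approach is at bottom the same as the paper's: reflect through the center via $\sigma(v) = 2c - v$, use the triangle inequality for the center, and for the corner use exactly the one-dimensional convexity fact the paper uses (the paper phrases it as $f_i(t) = |v_i - t|^p + |v_i - (w_i - t)|^p$ being convex with $f_i(0) = f_i(w_i)$, hence maximized at the endpoints; your ``convex and symmetric about $w_i/2$'' is the same statement). The one substantive difference is in what you conclude, and there your write-up falls short of what the lemma has to deliver.

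The gap: both of your conclusions are stated in terms of weak preference---$\mu(P) \ge 1/2$ for the set $P$ of voters with $d_p(v,c) \le d_p(v,y)$, and $\mu(R) \le 1/2$ for the set $R$ strictly preferring the corner, which you gloss as ``at least half of voters prefer $x$.'' For $1 < p < \infty$ this is equivalent to what is needed, since indifference sets are null there. But the lemma includes $p = 1$, where $L_1$ bisectors can have positive area: in a rectangle, the voters equidistant from the center $c$ and an opponent $y = c + (\delta, \delta)$ include two corner regions of positive area. When the indifference set $I$ has positive measure, knowing $\mu(B) \le 1/2$ (where $B$ is the set strictly preferring $y$) does not imply that $c$ wins or ties the pairwise contest: with indifferent voters splitting their votes, that requires $\mu(A) \ge \mu(B)$ for $A$ the set strictly preferring $c$, and, e.g., $\mu(A) = 0.35$, $\mu(I) = 0.2$, $\mu(B) = 0.45$ is consistent with your bound yet has $c$ losing. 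The pairwise-contest property is exactly what the paper uses the lemma for (\Cref{prop:exclusion-facts}(a) and (c): a Condorcet winner ``can beat'' any opponent, a Condorcet loser ``can lose'' to any opponent), so for $p=1$ your stated conclusions are genuinely too weak. The fix is already inside your inequalities: if $v$ strictly prefers $y$ to $c$, your midpoint inequality gives $d_p(v,c) < d_p(v,y')$ strictly, hence $\sigma(B) \subseteq A$ and $\mu(A) \ge \mu(B)$; likewise, for $v \in \sigma(R)$ your coordinatewise inequality forces $d_p(v,x) < d_p(v,q)$ strictly, hence $\sigma(R) \subseteq B$. Restated as these containments---which is precisely the paper's formulation---your argument is complete for all $p \ge 1$.
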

See \Cref{app:proofs} for the proof and for all others that we have omitted or sketched. Given that the center of a hyperrectangle is a Condorcet winner and the corners are Condorcet losers, if we can find a sequence of configurations where the winner of each configuration loses in the next and linking the center and the corners, then that hyperrectangle has no nontrivial exclusion zones by the Condorcet Chain Lemma. This is impossible in one dimension, but we find such a sequence for any $d \ge 2$. As a warmup, we show what one such sequence looks like, for the special case of a square with uniform $L_2$ voters.

\begin{proposition}\label{prop:l2-square}
  The square with uniform $L_2$ voters has no nontrivial IRV exclusion zone.
\end{proposition}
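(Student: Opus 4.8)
The plan is to apply the Condorcet Chain Lemma (\Cref{lemma:chain}) directly, using the fact from \Cref{lemma:hyperrect-condorcet} that the center of the square is a Condorcet winner and each corner is a Condorcet loser. Since IRV picks the majority winner in any pairwise (two-candidate) contest, the hypothesis on $r$ is satisfied. It therefore suffices to exhibit a finite sequence of candidate configurations $C_1, \dots, C_n$ in the unit square with uniform $L_2$ voters such that (1) $C_1$ contains the center but some other candidate $w_1$ wins under IRV, (2) each subsequent $C_{i+1}$ contains the previous winner $w_i$ yet is won by a fresh candidate $w_{i+1}$, and (3) the final winner $w_n$ is a corner of the square.

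First I would construct $C_1$. The goal is to beat the Condorcet winner at the center, which is impossible in a two-candidate race but achievable in IRV with three or more candidates via vote-splitting. I would place the center candidate together with two (or more) candidates positioned so that the center is eliminated first: by surrounding the center with candidates that each capture a large nearest-voter region, the center's first-round plurality share can be driven below that of its rivals, triggering its early elimination and handing the win to some non-central $w_1$. This is the classic IRV failure of the Condorcet criterion, and in two dimensions there is ample geometric freedom to arrange the Voronoi-style nearest-candidate cells so that the center loses.

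Next I would build the intermediate links. Each step needs a configuration containing the current winner $w_i$ but won by a more extreme candidate $w_{i+1}$, pushing the surviving winner steadily outward from the center toward a corner. I would do this by repeatedly introducing candidates further from the center that split the vote against $w_i$: at each stage the cluster of candidates is shifted so that $w_i$ now plays the role the center played before—eliminated early due to its votes being divided—while a candidate nearer the target corner survives. Because the square is two-dimensional and $L_2$ regions vary continuously with candidate placement, I can tune positions to guarantee the desired elimination order and make $w_{i+1}$ progressively more peripheral, terminating once $w_n$ is an actual corner.

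The main obstacle is verifying the IRV elimination order in each configuration, since this requires computing the nearest-candidate regions under the $L_2$ metric (bounded by perpendicular bisectors, i.e. a Voronoi diagram) and the areas of these regions, together with how mass transfers when a candidate is eliminated. These area computations for polygonal cells clipped to the square are the routine-but-delicate heart of the argument. I expect the cleanest route is to choose symmetric, low-cardinality configurations (e.g.\ exploiting a diagonal or axis of symmetry of the square) so that the vote shares and transfers can be compared by inspection or by a short closed-form area calculation rather than a full Voronoi analysis, keeping the chain as short as possible—ideally just a handful of configurations linking center to corner.
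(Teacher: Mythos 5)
Your overall strategy is exactly the paper's: invoke \Cref{lemma:hyperrect-condorcet} to place the center in the minimal exclusion zone as a Condorcet winner and identify the corners as Condorcet losers, then build a chain of configurations satisfying \Cref{lemma:chain}. The gap is that you never actually exhibit the chain. The entire content of \Cref{prop:l2-square} is that such a sequence exists, and your argument for existence rests on the claim that two dimensions give ``ample geometric freedom'' to tune positions so that the current winner is eliminated and a more peripheral candidate wins, repeating until a corner is reached. That claim is not self-evidently true, and the identical reasoning fails in one dimension: there, too, vote-splitting can eliminate the Condorcet winner (the median) under IRV, and candidate positions vary continuously, yet no chain from the median to an endpoint exists, because $[1/6,5/6]$ \emph{is} an exclusion zone. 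So ``vote-splitting plus continuity plus tuning'' cannot by itself prove the proposition; the outward march could in principle get stuck, and ruling that out requires exhibiting concrete configurations and verifying their elimination orders by area computations.

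The paper fills this gap with four explicit configurations, each checked in closed form (\Cref{lemma:square-step1,lemma:square-step2,lemma:square-step3,lemma:square-step4}): (i) the center plus four axis-aligned candidates at $(x,1/2)$, $(1/2,1-x)$, $(1-x,1/2)$, $(1/2,x)$, where the center's vote share $(1/2-x)^2$ drops below $1/5$ for $1/2 - 1/\sqrt 5 < x < 1/2$, so the bottom candidate can win the ensuing tiebreaks; (ii) that winner $(1/2,x)$ against the edge midpoints $(0,1/2)$ and $(1,1/2)$, where it is eliminated first for $x \le 0.16$; (iii) $(0,1/2)$ against $(0,x)$ and $(0,1-x)$ on the left edge, eliminated for $x > 1/6$; and (iv) $(0,1-x)$ against two corners, eliminated for $x \le 0.19$, handing the win to a corner, a Condorcet loser. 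Your closing instinct---use symmetric, low-cardinality configurations so vote shares reduce to short closed-form area calculations---is precisely how the paper keeps each step tractable, but until you write down such configurations and verify the elimination orders, you have restated the proposition rather than proved it.
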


As a sketch of proof, in \Cref{fig:square-sequence} we provide a sequence of configurations that satisfies the requirements of the Condorcet Chain Lemma, where the red candidate loses at each step, but was a possible winner in the previous step (denoted in blue). A full proof is given in \Cref{app:proofs}. 

\begin{figure}[t]
\centering
  \begin{tikzpicture}[x=3mm,y=3mm]

  \draw[draw=black, thick] (0, 0) rectangle ++(5, 5);
    
  \draw[dashed] (1.5, 3.5) -- (1.5, 1.5) -- (3.5, 1.5) -- (3.5, 3.5) -- (1.5, 3.5);
   \draw[dashed] (1.5, 1.5) -- (0, 0);
   \draw[dashed] (3.5, 3.5) -- (5, 5);
   \draw[dashed] (1.5, 3.5) -- (0, 5);
   \draw[dashed] (3.5, 1.5) -- (5, 0);

  \node[fill=red!70, circle,inner sep=2pt]  at (2.5, 2.5) {};
  \node[fill=blue!70, circle,inner sep=2pt] (M) at (2.5, 0.5) {};
    \node[fill, circle,inner sep=2pt]  at (2.5, 4.5) {};
    \node[fill, circle,inner sep=2pt]  at (0.5, 2.5) {};
    \node[fill, circle,inner sep=2pt]  at (4.5, 2.5) {};

\end{tikzpicture}
  \quad\raisebox{6.7mm}{$\rightarrow$}\quad
   \begin{tikzpicture}[x=3mm,y=3mm]
  \draw[draw=black, thick] (0, 0) rectangle ++(5, 5);

\draw[dashed] (0.05, 0) -- (2.5, 3.0625);

\draw[dashed] (4.95, 0) -- (2.5, 3.0625);
  \draw[dashed] (2.5, 5) -- (2.5, 3.0625);
  
  \node[fill=blue!70, circle,inner sep=2pt] (L) at (0, 2.5) {};
  \node[fill=red!70, circle,inner sep=2pt] (M) at (2.5, 0.5) {};
  \node[fill, circle,inner sep=2pt] (R) at (5, 2.5) {};

\end{tikzpicture}
  \quad\raisebox{6.7mm}{$\rightarrow$}\quad
        \begin{tikzpicture}[x=3mm,y=3mm]
  \draw[draw=black, thick] (0, 0) rectangle ++(5, 5);

  \draw[dashed] (0, 3.25) -- (5, 3.25);
  \draw[dashed] (0, 1.75) -- (5, 1.75);

  \node[fill=blue!70, circle,inner sep=2pt] (L) at (0, 4) {};
  \node[fill=red!70, circle,inner sep=2pt] (M) at (0, 2.5) {};
    \node[fill, circle,inner sep=2pt] (R) at (0, 1) {};

\end{tikzpicture} \quad \raisebox{6.7mm}{$\rightarrow$}\quad
 \begin{tikzpicture}[x=3mm,y=3mm]
  \draw[draw=black, thick] (0, 0) rectangle ++(5, 5);

  \draw[dashed] (0, 2) -- (3, 2);
  \draw[dashed] (5, 0) -- (3, 2);
  \draw[dashed] (2.4, 5) -- (3, 2);
  
  \node[fill=blue!70, circle,inner sep=2pt] (B) at (0, 0) {};
  \node[fill=red!70, circle,inner sep=2pt] (M) at (0, 4) {};
  \node[fill, circle,inner sep=2pt] (T) at (5, 5) {};
\end{tikzpicture}
\caption{A visual sketch of our proof of \Cref{prop:l2-square}, showing a sequence of elections satisfying the Condorcet Chain Lemma, proving that it has no nontrivial IRV exclusion zones with uniform $L_2$ voters. In each configuration, the red candidate is eliminated first and the blue candidate is a possible winner of the resulting tiebreak. Critically, the red candidate was a possible winner of the previous configuration. The first configuration includes the center, the Condorcet winner, and the winner of the last configuration is a corner, a Condorcet loser. The exact positions and vote shares are given in the proof in \Cref{app:proofs}.}\label{fig:square-sequence}
\end{figure}
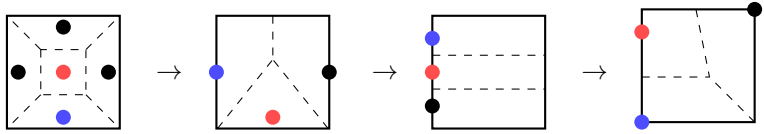

Moving on from the two-dimensional square, we now provide a much more general result, although the sequence of configurations requires many more steps. We begin by showing that it suffices to construct such a sequence for two-dimensional rectangles---we can essentially decompose a hyperrectangle into a sequence of rectangles, one for each dimension, and make progress towards a corner one dimension at a time. 

\begin{lemma}\label{lemma:rect-to-hyper}
  Let $p \ge 1$. If there is a sequence of configurations satisfying \Cref{lemma:chain} for any rectangle with uniform $L_p$ voters, then every $d$-dimensional hyperrectangle for $d \ge 2$ has no nontrivial exclusion zones with uniform $L_p$ voters. 
\end{lemma}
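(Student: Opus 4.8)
The plan is to reduce the $d$-dimensional problem to the two-dimensional case by exhibiting, for each hyperrectangle, a Condorcet chain that is built one coordinate at a time. By \Cref{lemma:hyperrect-condorcet}, the center $c = (w_1/2, \dots, w_d/2)$ is a Condorcet winner and every corner is a Condorcet loser, so it suffices to link $c$ to a corner via a sequence of configurations in which each step's winner loses the next, as required by \Cref{lemma:chain}. The key geometric idea is that a hyperrectangle factors as a product of intervals, and the $L_p$ distance decomposes as a sum over coordinates; this means that if we fix all but two coordinates at fixed values, the restriction of the uniform voter distribution and the $L_p$ preference structure to those two free coordinates behaves like a two-dimensional rectangle voting problem (up to an additive constant in the distances that cancels in pairwise comparisons). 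The strategy is therefore to move from $c$ toward a chosen corner, say $(0, \dots, 0)$, by driving the first coordinate from $w_1/2$ down to $0$ using a planar chain in the $(x_1, x_2)$-plane, then the second coordinate, and so on.

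Concretely, I would first state precisely the two-dimensional hypothesis we are assuming: a sequence $C_1, \dots, C_m$ of configurations in a rectangle satisfying the three conditions of \Cref{lemma:chain}, so its chain runs from the rectangle's center to one of its corners. The central lemma to prove is an \emph{embedding} or \emph{lifting} claim: given such a planar chain living in coordinates $(x_1, x_2)$ with the remaining coordinates held fixed at some common value, the same chain, lifted into the $d$-dimensional hyperrectangle by appending the fixed coordinates to every candidate, still satisfies the chain conditions. The justification is that for any two candidates sharing the last $d-2$ coordinates, the $L_p$-comparison between them for any voter depends only on the first two coordinates: writing $d_p(v, x)^p = \sum_i |v_i - x_i|^p$, the terms from coordinates $3, \dots, d$ are identical for the two candidates and so do not affect which is preferred. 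Hence each voter's ranking of the lifted candidates is governed entirely by the planar sub-profile, and IRV (being determined by voter rankings) produces exactly the same elimination order and winner. One subtlety is that the ``voters'' relevant to a given fixed slice are really the full $d$-dimensional uniform population, but since the comparison is coordinate-separable, the vote shares match those of a two-dimensional uniform population on the rectangle; I would make this Fubini-style marginalization explicit, noting that integrating out the free coordinates of the irrelevant dimensions preserves the relative plurality shares at every round.

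With the lifting lemma in hand, I would concatenate $d$ planar chains. Start at $c$; apply a planar chain in the $(x_1,x_2)$-plane (coordinates $3,\dots,d$ fixed at $w_j/2$) that walks the center of that $2$-face to a corner of it, thereby sending the first coordinate to $0$ and the second to $0$ as well if the planar chain ends at the $(0,0)$ corner of its rectangle. More carefully, since the two-dimensional chain ends at a corner of its rectangle, after the first block both $x_1$ and $x_2$ have reached extreme values; I would then run a second planar chain in, say, the $(x_2, x_3)$-plane (or more cleanly the $(x_3,x_4)$-plane) to continue zeroing out coordinates, repeating until all $d$ coordinates have been driven to $0$ and we have arrived at the corner $(0,\dots,0)$, a Condorcet loser. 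At each junction between consecutive blocks the winner carried over is exactly the endpoint of the previous planar chain, which is the starting configuration member of the next block, so condition~(2) of \Cref{lemma:chain} is preserved across the seams. The main obstacle I anticipate is precisely this bookkeeping at the seams: ensuring that the endpoint corner of one planar chain coincides (in the already-fixed coordinates) with an admissible starting point of the next, so that we genuinely reduce one new coordinate per block without disturbing the coordinates already set to $0$. Handling this cleanly likely requires choosing the planar chains so that each operates on a fresh pair of coordinates (or on a pair where one coordinate is already extremal and stays extremal), and verifying that a two-dimensional chain exists from an \emph{edge midpoint} rather than only from the full center may require a mild strengthening or re-indexing of the two-dimensional hypothesis. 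Once the seam-compatibility is established, the conclusion follows immediately from \Cref{lemma:chain}: the concatenated chain links the Condorcet winner $c$ to the Condorcet-loser corner, so the hyperrectangle has no nontrivial exclusion zone.
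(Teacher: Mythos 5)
Your lifting step is correct and matches the paper's: for candidates sharing all but two coordinates, the $L_p$ comparison is coordinate-separable, so each voter's ranking is determined by the two free coordinates, and marginalizing the uniform distribution over the fixed coordinates shows the election plays out exactly as in the two-dimensional projected rectangle. The genuine gap is the seam issue you flag at the end but do not close. With disjoint coordinate pairs $(x_1,x_2),(x_3,x_4),\dots$ the seams do match when $d$ is even (the all-zeros corner ending one planar chain is literally the center of the next slice), but when $d$ is odd the last coordinate can only be zeroed in a plane whose slice-center has some coordinate equal to $w_j/2$ that your running winner has already driven to $0$; the winner then sits at an \emph{edge midpoint} of the final rectangle, and the lemma's hypothesis only supplies chains whose first configuration contains the rectangle's Condorcet winner, i.e.\ its center. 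Your proposed remedy---``a mild strengthening of the two-dimensional hypothesis'' to chains starting from edge midpoints---is not available to you: the lemma must be proved from the stated hypothesis, and such a strengthening would amount to re-proving the planar constructions that the hypothesis is meant to encapsulate.

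The fix, which is what the paper does, stays entirely within \Cref{prop:exclusion-facts} and requires no new planar constructions. Each planar chain ends at a Condorcet loser of its slice rectangle (a corner). Hence for \emph{any} other point $q$ of that slice, the two-candidate configuration consisting of the chain's final winner and $q$ can be won by $q$ (at worst after a tie-break); lifted to the hyperrectangle, this contest plays out identically, so by \Cref{prop:exclusion-facts}(b) every point of the slice---not just the chain winners---lies in the minimal exclusion zone of $R$. In particular the center of the \emph{next} slice does, so a single pairwise configuration can be inserted as a bridge, and the next planar chain is seeded at its own center as required. The paper runs this scheme with overlapping pairs $(x_i,x_{i+1})$, zeroing one coordinate per slice and bridging at every seam through the point $(0,\dots,0,w_{i+1}/2,\dots,w_d/2)$; with your disjoint pairs the same bridge is needed only to handle odd $d$. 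Either way, the concatenated configurations begin at the center of $R$ (a Condorcet winner) and end at the origin (a Condorcet loser, by \Cref{lemma:hyperrect-condorcet}), so \Cref{lemma:chain} yields the conclusion.
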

\begin{proof}[Proof sketch]
Consider an election in a hyperrectangle $R$ where all candidates lie on a plane $P$ parallel to a pair of axes and containing the center. We show that this election plays out exactly like an election in the two-dimensional projection of the hyperrectangle onto $P$. Thus, we can use the chain of configurations in the rectangle to show that every point in $P \cap R$ is in the minimal exclusion zone of $R$. We can then repeat the procedure with a different plane $P'$, starting from a point we know is in the minimal exclusion zone of $R$, but which has one coordinate equal to 0. Repeating the rectangle construction $d$ times then shows the origin is in the minimal exclusion zone of $R$, which we know is a Condorcet loser by \Cref{lemma:hyperrect-condorcet}.
\end{proof}

We complete the proof by providing such sequences for arbitrary rectangles with uniform $L_1$ and $L_2$ voters. This yields our main result for higher dimensional preference spaces.

\begin{theorem}\label{thm:hyperrect-trivial-minimal}
Every $d$-dimensional hyperrectangle for $d \ge 2$ has no nontrivial exclusion zones with uniform $L_1$ or $L_2$ voters. 
\end{theorem}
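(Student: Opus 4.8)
The plan is to invoke \Cref{lemma:rect-to-hyper}, which reduces the entire theorem to the two-dimensional case: it suffices to exhibit, for an arbitrary rectangle $[0,w_1]\times[0,w_2]$ with uniform $L_1$ voters and separately with uniform $L_2$ voters, a finite sequence of candidate configurations $C_1,\dots,C_n$ satisfying the three hypotheses of the Condorcet Chain Lemma (\Cref{lemma:chain}). By \Cref{lemma:hyperrect-condorcet} the rectangle center is a Condorcet winner and each corner is a Condorcet loser, so the chain must begin with a configuration in which the center is beaten and end with a configuration won by a corner, with each intermediate winner reappearing in the next configuration. Because \Cref{lemma:rect-to-hyper} only requires the rectangle result (not the full hyperrectangle), I would devote the bulk of the proof to constructing these two explicit chains and verifying the IRV dynamics at each step.

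For the construction, I would mirror the square sketch in \Cref{fig:square-sequence} but handle a general aspect ratio $w_1/w_2$ and both metrics. The guiding idea at each step is to place a small number of candidates (three or four suffices), compute the first-round plurality shares via the relevant nearest-candidate partition of the rectangle (Voronoi-type regions under $L_2$, and axis-aligned/diamond regions under $L_1$), identify which candidate is eliminated first, and then check the runoff to confirm that the intended ``red'' candidate loses while a previously-surviving ``blue'' candidate can win. The chain should march the surviving candidate from the center outward: first drive the winner off the center toward an edge midpoint, then slide it along a coordinate direction toward a corner, arranging the final configuration so that a corner candidate wins a tiebreak. The arithmetic differs between $L_1$ and $L_2$ only in how the plurality regions and their areas are computed, so I would present the $L_2$ chain in full and then describe the analogous $L_1$ chain, noting that the region boundaries become piecewise-linear (hyperplanes perpendicular to segments are replaced by $L_1$ bisectors) but the same qualitative sequence of eliminations goes through.

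The main obstacle I anticipate is robustness across all aspect ratios: \Cref{lemma:rect-to-hyper} demands a valid chain for \emph{every} rectangle, so the configurations and their vote-share inequalities must be parameterized by $w_1,w_2$ and shown to hold uniformly, rather than just for the unit square. In particular, I must ensure that the candidate I want eliminated genuinely has the strict plurality minimum in each first round and that the subsequent pairwise runoff goes the intended way, with all the area computations monotone or sign-definite in the aspect ratio. A secondary technical point is the careful use of tie-breaking: at several steps I rely on the phrase ``$w_i$ \emph{can} win,'' meaning the winner is not \emph{guaranteed}, which under the exclusion-zone definition (and \Cref{prop:exclusion-facts}(b)) is exactly what forces the beaten candidate into the minimal exclusion zone. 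So I would state each inequality as allowing a tie that the tie-breaker may resolve in favor of the outward-moving candidate, and confirm that the chain of ``can win'' relations legitimately propagates the center's membership in $S^*$ all the way to a corner, yielding a Condorcet loser in $S^*$ and hence a trivial minimal exclusion zone by \Cref{prop:exclusion-facts}(c). The details of these area computations are routine and deferred to \Cref{app:proofs}.
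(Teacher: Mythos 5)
Your proposal takes essentially the same route as the paper: reduce to the two-dimensional case via \Cref{lemma:rect-to-hyper}, then construct explicit Condorcet chains (\Cref{lemma:chain}) for arbitrary rectangles under $L_1$ and $L_2$, anchored by the Condorcet winner/loser identification of \Cref{lemma:hyperrect-condorcet} and the tie-breaking ``can win'' propagation through \Cref{prop:exclusion-facts}. The only substantive content you defer---the explicit parameterized configurations whose step count necessarily grows with the aspect ratio (the paper's center-to-edge march followed by the corner-squeeze iterations)---is exactly what the paper's appendix lemmas supply, and your stated plan for handling arbitrary $w_1/w_2$ matches their construction.
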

\begin{proof}[Proof sketch]
  We start by considering arbitrary rectangles and find sequences of configurations satisfying the Condorcet Chain Lemma for both $L_1$ and $L_2$ preferences. These are similar in flavor to the sequence given in \Cref{prop:l2-square}, but with numbers of steps growing with the narrowness of the rectangle. Applying \Cref{lemma:rect-to-hyper} then produces a sequence of configurations in the hyperrectangle satisfying the Condorcet Chain Lemma.
\end{proof}

This strong result seems to suggest that nontrivial exclusion zones for IRV are a purely one-dimensional phenomenon. However, we now show this is not the case. Indeed, a second dimension does make it much easier for non-central candidates to combine and squeeze out a more central candidate, but this needs to be paired with the strong symmetry of hyperrectangles with uniform voters in order to make the minimal exclusion zone trivial. By breaking this symmetry, we can find higher-dimensional preference spaces with nontrivial IRV exclusion zones. 

%\begin{theorem}\label{thm:barbell}
%  Consider the shape $B$ formed by linking two squares of side length $1$ and $2$ at their bases by a rectangle of height $1/10$ and width 8, where the smaller square has its lower left corner at the origin:
%  \begin{center}
%    \begin{tikzpicture}[x=6mm, y=6mm]
%      \draw (0, 0) -- (0, 1) -- (1, 1) -- (1, 0.1) -- (9, 0.1) -- (9, 2) -- (11, 2) -- (11, 0) -- cycle;
%      \fill[opacity = 0.3] (5, 0) -- (4.9, 0.1) -- (9, 0.1) -- (9, 2) -- (11, 2) -- (11, 0) -- cycle;
%      \node at (10, 1) {$S$};
%    \end{tikzpicture}
%  \end{center}
%  The shaded set $S = \{(x, y) \in B \mid x + y \ge 5 \}$ is an IRV exclusion zone with uniform $L_1$ voters over $B$. 
%\end{theorem}

\begin{theorem}\label{thm:golf-flag}
  Consider the shape $F$  (see \Cref{fig:funny-shape}) formed by a rectangle of height $1/10$ and width $8$ (having its lower left corner at the origin) with two right triangles of side lengths $(2, 2, \sqrt 8)$ and $(1, 1, \sqrt 2)$ placed on the top left and bottom right of the rectangle. The set $S = \{(x, y) \in F \mid x - y \le 6\}$ is an IRV exclusion zone with uniform $L_1$ voters over $F$. 
\end{theorem}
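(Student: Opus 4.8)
The plan is to verify the definition of an exclusion zone directly, rather than through the Condorcet Chain Lemma (\Cref{lemma:chain}), which only certifies \emph{non}existence. Fix an arbitrary candidate configuration $C$ with $C \cap S \neq \emptyset$ and suppose toward a contradiction that the IRV winner $w$ lies in the excluded region $F \setminus S = \{(x,y)\in F : x - y > 6\}$, a neighborhood of the extreme bottom-right vertex. The structural reason the construction succeeds is the deliberate mass asymmetry of $F$: the large top-left triangle (area $2$) accounts for well over half of the total area $33/10$ and lies entirely inside $S$, whereas $F\setminus S$ is a low-mass sliver on the right. I would exploit this to show $w$ can never outlast the moderate candidates; the same imbalance explains why $S$ is cut only on the right. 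As a consistency check, the Condorcet winner of $F$ must already lie in $S$ by \Cref{prop:exclusion-facts}(a), since every exclusion zone contains the minimal one.

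The first ingredient is a collection of $L_1$ vote-share computations. The boundary $x - y = 6$ has slope $1$, precisely the shape of an $L_1$ perpendicular bisector, and I would verify it is positioned so that every candidate $z \in S$ beats every candidate $w \in F\setminus S$ head-to-head: because $z_x - z_y \le 6 < w_x - w_y$, the high-$y$ mass of the top-left triangle is strictly closer in $L_1$ to $z$ than to $w$, already exceeding half the electorate. The interplay between the slanted boundary and the triangle's vertical extent is essential here, since for boundary pairs $z,w$ it is the larger $y$-coordinate of $z$, not its $x$-coordinate, that tips the comparison. The thinness of the rectangle (height $1/10$) keeps everything else essentially governed by the $x$-coordinate, and the dimensions are tuned so these inequalities stay strict even for pairs straddling the cut.

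With this pairwise dominance of $S$ over $F\setminus S$ in hand, I would rule out $w$ winning according to its final-round opponent $z$. If $z \in S$, then $w$ loses the final pairwise round immediately by the dominance above, a contradiction. The substantive case is $z \in F\setminus S$, meaning every $S$-candidate was eliminated before the final two while $S$ was occupied. To exclude this, I would argue that a strong $S$-candidate always survives: since the high-mass top-left region lies in $S$, any candidate capturing a large share of first-choice votes is itself in $S$, so excluded candidates are always plurality-weak, drawing only from the bounded mass of $F\setminus S$ plus a controlled sliver near the boundary. Tracking the IRV rounds, I would show this bound keeps at least one $S$-candidate ahead of every surviving excluded candidate until the excluded candidates are all gone, forcing the final survivor into $S$.

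The main obstacle is controlling IRV's vote transfers over the full space of configurations: eliminating candidates adjacent to $w$ funnels their votes to $w$, so $w$'s count is non-monotone, and configurations with many $S$-candidates splitting the central mass make it delicate to guarantee that some single $S$-candidate survives rather than being whittled away. The argument must therefore bound the total mass that can ever reach an excluded candidate---at most the area of $F\setminus S$ plus the sliver it can win from $S$ near the boundary---and show this stays strictly below the surviving $S$-champion's count in every round. Making this bound hold with enough slack to survive worst-case transfers is exactly what pins down the tuned constants (width $8$, height $1/10$, triangle legs $2$ and $1$) and the placement of the cut at $x - y = 6$; verifying the $L_1$ geometry of this slanted boundary for every relevant candidate pair is where I expect the bulk of the casework to lie.
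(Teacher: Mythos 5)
Your plan has the right raw material---the top-left triangle $T$ (area $2$ out of $3.3$) strongly preferring moderate candidates---but the way you deploy it breaks in exactly the case you call substantive. Two concrete problems. First, your key geometric lemma is head-to-head dominance: every $z \in S$ beats every $w \in F\setminus S$ pairwise. That is true, but it is the wrong currency for IRV: pairwise majorities do not control first-place vote shares in multi-candidate rounds (this Condorcet failure is the entire subject of the paper), so even the situation where a lone $S$-candidate faces several excluded candidates simultaneously is not settled by pairwise dominance---the lone $S$-candidate could a priori have the fewest first-place votes despite beating each opponent head-to-head. What is needed, and what the paper proves, is the stronger unanimity statement: for every $c \in S$ and every $w \in F\setminus S$, \emph{every} voter in $T$ is strictly closer to $c$ than to $w$ (case $c_x \le 2$: the largest $L_1$ distance from $T$ to $c$ is $4.1$, while leaving $S$ from $T$ costs strictly more than $4.1$; case $c_x > 2$: any shortest $L_1$ path from a voter in $T$ down to the rectangle and across to $c$ extends rightward to a strictly longer path to any point outside $S$).

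Second, your round-tracking invariant---''at least one $S$-candidate stays ahead of every surviving excluded candidate until the excluded candidates are all gone''---is false. The region $F\setminus S$ has area $\approx 0.695$; take ten $S$-candidates splitting the central mass at roughly $0.2$--$0.25$ votes each while a single excluded candidate holds most of that $0.695$ plus a boundary sliver: the excluded candidate then leads every \emph{individual} $S$-candidate, and $S$-candidates are eliminated one after another. No per-round champion exists. The theorem survives anyway, and the paper's proof shows why with no round bookkeeping at all: since IRV eliminates one candidate at a time and $C \cap S \neq \emptyset$, either the winner is trivially in $S$ or there is a round in which exactly one candidate $c \in S$ remains. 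By the unanimity fact, from that round onward $c$ receives all of $T$'s votes, i.e.\ at least $2 > 3.3/2$, a strict majority; a majority candidate can never have the fewest votes, so $c$ is never eliminated and wins. Replacing your champion invariant with this ''last survivor in $S$ holds a majority'' argument is what closes the gap.
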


\begin{figure}[t]
\centering
  \begin{tikzpicture}[x=6mm, y=6mm]
      \draw (0, 0) -- (0, 2.1) -- (2, 0.1) -- (8, 0.1) -- (8, -1) -- (7, 0) -- cycle;
      \fill[opacity = 0.3] (0, 0) -- (0, 2.1) -- (2, 0.1) -- (6.1, 0.1) -- (6, 0) -- cycle;
      \node at (0.6, 0.6) {$S$};
    \end{tikzpicture}
    \caption{The shape $F$ from \Cref{thm:golf-flag}, which has a nontrivial IRV exclusion zone with uniform $L_1$ voters. The shaded set $S$ is a nontrivial IRV exclusion zone.}\label{fig:funny-shape}
\end{figure}
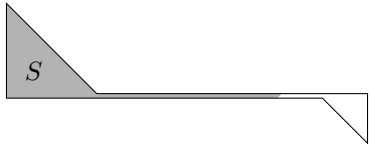

\begin{proof}
 Call the larger triangle $T$. The area of $T$ is $2$, while the area of $F$ is $2 + 1/2 + 8/10 = 3.3$.  
 Suppose only one candidate $c = (c_x, c_y)$ remains in $S$. Consider two cases. 
 
 (1) If $c_x \le 2$ , then $c$ gets the full vote share of $T$, since the largest $L_1$ distance between a point in $T$ and a candidate with $c_x\le 2$ is 4.1 (between $(0, 2.1)$ and $(2, 0)$), while the smallest distance from a point in $T$ (namely, $(2, 0.1)$) to a point outside $S$ is $> 4.1$.
 
 (2) If $c_x > 2$, then consider a shortest path between a voter $v\in T$  and $c$ that first descends to $y = 0.1$, then moves right along the top part of the rectangle, then drops down to $c_y$. A shortest path from $v$ leaving $S$ would be strictly longer, since one such path is achieved by extending this shortest path to $c$ to the right. Therefore every voter in $T$ prefers $c$ to any candidate outside of $S$.
 
 Thus, the last candidate $c$ remaining in $S$ during IRV elimination gets the full vote share in $T$; since this is more than half the total vote share, $c$ wins. $S$ is therefore an IRV exclusion zone. 
 \end{proof}

The preference space $F$ in \Cref{thm:golf-flag} has a natural interpretation: there are two relevant policy dimensions, one in which voters are strongly polarized towards opposite extremes (the $x$ dimension) and one in which voters are mostly moderate (the $y$ dimension), but tend to have opposite leanings on opposite sides of the $x$ dimension.  In this setting, IRV can never elect a candidate on the smaller extreme side. The same idea can be generalized to higher dimensions. However, the construction does not seem to work for $L_2$ preferences; it remains an open question whether there are connected higher-dimensional preference spaces with nontrivial IRV exclusion zones for $L_2$ voters (if we allow disconnected spaces, consider uniform voters over a large polytope and a small polytope that are very far apart: only candidates in the large polytope can win). 

\section{Voting on graphs}
We now turn our attention to a different metric space: unweighted graphs. In this setting, voters and candidates are nodes in a graph, with preferences determined by path distance. We assume every node in the graph represents a single voter and some subset of these nodes run as candidates. Note that distance ties are common in unweighted graphs; we say that each node has vote share 1 that it evenly distributes among all closest candidates.\footnote{This approach to resolving indifference among voters has recently been called Split-IRV to contrast it with Approval-IRV, where each tied candidate receives one full approval vote~\cite{delemazure2024generalizing}. In the context of single transferrable vote (multi-winner IRV), Split-IRV is also called Meek's method~\cite{hill1987algorithm}, after its inventor~\cite{meek1969nouvelle,meek1994new}. We use Split-IRV rather than Approval-IRV as it more closely parallels the continuous metric space case. Split-IRV has also been used in real-world elections~\cite{mollison2023fair}, although there are good theoretical reasons to prefer Approval-IRV~\cite{delemazure2024generalizing}; see~\cite{hill2005meek} for additional debate about approaches to resolving indifference in IRV.} Formally, we define a graph election as follows. 

\begin{definition}
  An election on a graph $G = (V, E)$ has metric space $(V, d_G)$, where $d_G$ is the path distance metric of $G$, candidates $C \subseteq V$, and a uniform voter distribution over $V$.  
\end{definition}

Such graph-distance preferences are common in the literature on facility location~\cite{wendell1981new,bandelt1985networks,hansen1986equivalence}, and in our voting setting can model friendship- or allegiance-based voting~\cite{telek2016power}. For instance, consider a class president election. If every student votes for the candidate they are closest friends with (measured by path distance in the friendship network), then their preferences are exactly given by the graph metric. 

IRV in this graph voting setting can also be motivated from sequential closures in facility location problems. Here, the graph represents the spatial distribution of customers, with facilities acting as candidates. Every customer uses the closest facility. At each round of the process, the facility with the fewest customers shuts down, causing its customers to redistribute themselves to the next-closest facility (just as votes are reallocated in IRV with the same graph metric). Under this process, we can ask where the last remaining facility will be located, given some graph---this is precisely equivalent to asking where the IRV winner will be, given graph-distance preferences. The \emph{facility delocation} problem has been studied from a strategic perspective~\cite{bhaumik2010optimal,ruiz2017cournot}; our study of IRV exclusion zones in graphs gives some additional insight into the purely mechanistic sequential closure process, focusing on sequential closures down to a single location. 

\subsection{Identifying graph IRV exclusion zones in general}\label{sec:graph-general}
Given our graph voting setting, we begin by considering a general algorithmic problem: can we identify the IRV exclusion zones of a given graph?

\begin{definition}
  Given a graph $G = (V, E)$ and a set of nodes $S \subseteq V$, \textsc{IRV-Exclusion} is the decision problem asking whether $S$ is an IRV exclusion zone of $G$. \textsc{Min-IRV-Exclusion} is the optimization problem whose solution is the minimal IRV exclusion zone of $G$. 
\end{definition}

We show that exclusion zones are very difficult to identify in general---even checking whether a given set is an IRV exclusion zone is computationally hard. For a full proof, see \Cref{app:proofs}. 

\begin{theorem}\label{thm:exclusion-co-np}
\textsc{IRV-Exclusion} is co-NP-complete. 
\end{theorem}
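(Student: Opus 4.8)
The plan is to establish both directions: membership in co-NP and co-NP-hardness. For membership, I would show that the complementary problem---deciding whether $S$ is \emph{not} an IRV exclusion zone---lies in NP. A natural certificate is a pair consisting of a candidate configuration $C$ with $C \cap S \ne \emptyset$ together with an explicit tie-breaking sequence (one choice per elimination round). Given such a certificate, one can simulate IRV in polynomial time: compute all pairwise path distances by BFS, and in each round assign each voter's unit mass evenly among its nearest surviving candidates, total the shares, and eliminate the candidate dictated by the certificate whenever several are tied for fewest votes. Since there are at most $|V| - 1$ rounds and the certificate has polynomial size, verifying that this process produces a winner outside $S$ takes polynomial time. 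Because $S$ fails to be an exclusion zone exactly when some such $(C, \text{tie-breaks})$ witness exists, the complement is in NP, and hence \textsc{IRV-Exclusion} is in co-NP.

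For hardness, I would first reframe the problem. Taking $S = V \setminus \{w\}$ for a single distinguished node $w$, the set $S$ is an exclusion zone precisely when $w$ can never be the IRV winner of any configuration with at least two candidates; equivalently, $S$ is \emph{not} an exclusion zone iff there is a candidate set $C \ni w$ with $|C| \ge 2$ and a tie-breaking sequence under which $w$ wins. Thus it suffices to show that this ``possible IRV winner'' problem---does there exist a choice of candidates (and tie-breaks) making a prescribed node $w$ win?---is NP-hard for graph-distance preferences, since the map $(G, w) \mapsto (G, V \setminus \{w\})$ is then a reduction from this problem to the complement of \textsc{IRV-Exclusion}. I would prove NP-hardness of the possible-winner problem by reduction from 3-SAT (or an equally standard NP-complete problem). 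The idea is to encode each variable $x_i$ by a gadget offering two candidate nodes, so that including one versus the other in $C$ corresponds to setting $x_i$ true or false, and to attach clause gadgets whose voter populations redistribute among the chosen literal-candidates as eliminations proceed. The graph distances are tuned so that $w$ survives every elimination round---and therefore wins---if and only if the chosen configuration corresponds to a consistent, satisfying assignment.

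The technical heart, and the step I expect to be the main obstacle, is controlling IRV's sequential, path-dependent eliminations through nothing but graph distances and (adversarially chosen) tie-breaks. Unlike plurality, IRV's outcome depends on the entire elimination order, so the gadgets must simultaneously enforce consistency (at most one of the two literal-candidates per variable can profitably be included) and route vote mass correctly as candidates drop out, keeping $w$ just above the elimination threshold at every stage only when all clauses are covered. I would use the tie-breaking freedom to pin down the elimination order \emph{within} gadgets, reserving genuine vote-share inequalities to encode the clause constraints, and then argue that no ``cheating'' configuration allows $w$ to win without a satisfying assignment. Combining the co-NP upper bound with this reduction yields co-NP-completeness.
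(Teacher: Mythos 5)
Your co-NP membership argument is correct and is essentially the paper's: the witness for non-membership is a candidate configuration together with a tie-breaking order, verified in polynomial time by simulating IRV.

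The hardness half, however, is unsalvageable as proposed. By setting $S = V \setminus \{w\}$, you reduce 3-SAT to a special case of \textsc{IRV-Exclusion} that is decidable in polynomial time, so such a reduction cannot exist unless $\mathrm{P} = \mathrm{NP}$. The collapse happens precisely because the candidate set is freely chosen: $w$ can win some election with at least two candidates if and only if $w$ ties or beats some single node $u$ in a pairwise contest. (If it does, take $C = \{w, u\}$ and break the possible tie in $w$'s favor; conversely, in any election that $w$ wins, the penultimate round is a two-candidate contest between $w$ and some $u$ whose vote shares are exactly the pairwise shares, so $w$ must tie or beat $u$ there.) Hence $V \setminus \{w\}$ is an exclusion zone iff $w$ strictly loses all $n-1$ pairwise contests, which is checkable with $n-1$ BFS computations; the IRV elimination dynamics you plan to encode gadgets into never come into play. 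This is also a special case of the paper's own fixed-parameter tractability result (\Cref{thm:fpt}): its $O(2^c n(n+m))$ runtime is polynomial for $c = |V \setminus S| = 1$. The genuine source of hardness is the exponentially many configurations of candidates \emph{outside} $S$ that could conspire to eliminate the last remaining candidate of $S$, and your choice of $S$ leaves only one node outside $S$, hence no combinatorial freedom at all. The paper instead takes $S$ to be a two-node set $\{s_1, s_2\}$ (so that $V \setminus S$ is large) and reduces from restricted exact cover by 3-sets, engineering the vote shares so that $s_1$ can be tied and eliminated by a collection of candidates in $V \setminus S$ exactly when those candidates encode an exact cover. Your instinct that gadgets must control IRV's sequential eliminations is sound, but those gadgets have to live in $V \setminus S$---which your formulation forbids.
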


\begin{proof}[Proof sketch]
By reduction from restricted exact cover by 3-sets (RX3C), where each item appears in exactly 3 sets, which is NP-hard~\cite{gonzalez1985clustering}. We construct a graph from the RX3C instance where a specific two-node set is an IRV exclusion zone if and only if there is no exact cover; the only candidate configuration where one of the two nodes can be eliminated is exactly when there are candidates at the nodes corresponding to the exact cover. Inclusion in co-NP follows from a verifier that provides a candidate configuration showing the set is not an exclusion zone.
\end{proof}

Since checking whether even a single set is an exclusion zone is hard, the minimization problem of finding the minimal exclusion zone is naturally also hard.

\begin{theorem}
  \textsc{Min-IRV-Exclusion} is NP-hard.
\end{theorem}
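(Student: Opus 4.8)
The plan is to prove that \textsc{Min-IRV-Exclusion} is NP-hard by reducing from \textsc{IRV-Exclusion}, which \Cref{thm:exclusion-co-np} establishes to be co-NP-complete. The key observation is that if we could efficiently compute the minimal exclusion zone $S^*$ of a graph $G$, then we could decide \textsc{IRV-Exclusion} for any candidate set $S$: by the nesting structure of exclusion zones established in \Cref{prop:nested} and the uniqueness of the minimal exclusion zone, $S$ is an exclusion zone if and only if $S^* \subseteq S$ \emph{and} $S$ itself satisfies the defining closure property. The cleanest route is to exploit the fact that every exclusion zone contains $S^*$; so a necessary condition for $S$ to be an exclusion zone is $S^* \subseteq S$, which is checkable in polynomial time once $S^*$ is known.

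The main subtlety is that the nesting of \Cref{prop:nested} does not immediately imply that $S^* \subseteq S$ is \emph{sufficient} for an arbitrary $S$ to be an exclusion zone---it only tells us that exclusion zones are totally ordered by inclusion, so every exclusion zone sits above $S^*$ in the chain. Thus I would not claim a full decision procedure for \textsc{IRV-Exclusion} from $S^*$ alone for arbitrary input sets. Instead, I would make the reduction go through by arguing about the specific structure used in the hardness proof of \Cref{thm:exclusion-co-np}. In that construction, a designated two-node set $S$ is an exclusion zone if and only if the RX3C instance has no exact cover. First I would note that knowing $S^*$ resolves this dichotomy: if the two-node set $S$ is an exclusion zone, then since $S^*$ is the unique minimal one and exclusion zones are nested, we must have $S^* \subseteq S$, and since $|S| = 2$ either $S^* = S$ or $S^*$ is a strict subset. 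So an algorithm producing $S^*$ lets us check whether $S^* \subseteq S$, and combined with verifying the closure property directly on the small candidate set (or by the equivalence built into the reduction), we decide whether the exact cover exists.

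Concretely, I would structure the argument as a Turing (Cook) reduction: given an instance of the co-NP-complete problem \textsc{IRV-Exclusion} on $(G, S)$, call the hypothetical \textsc{Min-IRV-Exclusion} oracle on $G$ to obtain $S^*$ in polynomial time, then test the polynomial-time-checkable conditions that, for the reduction graphs of \Cref{thm:exclusion-co-np}, are equivalent to $S$ being an exclusion zone. This shows that a polynomial-time algorithm for \textsc{Min-IRV-Exclusion} would yield one for the co-NP-complete \textsc{IRV-Exclusion}, and hence \textsc{Min-IRV-Exclusion} is NP-hard (indeed, it is hard for both NP and co-NP under Turing reductions).

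The step I expect to be the main obstacle is making the equivalence between ``$S^* \subseteq S$'' and ``$S$ is an exclusion zone'' fully rigorous for the inputs arising in the reduction, rather than for all inputs; the general sufficiency can fail, so I must lean on the specific gadget structure from \Cref{thm:exclusion-co-np} (where the only way a designated node gets eliminated corresponds exactly to an exact cover). An alternative, possibly cleaner, route that sidesteps this issue is a direct many-one reduction from RX3C to \textsc{Min-IRV-Exclusion}: reuse the same gadget graph so that the minimal exclusion zone is a small fixed set precisely when no exact cover exists and is larger (or trivial) otherwise, so that reading off $|S^*|$ or whether a particular node lies in $S^*$ decides RX3C. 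I would develop whichever of these two is least delicate, but both hinge on the same structural facts about exclusion zones (nesting and a unique minimal element) already proved above.
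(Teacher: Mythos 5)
Your proposal lands on essentially the paper's own proof: the paper's argument is precisely your ``alternative, possibly cleaner'' route, namely a Turing reduction from RX3C that applies the hypothetical polynomial-time algorithm for \textsc{Min-IRV-Exclusion} to the gadget graph from \Cref{thm:exclusion-co-np} and reads off whether it returns the two-node set $\{s_1, s_2\}$. The one loose end you flag but do not close---ruling out that the minimal exclusion zone $S^*$ could be a \emph{proper} subset of $\{s_1, s_2\}$, which is needed so that the oracle's answer is genuinely equivalent to the (non)existence of an exact cover---is handled in the paper by a one-line observation: $s_1$ and $s_2$ are the only Condorcet winners of the gadget graph, so by \Cref{prop:exclusion-facts}(a) both always lie in $S^*$; hence $S^* = \{s_1, s_2\}$ exactly when $\{s_1, s_2\}$ is an exclusion zone, i.e., exactly when no exact cover exists. (The general fact, stated in the paper, that no graph has a one-node IRV exclusion zone---any single node can be eliminated first by tiebreak when every node is a candidate---would close the gap equally well.) One caution: your fallback of ``verifying the closure property directly on the small candidate set'' is circular, since checking whether $\{s_1, s_2\}$ is an exclusion zone is exactly the co-NP-hard problem being reduced from; the equivalence must come from the structural facts above, not from direct verification.
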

\begin{proof}
If there was a polynomial-time algorithm for  \textsc{Min-IRV-Exclusion}, then we could apply it to the graph constructed in the proof of \Cref{thm:exclusion-co-np}. If it returns the two-node set, then we know there is no exact cover, and we would have a polynomial-time algorithm for \textsc{RX3C}. (No other two-node set can be an exclusion zone of this graph, as those two nodes are the only Condorcet winners.)
\end{proof}

Despite the hardness of identifying exclusion zones, we show that we can identify \emph{approximate} exclusion zones, which we define to be node sets that behave like exclusion zones most of the time.

\begin{definition}
  A set of nodes $S$ is a $(1-\epsilon)$-\emph{approximate exclusion zone} if, drawing a uniformly random candidate configuration $C \subseteq V$ with $C \cap S \ne \emptyset$, the winner is in $S$ w.p.\ at least $1- \epsilon$.  
\end{definition}

\begin{theorem}\label{thm:approx}
  Let $G$ be a graph with $n$ nodes and $m$ edges, and pick any desired  $\epsilon,\delta \in (0, 1)$. There is a randomized algorithm returning a set $S$ in time $O((n^3 +n^2m)\log(1/\delta)/\epsilon^2)$ such that:
  \begin{enumerate}
    \item $S$ is a subset of the minimal IRV exclusion zone of $G$, and
    \item $S$ is a $(1-\epsilon)$-approximate IRV exclusion zone with probability at least $1-\delta$.
  \end{enumerate}
\end{theorem}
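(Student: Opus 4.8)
The plan is to build the returned set $S$ incrementally as a provable subset of the minimal exclusion zone $S^{*}$, using \Cref{prop:exclusion-facts}(b) as the engine. The key observation is that \emph{the same operation that grows $S$ also certifies membership in $S^{*}$}: if a candidate configuration $C$ meets the current set $S$ and its IRV winner is some $w \notin S$, then, maintaining the invariant $S \subseteq S^{*}$, \Cref{prop:exclusion-facts}(b) guarantees $w \in S^{*}$, so we may safely add $w$. I would seed $S$ with the Condorcet winner(s) of the graph, which lie in $S^{*}$ by \Cref{prop:exclusion-facts}(a) and are identifiable in polynomial time from all-pairs distances (a node $x$ is a Condorcet winner iff, for every $y$, at least as many voters are strictly closer to $x$ as to $y$). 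Because every node ever added is the observed winner of a configuration meeting the current $S$, the invariant $S \subseteq S^{*}$ holds \emph{deterministically} throughout, which establishes guarantee (1) irrespective of the algorithm's random choices.

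For guarantee (2) I would interleave this growth with a sampling test. Writing $p(S) = \Pr[\text{winner} \notin S \mid C \cap S \ne \emptyset]$ for a uniformly random configuration, $S$ is a $(1-\epsilon)$-approximate exclusion zone exactly when $p(S) \le \epsilon$. The algorithm runs in rounds: each round draws $\Theta(\log(1/\delta)/\epsilon^{2})$ configurations conditioned on $C \cap S \ne \emptyset$ (sample each node independently and resample the $S$-part until nonempty, which is efficient since $S \ne \emptyset$), computes each IRV winner, and tests membership. If any sample has its winner outside $S$, that winner is added to $S$ and the round repeats; otherwise the algorithm halts and returns $S$. Since $S$ grows by at least one node per non-halting round and is capped at $S^{*} \subseteq V$, there are at most $n$ rounds. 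A Chernoff bound shows that a single round halts erroneously (finds no violation although $p(S) > \epsilon$) with probability at most $(1-\epsilon)^{\Theta(\log(1/\delta)/\epsilon^{2})}$; with appropriate constants, a union bound over the $O(n)$ rounds drives the total failure probability below $\delta$, yielding guarantee (2). This sampling test is also exactly the promised standalone subroutine for checking whether a fixed set is an approximate exclusion zone.

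For the running time I would precompute all-pairs shortest paths once by breadth-first search from every node ($O(n^{2}+nm)$), and observe that a single IRV evaluation costs $O(n^{2}+nm)$: at each of the at most $n$ elimination rounds, a breadth-first search from the remaining candidates reassigns each voter to its nearest remaining candidate(s) and retallies in $O(n+m)$ time. With $O(n)$ rounds, each running $O(\log(1/\delta)/\epsilon^{2})$ such evaluations, this gives the claimed $O\big((n^{3}+n^{2}m)\log(1/\delta)/\epsilon^{2}\big)$ bound.

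The main obstacle is conceptual rather than computational: reconciling the two guarantees, which pull in opposite directions. Guarantee (1) forbids adding anything not provably in $S^{*}$, while guarantee (2) wants $S$ as large as possible. The resolution is precisely the double role of observing an out-of-$S$ winner, which (a) certifies that winner lies in $S^{*}$ via \Cref{prop:exclusion-facts}(b), so growth never violates (1), and (b) is exactly the rare event whose scarcity defines the approximate condition, so iterating until violations become infrequent is what certifies (2); termination and the $O(n)$ round bound then come for free from $S^{*}$ being finite. The remaining care lies in the probabilistic bookkeeping --- composing the per-round confidence with the union bound over rounds to reach the stated $1-\delta$ --- and in verifying the efficiency of both the conditional sampling and the per-configuration IRV evaluation. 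I would also flag the caveat that the Condorcet-winner seed presumes one exists (as it does in the graph families we study); in general the argument only requires \emph{some} certifiable member of $S^{*}$ as an anchor from which to grow.
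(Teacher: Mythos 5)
Your core engine is the same as the paper's: grow $S$ by adding observed winners of configurations that meet the current $S$, certify each addition's membership in the minimal exclusion zone $S^*$ via \Cref{prop:exclusion-facts}(b), and terminate once a long run of random configurations produces no violation. But there is a genuine gap at the very first step: the seed. You initialize $S$ with the graph's Condorcet winner(s), and you yourself flag that this presumes one exists---yet the theorem is stated for an arbitrary graph $G$, and graphs with no Condorcet winner exist (the paper's discussion points to a 12-node example of \citet{skibski2023closeness}). Without a nonempty seed your algorithm cannot even pose the conditional sampling problem (the event $C \cap S \ne \emptyset$ requires $S \ne \emptyset$), and ``some certifiable member of $S^*$'' is precisely the object your proof leaves unconstructed. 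The paper's fix is simple and is the one idea your proposal is missing: place a candidate at \emph{every} node and run IRV. Since the full configuration necessarily intersects the (nonempty) minimal exclusion zone, its winner must lie in $S^*$ by the definition of an exclusion zone, no Condorcet winner needed; this costs only $O(n^2 + nm)$ per run and works unconditionally.

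There is also a quantitative flaw in your probabilistic bookkeeping. You invoke a union bound over the $O(n)$ rounds with per-round failure probability $(1-\epsilon)^{t}$, $t = \Theta(\log(1/\delta)/\epsilon^2)$. But $(1-\epsilon)^{t} \le e^{-\epsilon t} = \delta^{\Theta(1/\epsilon)}$ has no dependence on $n$, so the requirement $n \cdot \delta^{\Theta(1/\epsilon)} \le \delta$ fails for sufficiently large $n$ (and already for modest $n$ when $\epsilon$ is close to $1$), no matter what constants you choose. A genuine union bound over rounds needs $t = \Omega(\log(n/\delta)/\epsilon)$ samples per round, which inserts a $\log n$ factor into the runtime you claim. (The paper sidesteps this by applying Hoeffding only to the single set ultimately returned, treated as fixed---less careful about the data-dependence of that set, but it is why its sample count has no $n$-dependence.) On the positive side, your conditional sampler---resample the $S$-part until nonempty---matches the uniform distribution in the definition of approximate exclusion zone exactly, arguably more faithfully than the paper's sampler (which picks one node of $S$ uniformly and thereby biases toward configurations with many $S$-candidates), and your per-evaluation IRV cost analysis is correct. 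The two real defects are the missing unconditional seed and the $n$-dependence needed by your union bound.
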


\begin{proof}[Proof sketch]
The idea behind the algorithm is to grow a set of nodes which must all be in the minimal exclusion zone:
\begin{enumerate}
  \item Build the pairwise loss graph $L$: for every pairwise contest where $u$ loses or draws against $v$, add the directed edge $(u, v)$. Let $d_L(u)$ be the set of nodes reachable from $u$ in $L$. 
  \item Place a candidate at every node, and run IRV $n$ times with random tiebreaks. Let $S$ be the set of winners in these $n$ elections and for each $u \in S$, add $d_L(u)$ to $S$. 
  \item Repeat until we go $\lceil\log(2/\delta)/(2\epsilon^2)\rceil$ iterations with no updates to $S$:
  \begin{enumerate}
    \item Build a node set $X$ by uniformly sampling one node $u$ in $S$ and then uniformly sampling a subset of $V\setminus \{u\}$.  
    \item Run IRV on $X$. If the winner $w$ is not in $S$, add $w$ and $d_L(w)$ to $S$.
  \end{enumerate} 
\end{enumerate}
(The pairwise loss graph is not needed for correctness, but dramatically speeds up convergence in practice.)
We use \Cref{prop:exclusion-facts} to show that every node added to $S$ must be in the minimal exclusion zone and Hoeffding's inequality to show that the set we end up with is an approximate exclusion zone with high probability.
\end{proof}

We can also show that the computational hardness arises specifically from checking whether small node sets are exclusion zones. Intuitively, for small node sets, there are exponentially many candidate configurations (in the number of nodes outside the set) which could be counterexamples. Formally, we show that \textsc{IRV-Exclusion} is fixed-parameter tractable for the parameter $|V\setminus S|$.

\begin{theorem}\label{thm:fpt}
Let $G$ be a graph with $n$ nodes and $m$ edges. For $|S| = n - c$, there is an algorithm for \textsc{IRV-Exclusion} with runtime $O(2^c n(n+m))$.
\end{theorem}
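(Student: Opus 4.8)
The plan is to exploit a structural feature of split-IRV that the earlier arguments (e.g.\ the elimination sequences behind \Cref{prop:l2-square}) use implicitly: the process is \emph{memoryless}, in the sense that the vote shares in any round depend only on the current set of alive candidates and not on the elimination history, since each voter simply splits its unit of vote among its nearest alive candidates. Consequently an IRV run on a candidate set $C$ is just a walk down the subset lattice, $C = Q_0 \supsetneq Q_1 \supsetneq \cdots \supsetneq Q_k = \{w\}$, where each $Q_{i+1}$ is obtained from $Q_i$ by deleting a candidate of weakly minimum vote share, with ties resolved by the tiebreak.

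The heart of the argument is a reduction that collapses the $2^n$ possible configurations down to the $2^c$ that matter. I claim that $S$ fails to be an exclusion zone if and only if there exist $s \in S$ and a nonempty $A \subseteq V\setminus S$ such that $s$ has weakly minimum vote share in the configuration $\{s\}\cup A$ (so that some tiebreak eliminates $s$ first). For the easy direction, given such $s$ and $A$, run IRV on $C = \{s\}\cup A$: here $C\cap S=\{s\}\ne\emptyset$, yet a tiebreak may eliminate $s$ in the first round, after which every alive candidate lies in $V\setminus S$ and so the eventual winner is not in $S$. For the converse, suppose some $C$ with $C\cap S\ne\emptyset$ admits a tiebreak electing a winner $w\notin S$, and examine the chain $Q_0\supsetneq\cdots\supsetneq Q_k=\{w\}$. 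Since $Q_0\cap S\ne\emptyset$ but $Q_k\subseteq V\setminus S$, there is a first index $j$ with $Q_j\cap S=\emptyset$; then $Q_{j-1}\cap S$ consists of exactly the single candidate $s$ removed at that step, $Q_{j-1}=\{s\}\cup A$ with $A:=Q_j\subseteq V\setminus S$ nonempty, and $s$ was of weakly minimum vote share in $Q_{j-1}$ because the run eliminated it there. This is precisely the claimed witness.

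Given the reduction, the algorithm is immediate: enumerate the $2^c$ subsets $A\subseteq V\setminus S$, and for each test whether any $s\in S$ is a weakly minimum-vote candidate of $\{s\}\cup A$, reporting that $S$ is not an exclusion zone the moment such a pair is found and otherwise that it is. For the running time I would precompute all pairwise graph distances once via a BFS from every node in $O(n(n+m))$ time. Then for a fixed $A$ a single multi-source BFS from $A$ yields $d(v,A)$ for every voter $v$ in $O(n+m)$ time; for each $s\in S$ one further pass computes the vote shares of $s$ and of the $A$-candidates in $\{s\}\cup A$ and checks whether $s$ attains the minimum, again in $O(n+m)$ time. This is $O(n(n+m))$ work per subset and $O(2^c n(n+m))$ in total, matching the claim.

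The main obstacle, and the step deserving the most care, is the converse direction of the reduction: a priori a counterexample configuration could contain many candidates from $S$, and IRV's non-monotonicity means one cannot simply add or remove $S$-candidates and predict the winner. The memorylessness observation is exactly what defuses this: because the trajectory is determined by the alive set alone, it suffices to locate the \emph{first} round at which $S$ has been emptied, and at that round only one $S$-candidate can have just departed. I would take particular care to verify that $A=Q_j$ is nonempty (it contains $w$), that $Q_{j-1}$ contains no $S$-candidate other than the one removed, and that ``weakly minimum'' rather than ``strictly minimum'' is the correct condition, since the exclusion-zone definition quantifies over \emph{all} tiebreaks and a tied candidate can be eliminated by a suitable tiebreak.
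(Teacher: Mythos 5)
Your proposal is correct and takes essentially the same approach as the paper: the paper's algorithm likewise enumerates, for each $u \in S$, all $2^c$ candidate configurations $X \subseteq V\setminus S$ and returns FALSE exactly when $u$ can have the (weakly) smallest plurality vote share against some $X$, with correctness justified by the same observation that in any counterexample run the round eliminating the last $S$-candidate is such a checked configuration. Your write-up merely makes explicit some details the paper leaves implicit (memorylessness of the vote shares, nonemptiness of $A$, and the weak-minimum/tiebreak point), and your runtime accounting matches the paper's $O(2^c n(n+m))$ bound.
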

\begin{proof}
  Consider the following algorithm:
  \begin{enumerate}
    \item For each $u \in S$:
    \begin{enumerate}
    \item For every configuration $X$ of candidates from $\overline S$:
    \begin{enumerate}
    \item If $u$ has the smallest plurality vote share against $X$, then return FALSE. 
    \end{enumerate}
    \end{enumerate}
    \item Return TRUE.
  \end{enumerate}

We begin with correctness. Suppose $S$ is an IRV exclusion zone. Then $u$ can never have the smallest plurality vote share against any configuration $X$ from $\overline S$, so the algorithm returns TRUE. If $S$ is not an IRV exclusion zone, then there is some configuration of candidates with candidates in $S$ where a candidate outside $S$ wins. For this counterexample configuration, at some point, the last candidate in $S$ is eliminated with the smallest plurality vote share; the algorithm checks this stage of the configuration at some point and returns FALSE.
  
  Computing plurality vote shares takes time $O(n+m)$ using a single BFS pass. There are $O(n)$ nodes in $S$ and $2^c$ configurations of candidates from  $\overline S$ to check. This gives the claimed runtime.
  \end{proof}
  
For \textsc{Min-IRV-Exclusion}, we can test progressively larger node sets with the algorithm from \Cref{thm:fpt}. We can speed this up dramatically in practice by ruling out node sets which cannot be exclusion zones using the pairwise loss graph from \Cref{thm:approx}: to be an exclusion zone, all nodes reachable from the set in the pairwise loss graph must also be in the set. We used this algorithm to find the minimal IRV exclusion zones of all  connected graphs on 3--7 nodes and all trees on 3--15 nodes. Some examples are shown in \Cref{fig:graph-examples}. In \Cref{tab:graphs,tab:trees}, we show the number of graphs and trees with nontrivial and 2-node IRV exclusion zones. (No graph can have a one-node exclusion zone, since any one node can be the first eliminated by a tiebreak when there is a candidate at every node.) From these tables we can conclude that the vast majority of small graphs and trees have nontrivial exclusion zones, indicating that it is common for small graphs to have sets of nodes that are easily excluded in IRV graph voting, meaning that if any single node in the exclusion zone were a candidate in the race, the excluded nodes would lose.

\begin{figure}
\centering
  \begin{tikzpicture}
  % 4-cycle
  \begin{scope}[xshift=0cm,scale=0.5]
    \foreach \i in {1,2,3,4}
      \node[circle, draw, fill=blue!70] (\i) at (90*\i:1) {};
    \foreach \i/\j in {1/2,2/3,3/4,4/1}
      \draw (\i) -- (\j);
      
    \node[align=center] at (0, -2) {\small (a)};
  \end{scope}

  % Length-5 path
  \begin{scope}[xshift=1.5cm, scale=0.5]
   \node[circle, draw, fill=red!70] (n1) at (0, 0) {};
    \node[circle, draw, fill=blue!70] (n2) at (1, 0) {};
    \node[circle, draw, fill=blue!70] (n3) at (2, 0) {};
    \node[circle, draw, fill=blue!70] (n4) at (3, 0) {};
    \node[circle, draw, fill=blue!70] (n5) at (4, 0) {};
    \node[circle, draw, fill=red!70] (n6) at (5, 0) {};
    \draw (n1) -- (n2) -- (n3) -- (n4) -- (n5) -- (n6);
    
        \node[align=center] at (2.5, -2) {\small (b)};
  \end{scope}

  % Bistar with 3 leaves on each side
  \begin{scope}[xshift=6cm, scale=0.5]
    \node[circle, draw, fill=blue!70] (center1) at (-1, 0) {};
    \node[circle, draw, fill=blue!70] (center2) at (0, 0) {};
    \foreach \i in {1,2,3}
      \node[circle, draw, fill=red!70] (l\i) at (-2, \i-2) {};
    \foreach \i in {1,2,3}
      \node[circle, draw, fill=red!70] (r\i) at (1, \i-2) {};
    \foreach \i in {1,2,3} {
      \draw (center1) -- (l\i);
      \draw (center2) -- (r\i);
    }
    \draw (center1) -- (center2);
    
      \node[align=center] at (-0.5, -2) {\small (c)};
  \end{scope}

  % Height-2 perfect binary tree
  \begin{scope}[xshift=8.3cm,scale=0.5]
    \node[circle, draw, fill=blue!70] (root) at (0, 1) {};
    \node[circle, draw, fill=blue!70] (left) at (-1, 0) {};
    \node[circle, draw, fill=blue!70] (right) at (1, 0) {};
    \node[circle, draw, fill=red!70] (leftleft) at (-1.5, -1) {};
    \node[circle, draw, fill=red!70] (leftright) at (-0.5, -1) {};
    \node[circle, draw, fill=red!70] (rightleft) at (0.5, -1) {};
    \node[circle, draw, fill=red!70] (rightright) at (1.5, -1) {};
    \foreach \i/\j in {root/left, root/right, left/leftleft, left/leftright, right/rightleft, right/rightright}
      \draw (\i) -- (\j);
      
          \node[align=center] at (0, -2) {\small (d)};
  \end{scope}
  
  \begin{scope}[xshift=10.3cm, scale=0.5]
  \node[circle, draw, fill=blue!70] (A) at (0, 0) {};
  \node[circle, draw, fill=blue!70] (B) at (-0.6, -1) {};
  \node[circle, draw, fill=blue!70] (C) at (0.6, -1) {};
  \node[circle, draw, fill=red!70] (D) at (0, 1) {};

  \draw (A) -- (B);
  \draw (A) -- (C);
  \draw (B) -- (C);
  \draw (A) -- (D);  
  
      \node[align=center] at (0, -2) {\small (e)};
  \end{scope}

  \begin{scope}[xshift=12cm, scale=0.5]
  \node[circle, draw, fill=blue!70] (A) at (-0.6, 0) {};
  \node[circle, draw, fill=blue!70] (B) at (0.6, 0) {};
  \node[circle, draw, fill=red!70] (C) at (0, -1) {};
  \node[circle, draw, fill=red!70] (D) at (-1.2, 1) {};
  \node[circle, draw, fill=red!70] (E) at (1.2, 1) {};

  \draw (A) -- (B);
  \draw (A) -- (C);
  \draw (B) -- (C);
  \draw (A) -- (D);  
    \draw (B) -- (E);  
    
        \node[align=center] at (0, -2) {\small (f)};
  \end{scope}

\end{tikzpicture}
\caption{Some graphs with their minimal IRV exclusion zones in blue and excluded nodes in red: (a) the 4-cycle, (b) the 6-path, (c) the 6-leaf bistar, (d) the height-2 perfect binary tree, (e) the smallest connected cyclic graph with a nontrivial IRV exclusion zone, and (f) the smallest (in nodes, then in edges) connected graph whose minimal IRV exclusion zone does not consist of all non-leaf nodes. }\label{fig:graph-examples}
\end{figure}
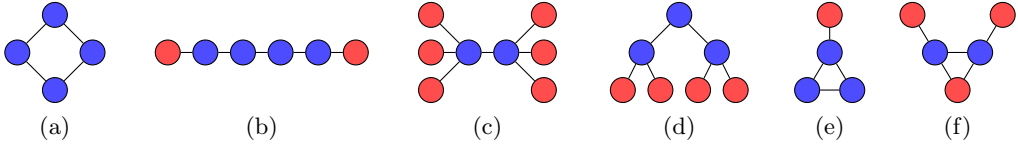

\begin{table}
\centering
  \caption{Number of connected graphs with nontrivial and 2-node IRV exclusion zones for small $n$.}\label{tab:graphs}
    \vspace{0.5em}
  \begin{tabular}{lrrr}
  \toprule
    $n$ & \# Connected Graphs & \# Nontrivial & \# 2-node\\
    \midrule
$3$ & $2$ & $0$ & $0$\\
$4$ & $6$ & $2$ & $1$\\
$5$ & $21$ & $12$ & $2$\\
$6$ & $112$ & $80$ & $10$\\
$7$ & $853$ & $712$ & $56$\\
\bottomrule
  \end{tabular}
\end{table}

\begin{table}
\centering
  \caption{Number of trees with nontrivial and 2-node IRV exclusion zones for small $n$.}\label{tab:trees}
  \vspace{0.5em}
  \begin{tabular}{lrrr}
  \toprule
    $n$ & \# Trees & \# Nontrivial & \# 2-node\\
    \midrule
$3$ & $1$ & $0$ & $0$\\
$4$ & $2$ & $1$ & $1$\\
$5$ & $3$ & $2$ & $1$\\
$6$ & $6$ & $5$ & $2$\\
$7$ & $11$ & $10$ & $3$\\
$8$ & $23$ & $22$ & $6$\\
$9$ & $47$ & $45$ & $10$\\
$10$ & $106$ & $101$ & $21$\\
$11$ & $235$ & $229$ & $41$\\
$12$ & $551$ & $532$ & $73$\\
$13$ & $1301$ & $1280$ & $160$\\
$14$ & $3159$ & $3062$ & $330$\\
$15$ & $7741$ & $7572$ & $672$\\
\bottomrule
  \end{tabular}
\end{table}

\subsection{Families of graphs with known IRV exclusion zones}

While finding exact IRV exclusion zones in graphs is hard in general, we can identify them in some families of larger graphs.  First, in any graph where every pairwise contest is a tie (i.e., every node is a weak Condorcet winner), the minimal exclusion zone is trivial by \Cref{prop:exclusion-facts}. Such graphs include complete graphs, cycles, and all other distance-regular graphs~\cite{van2016distance}. Meanwhile we are also able to show examples of graphs with nontrivial IRV exclusion zones: paths, bistars, and even-height perfect binary trees. 

The bistar graph, consisting of two star graphs of equal size whose centers are joined by an edge, is the simplest example of a graph with the smallest possible IRV exclusion zone: two nodes. 

\begin{proposition}
  The minimal IRV exclusion zone of a bistar graph consists of its two hub nodes.
\end{proposition}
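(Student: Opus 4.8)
The plan is to show that $S=\{h_1,h_2\}$ is the minimal IRV exclusion zone of the bistar, writing $h_1,h_2$ for the two hubs (joined by an edge) and $a_1,\dots,a_k$, $b_1,\dots,b_k$ for the $k$ leaves attached to $h_1$ and to $h_2$ respectively. First I would record the path distances once and for all: each hub is at distance $1$ from its own leaves and from the other hub, at distance $2$ from the other hub's leaves; same-side leaves are at distance $2$ and opposite-side leaves at distance $3$. I then reduce the whole statement to the single claim that $S$ is an exclusion zone. Indeed, the structural results of this section (\Cref{prop:nested} and the intersection characterization following it) give a unique minimal exclusion zone $S^{*}$ equal to the intersection of all exclusion zones, so $S^{*}\subseteq S$ the moment $S$ is known to be an exclusion zone; since $S^{*}$ is nonempty and, as noted in the text, no one-node set can be an exclusion zone of a graph, this forces $S^{*}=S$. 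Hence it suffices to prove $S$ is an exclusion zone.

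To prove $S$ is an exclusion zone I take any configuration $C$ with a hub present, say $h_1\in C$, and argue that the IRV winner is a hub. My strategy is a ``last hub survives'' invariant: whenever exactly one hub is among the active candidates and at least one other candidate remains, that hub lies strictly above the minimum vote share and so is never the eliminated candidate, \emph{regardless of how elimination ties are broken}. Granting this, the number of hubs among active candidates can fall from two to one but never from one to zero, so at least one hub is always present; as each round removes exactly one candidate, the final survivor must be a hub.

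The technical heart is the vote-share estimate behind the invariant. Fix a round with active candidate set $\{h_1\}\cup L$ for a nonempty set of leaves $L$, and set $\alpha=|L\cap\{a_i\}|$ and $\beta=|L\cap\{b_j\}|$. Using that every non-candidate $a$-leaf sits at distance $1$ from $h_1$ and $\ge 2$ from all other candidates (so it gives $h_1$ its full vote), while the voter at $h_2$ and each non-candidate $b$-leaf split their unit of weight equally among $h_1$ and the candidate $b$-leaves, one computes $\mathrm{votes}(h_1)=1+(k-\alpha)+\frac{k-\beta+1}{1+\beta}$. I would then exhibit a leaf strictly below $h_1$: if $\alpha\ge 1$, any candidate $a$-leaf receives exactly its own vote, namely $1<\mathrm{votes}(h_1)$ (since the last term is positive); if $\alpha=0$, any candidate $b$-leaf receives $1+\frac{k-\beta+1}{1+\beta}=\mathrm{votes}(h_1)-k<\mathrm{votes}(h_1)$. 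In either case $h_1$ is not a minimizer, establishing the invariant.

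The main obstacle is entirely in this bookkeeping: getting the Split-IRV tie-splitting exactly right (each indifferent voter divides one unit equally among its closest candidates) and then checking the strict inequalities in both cases, including boundary values such as $\alpha=\beta=k$ where the positive slack shrinks to $\tfrac{1}{1+k}$. Once the invariant holds the conclusion is immediate. As an independent check on the nontrivial inclusion, \Cref{prop:exclusion-facts}(a) already yields $\{h_1,h_2\}\subseteq S^{*}$: a short pairwise count shows each hub beats every leaf by a strict majority and ties the other hub, so each hub is a (weak) Condorcet winner and hence lies in the minimal exclusion zone, confirming the result from the other side.
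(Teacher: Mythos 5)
Your proof is correct and follows essentially the same strategy as the paper's: track the last remaining hub candidate and show it always has strictly larger vote share than some leaf candidate (leaf candidates on the hub's side get only their own vote, while the hub collects the unoccupied leaves on its side plus a share of the opposite side), so it can never be eliminated regardless of tie-breaking. Your version is somewhat more careful than the paper's—your explicit Split-IRV vote-share formula $1+(k-\alpha)+\tfrac{k-\beta+1}{1+\beta}$ is exactly right (the paper's phrase about leaf candidates "equally splitting" unoccupied same-side leaves with the hub is loose, since the hub, being strictly closer, actually receives all of those votes), and you make explicit the minimality step via the no-singleton-exclusion-zone fact, which the paper leaves implicit.
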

\begin{proof}
Consider the point when only one candidate in the hub nodes remains during IRV elimination. Orient the graph so that this candidate occupies the left hub node. If the leaves on one side are all unoccupied, then the hub candidate has vote share greater than $n/2$ and wins. So suppose there is at least one candidate remaining in the leaves on both sides. Any candidate in a left leaf has a strictly smaller vote share than the hub candidate, since they equally split the available vote share in unoccupied left leaves, but the hub node also gets votes from the other unoccupied hub. Thus the hub candidate is not eliminated next; this continues to be true until all left leaves are unoccupied, at which point the hub candidate wins.  
\end{proof}

The result of \citet{tomlinson2024moderating} on uniform 1-Euclidean preferences extends naturally to path graphs, although with some additional messiness caused by discretizing the interval. Recall that a path is a graph with edges $\{1, 2\}, \{2, 3\}, \dots, \{n-1, n\}$.

\begin{proposition}
  
The minimal IRV exclusion zone of the path on $n$ nodes is $S = \{ \lceil n/6 + 1/2 \rceil, \dots,  n - \lceil n/6 + 1/2 \rceil + 1 \}$.
\end{proposition}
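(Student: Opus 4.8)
Write $k=\lceil n/6+1/2\rceil$, so $S=\{k,\dots,n-k+1\}$, and note this set is symmetric about the center of the path. The plan is to prove the two directions separately. First I would show $S$ is an exclusion zone; by the characterization underlying \Cref{thm:fpt}, it suffices to show that whenever exactly one candidate $u\in S$ remains during IRV elimination, $u$ does not have the (weakly) smallest plurality share, so no tiebreak can remove it and the winner stays in $S$. Second I would show minimality, i.e.\ $S\subseteq S^*$, by placing each node of $S$ into the minimal exclusion zone using \Cref{prop:exclusion-facts}. Throughout I use that the median node(s) of the path form the Condorcet winner(s) (hence lie in $S^*$) and that distances are just $|i-j|$, with split votes at exact midpoints.

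\textbf{Part I: $S$ is an exclusion zone.} Let $u\in S$ be the last surviving candidate in $S$, with nearest surviving candidate neighbors $\ell\le k-1$ on the left and $\rho\ge n-k+2$ on the right (if a side is empty, $u$'s cell runs to the boundary, only helping $u$). Since $u$'s Voronoi cell is determined solely by $\ell,\rho$, the share $W_u$ is smallest when $\ell=k-1$ and $\rho=n-k+2$. In this tightest case $u$ collects the nodes strictly between the midpoints $(\ell+u)/2$ and $(u+\rho)/2$, plus half a vote at any midpoint landing on a node. The key discrete fact is that across all parities of $k+u$ and $n+k+u$ the integer-rounding and the half-votes cancel exactly, giving the clean value $W_u=(n-2k+3)/2$. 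Then $W_u>n/3$ reduces to $6k<n+9$, i.e.\ $k<n/6+3/2$, which holds precisely because $\lceil n/6+1/2\rceil<n/6+3/2$; this is exactly why the ceiling appears in the statement. Finally, if there are $t\ge 2$ other candidates, then $\min_{\text{others}}\le (n-W_u)/t\le (n-W_u)/2<W_u$, so $u$ is strictly above the minimum and survives under every tiebreak; if $t=1$ the single remaining competitor lies in $\overline S$ and is less central than $u$, so $u$ holds a strict majority. Hence some candidate in $S$ always survives, and the winner lies in $S$.

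\textbf{Part II: minimality.} By \Cref{prop:exclusion-facts}(a) the center $c$ (median node) lies in $S^*$. For each left-half node $s\in\{k,\dots,c\}$ I would construct an election that contains $c$ and in which $s$ can win, so that \Cref{prop:exclusion-facts}(b) forces $s\in S^*$; symmetry then gives the right half and hence $S\subseteq S^*$. Combined with $S^*\subseteq S$ from Part I, this yields $S^*=S$, so $S$ is minimal. The construction flanks the center so that $c$ is eliminated first while $s$ absorbs the left portion, finishing in a position $s$ can win under some tiebreak: for example, on the $6$-path the configuration $\{2,4,5\}$ eliminates the central node $4$ and leaves $2$ tied with $5$, placing $k=2$ into $S^*$. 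The same ``squeeze the center, let $s$ grab the left'' template generalizes, with the extreme node $s=k$ being the binding case, mirroring the boundary $1/6$ in the continuous result of \tuk.

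\textbf{Main obstacle.} The two routine-sounding computations are in fact the crux, and both are ``the messiness caused by discretizing the interval.'' In Part I one must verify that the split-vote accounting really collapses to $W_u=(n-2k+3)/2$ in every parity case rather than merely $W_u\gtrsim n/3-1$ (the crude continuous bound loses exactly the additive constant that matters). The harder obstacle is Part II: exhibiting, for every $s$ down to $s=k$, an explicit election containing $c$ in which $s$ wins. Unlike a generic interior node, the endpoint $k$ requires a carefully balanced squeeze that only barely succeeds—and fails for $k-1$, which is what pins the left boundary at exactly $\lceil n/6+1/2\rceil$. Checking these constructions will require a short case analysis on $n \bmod 6$ and on the parity of $n$ (single versus doubled central node).
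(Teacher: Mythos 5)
Your proposal is correct, and Part I is essentially the paper's argument: reduce to the last surviving candidate in $S$, observe the share is minimized with opponents at $k-1$ and $n-k+2$, compute the exact value $(n-2k+3)/2$ (parity-free under the split-vote convention, as you note), and conclude it exceeds $n/3$ from $\lceil n/6+1/2\rceil < n/6+3/2$, with the one-sided and two-candidate cases handled by a majority argument. Part II is where you diverge, and your route is more laborious than the paper's. You plan one ``squeeze'' election per node $s\in\{k,\dots,c\}$, each containing the Condorcet winner and won by $s$. The paper needs only one squeeze: the configuration $\{\lceil n/2\rceil,\,k,\,n-k+1\}$, in which the center can be eliminated first, so the two boundary nodes (which then tie by symmetry) can win and land in $S^*$ via \Cref{prop:exclusion-facts}; every interior node $s$ then enters $S^*$ with no further vote-share computation, because $s$ beats $k$ in the two-candidate election $\{k,s\}$ and \Cref{prop:exclusion-facts}(b) applies again. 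That is the simplification you miss: once the extreme point of $S$ is in $S^*$, pairwise contests propagate membership inward, so only your ``binding case'' $s=k$ ever requires a calculation. Your per-node template does work if carried out---with flankers $\{s,\lceil n/2\rceil,\,n-s+1\}$ the center's share is weakly smallest exactly when $3s+\lceil n/2\rceil\ge n+2$, which holds for all $s\ge k$ and fails at $s=k-1$, as you predicted---and the case analysis on $n\bmod 6$ you anticipate is avoidable: $3k+\lceil n/2\rceil\ge 3(n/6+1/2)+n/2=n+3/2$, and integrality of the left side upgrades this to $\ge n+2$. One fine point: in some congruence classes (e.g., $n\equiv 2,3\pmod 6$) that inequality is an equality, so the center is eliminated only via a tiebreak; your ``can win under some tiebreak'' phrasing is the right one and is exactly what the definition of exclusion zones (which must hold regardless of tie-breaking) and \Cref{prop:exclusion-facts}(b) permit.
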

\begin{proof}
First we'll show $S$ is an exclusion zone. Suppose only one candidate $x$ remains in $S$. If all the remaining candidates are on one side of $S$, then $x$ has more than half the vote share and wins. So suppose there are candidates on both sides of $S$. Let $\ell = \lceil n/6 + 1/2 \rceil$ and $r = n - \ell + 1$. The vote share of $x$ is minimized if there are candidates at $\ell-1$ and $r+1$, in which case $x$'s vote share is $1$ (the node $x$ is at) plus half of the remaining vote share between $\ell-1$ and $r+1$:
\begin{align*}
  v(x) &\ge 1 + (r - \ell +1 - 1 ) / 2\\
  &= 1 + \left(n - \lceil n/6 + 1/2 \rceil + 1  - \lceil n/6 + 1/2 \rceil  \right)/2\\
  &=  n/2 - \lceil n/6 + 1/2 \rceil + 3/2\\
  &>  n/2 - \lceil n/6 + 3/2 \rceil + 3/2\\
  &\ge  n/2 -  n/6 - 3/2 + 3/2\\
  &= n/3.
\end{align*}
Thus $x$ has more than a third of the remaining vote share and cannot be eliminated next (as there are at least three candidates remaining), so $S$ is an exclusion zone. 

Next, we'll show $S$ is minimal.  Consider the election with three candidates at $\lceil n/2 \rceil$, $\lceil n/6 + 1/2 \rceil$, and $n - \lceil n/6 + 1/2 \rceil + 1$. The vote share of the candidate at $\lceil n / 2 \rceil$ is
\begin{align*}
  1 + \left(n - \lceil n/6 + 1/2 \rceil - 1 - \lceil n/6 + 1/2 \rceil  - 1\right) /2 &= n/2 - \lceil n/6 + 1/2 \rceil.
\end{align*}
Meanwhile, the vote share of the candidate at $\lceil n/6 + 1/2 \rceil$ is
\begin{align*}
  \lceil n/6 + 1/2 \rceil + (\lceil n/2 \rceil - 1 - \lceil n/6 + 1/2 \rceil) / 2 &= (\lceil n/2 \rceil - 1 + \lceil n/6 + 1/2 \rceil) / 2.
\end{align*}
This is larger, which can be seen since $(\lceil n/2 \rceil - 1 + \lceil n/6 + 1/2 \rceil) / 2 - (n/2 - \lceil n/6 + 1/2 \rceil) = (\lceil n/2 \rceil + 3\lceil n/6 + 1/2 \rceil - n - 1) / 2$ and $\lceil n/2 \rceil + 3\lceil n/6 + 1/2 \rceil \ge n + 3/2$. The vote share of the rightmost candidate is at least as large as the leftmost candidate's, so the center candidate is eliminated. Moreover, the center node is a Condorcet winner (this is the center for odd $n$ and the left center for even $n$). By the Condorcet Chain Lemma, this configuration thus shows that the nodes $\lceil n/6 + 1/2 \rceil$ and $n - \lceil n/6 + 1/2 \rceil + 1 $ are in the minimal exclusion zone, since either can win in this configuration after a tiebreak. All nodes closer to the center must also be in the minimal exclusion zone, since they defeat those two nodes in pairwise contests. 
\end{proof}

Next, we consider a case that demonstrates how exclusion zones can behave in unexpected ways: perfect binary trees. A height-$h$ perfect binary tree can be defined recursively: a height-0 perfect binary tree is a single node and a height-$(h+1)$ perfect binary tree is a node with two neighbors, each of which is the root of a height-$h$ perfect binary tree. We show that perfect binary trees have nontrivial IRV exclusion zones if and only if they have even (and nonzero) height. 

\begin{theorem}\label{thm:binary-tree-odd}
  Perfect binary trees with odd height have no nontrivial IRV exclusion zones.
\end{theorem}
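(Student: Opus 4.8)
```latex
\textbf{Plan of attack.} The goal is to show that a perfect binary tree of odd height $h$ has no nontrivial IRV exclusion zone, i.e.\ its minimal exclusion zone is all of $V$. Given the machinery already developed, the natural strategy is to invoke the Condorcet Chain Lemma (\Cref{lemma:chain}): I would construct a sequence of candidate configurations starting from one in which a Condorcet winner loses and ending in one won by a Condorcet loser, with each configuration's winner feeding into the next. So the first order of business is to identify the Condorcet winners and Condorcet losers of the odd-height perfect binary tree. The root is the natural candidate for a Condorcet winner by symmetry, and the leaves are the natural Condorcet losers; I would verify this by a vote-counting argument analogous to \Cref{lemma:hyperrect-condorcet}, comparing the number of nodes closer to the root than to any challenger. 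The fact that $h$ is odd should matter here: for odd height the tree has a natural parity structure that makes the root strictly beat (or at least tie appropriately with) every other node in a pairwise contest, whereas for even height (the complementary \Cref{thm:binary-tree-odd} companion result) this breaks down.

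\textbf{Building the chain.} Having pinned down the root as a Condorcet winner and a leaf as a Condorcet loser, I would exhibit a chain of elections moving ``outward'' from the root toward a leaf, one level at a time. The cleanest approach exploits the tree's recursive self-similarity: in a configuration with the root and a small number of strategically placed candidates (for instance one candidate in each of the two subtrees hanging off the root, plus the root itself), I expect the root to be eliminated first because the two subtree candidates collectively split off more than half the votes, so one of them can win after tie-breaking. This establishes that a depth-$1$ node lies in the minimal exclusion zone. I would then repeat this construction recursively within a subtree, each time placing candidates so that the current ``frontier'' node is squeezed out by two deeper candidates that combine their vote shares. Because the tree is perfect, each subtree of a given height is identical, so a single local gadget can be reused at every level; the chain then has length $O(h)$, descending from the root to a leaf.

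\textbf{The main obstacle.} The delicate part is the vote-share bookkeeping at each step of the chain: I need to guarantee that the target candidate $w_i$ really is the one eliminated first (smallest plurality share among the configuration), and that its eliminator $w_{i+1}$ can indeed win the resulting runoff. In a tree with distance ties, each node splits its unit vote among all equidistant closest candidates, so I must carefully account for how votes are split among the chosen candidates at every distance shell. The essential inequality is that two candidates placed symmetrically one level deeper capture, between them, strictly more than the frontier candidate retains, which requires comparing subtree sizes—here the geometric growth of perfect binary tree levels ($2^k$ nodes at depth $k$) should make the deeper candidates' combined share dominant. I would make this precise by computing, for the frontier node at depth $k$, the vote mass it draws from the subtree below it versus the mass the two deeper candidates draw, and showing the frontier node's total share is minimized and falls below at least one competitor's. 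I anticipate the odd-height hypothesis enters exactly in guaranteeing the chain can terminate cleanly at a leaf (a Condorcet loser) without getting ``stuck'' at an intermediate node that is itself a Condorcet winner—the parity argument from the hyperrectangle-style Condorcet computation is what I expect to do the real work, and getting the tie-handling exactly right in the discrete graph metric is where the proof will demand the most care.
```
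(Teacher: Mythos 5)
Your framework (Condorcet Chain Lemma, root as Condorcet winner, leaves as Condorcet losers) matches the paper's, and your first configuration works: with candidates at the root and its two children, the root gets only its own vote and is eliminated. But the recursive descent that forms the core of your plan fails at the very next step. Once the frontier node $u$ is at depth $1$, any configuration in which all other candidates lie inside $u$'s subtree gives $u$ every vote from outside that subtree plus its own: that is $2^h + 1$ votes out of $2^{h+1}-1$ total, a strict majority. So $u$ can never be eliminated by candidates placed ``one level deeper,'' and the recursion cannot proceed past depth $1$. (Indeed, the paper's proof of the companion even-height theorem is built precisely on this obstruction: an internal candidate survives until all leaf candidates in successively larger enclosing subtrees are exhausted.) A secondary error: your elimination criterion---that the two deeper candidates ``collectively split off more than half the votes''---is not how IRV works; elimination is determined by the \emph{minimum individual} plurality share, and combined shares are irrelevant. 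It happens to coincide with the right answer at the root step only because the root's share there is $1$.

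The paper's proof avoids any level-by-level descent and uses a single global configuration, which is where the odd-height hypothesis actually enters. Writing $h = 2c+1$, the tree decomposes into a top perfect binary tree of height $c$ together with $2^{c+1}$ lower perfect binary trees of height $c$ glued below its leaves. Placing one candidate at the root and one at an arbitrary leaf of each lower subtree makes every candidate's vote share exactly $2^{c+1}-1$ (each candidate captures precisely its own height-$c$ subtree), so the root can be eliminated by tiebreak and, by symmetry, a leaf---a Condorcet loser---can win; the chain of \Cref{lemma:chain} then has length one. Note also that your expectation that parity matters for identifying the Condorcet winner and losers is misplaced: the root is a Condorcet winner and the leaves are Condorcet losers in perfect binary trees of \emph{every} height. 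Parity matters only because the decomposition above produces an exact tie when $h$ is odd, whereas for even height the analogous configurations strictly favor the root.
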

\begin{proof}
    Suppose the height of the tree is $h = 2c + 1$. Such a perfect binary tree consists of a perfect binary tree of height $c$ where we add to each leaf two subtrees which are also perfect binary trees of height $c$. Place one candidate at the root and one candidate in an arbitrary leaf of each of the lower subtrees of height $c$ that we glued together. In such a configuration, the vote share of every candidate is $2^{c+1}-1$: every candidate gets all of the votes from within their perfect binary subtree. Thus, the root can be eliminated next by tiebreak. Since the leaves are Condorcet losers and the root is a Condorcet winner, the minimal IRV exclusion zone is trivial by \Cref{lemma:chain}.
\end{proof}
With even-height trees, any such attempt at a configuration does not cause a tie, and instead favors the root. Instead, we can show using an inductive argument that the set of internal nodes is the minimal IRV exclusion zone of even-height perfect binary trees.

\begin{theorem}
  For perfect binary trees with even height $h > 0$, the minimal IRV exclusion zone is the set of internal nodes.
\end{theorem}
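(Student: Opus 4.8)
The plan is to show two things: that the set of internal nodes $I$ is an IRV exclusion zone of an even-height perfect binary tree, and that it is minimal. For the minimality direction, I would apply the Condorcet Chain Lemma in a manner that parallels the odd-height proof of \Cref{thm:binary-tree-odd}, but chains only down to the internal nodes rather than all the way to the leaves. The root is a Condorcet winner (by symmetry it ties or beats every other node) and the leaves are Condorcet losers; to certify that every internal node lies in the minimal exclusion zone, I would construct configurations in which the currently-known exclusion-zone node loses (e.g.\ via a balanced placement of candidates in disjoint isomorphic subtrees forcing a tie that a tiebreak resolves against the internal node in question), working outward from the root toward the deepest internal nodes. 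Critically, I would \emph{stop} the chain at the deepest internal layer: any configuration attempting to eliminate a deepest internal node in favor of a leaf should fail because, unlike the odd-height case, the symmetric subtree decomposition no longer produces an exact tie and instead favors the internal node. This asymmetry between even and odd height is exactly the phenomenon flagged in the remark preceding the theorem.

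The harder direction is showing $I$ is an exclusion zone, which I would prove by induction on the (even) height $h$. The inductive claim needs to be strong enough to compose: something like ``whenever at least one internal node is a candidate, the last internal node to survive IRV elimination has strictly more than one-third of the remaining vote share at the moment only it and two other candidates remain, so it cannot be eliminated next.'' I would set up the induction so that a height-$h$ tree is the root plus two height-$(h-1)$ subtrees, but since $h$ is even, $h-1$ is odd, so I would more likely induct in steps of two, peeling off two levels at a time, or carry a joint inductive hypothesis covering both parities with different conclusions. The key structural fact to exploit is that when an internal candidate $x$ survives, all candidates in the leaves hanging off $x$ (and more generally in $x$'s subtree once its internal competitors are gone) have already been routed to $x$ or eliminated, so $x$ accumulates the full vote mass of a large subtree.

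The central step, and the main obstacle, is the vote-share accounting that shows the last surviving internal node always has a strict majority (or at least a non-eliminable share) once all its internal rivals are gone. Concretely I would orient attention to the moment a single internal candidate $x$ remains, and argue that the leaves and internal positions route their votes so that $x$ controls more than half the electorate: every voter whose nearest candidate is $x$ or lies in a portion of the tree ``behind'' $x$ relative to the other candidates contributes, and I must show these sum past $n/2$. The delicate case is when candidates sit symmetrically on both sides of $x$, splitting the tree; here the even-height parity must be used to guarantee that $x$'s home subtree mass plus the contested ties tip strictly in $x$'s favor rather than producing a dead tie (which is precisely what sinks the odd-height trees). I would formalize this with counts of nodes at each depth, using that a height-$h$ perfect binary tree has $2^{h+1}-1$ nodes and $2^{k}$ nodes at depth $k$, and track how a deepest-internal candidate's guaranteed vote share compares to the worst-case adversarial placement of leaf and sibling-subtree candidates. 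Getting this inequality to come out strict in the even case, and exhibiting the tie-producing configuration that makes the chain stop exactly at the internal layer in the minimality argument, is where the real work lies; the rest is bookkeeping enabled by \Cref{prop:exclusion-facts} and \Cref{lemma:chain}.
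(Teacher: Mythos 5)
Your minimality half is essentially the paper's argument: the root is a Condorcet winner, a configuration built from the odd-height construction of \Cref{thm:binary-tree-odd} (with one extra layer appended at the bottom) lets the deepest internal nodes defeat the root, and \Cref{prop:exclusion-facts}(b) then propagates membership in the minimal zone to all internal nodes via pairwise contests. One mechanism you describe is off, though: in even-height trees the relevant configuration is \emph{not} a balanced tie resolved against the root --- each height-1 candidate owns a strictly larger subtree ($2^{c+2}-1$ nodes versus the root's $2^{c+1}-1$), so the root loses outright. Indeed, the remark preceding the theorem says exactly that balanced tie-forcing placements fail in even height, so your chain must rely on these strict-defeat configurations rather than tiebreaks.

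The genuine gap is the other direction --- showing the internal nodes actually form an exclusion zone --- which is the core of the theorem and which your proposal leaves as a plan. You propose structural induction on the height, peeling off two levels at a time, but you never formulate an inductive hypothesis that composes: IRV vote shares are global, so an election restricted to a height-$(h-2)$ subtree does not play out like an election in a standalone height-$(h-2)$ tree (voters outside the subtree pour across its boundary), and nothing in your sketch addresses this. Moreover, both versions of your vote-share claim fail. The ``strict majority once all internal rivals are gone'' claim is false: in the height-2 tree, with candidates at one height-1 node $u$ and all four leaves, $u$ has vote share $2$ out of $7$ at that moment (the root votes for $u$; the opposite height-1 node splits between its two leaf children). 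And the ``more than one-third when only three candidates remain'' claim is too weak, since it gives no protection in the earlier rounds when many leaf candidates are still present, which is exactly when $u$ must be shown to survive. The paper's induction is not on tree height at all but on the ancestor level of the fixed last-surviving internal candidate $u$ within the fixed tree: the base case shows $u$ strictly beats any leaf candidate in $u$'s own subtree, via a parity case analysis on $u$'s height (using, when $u$ has odd height, that $u$ cannot be the root because the whole tree has even height, so it collects ancestor votes), and the inductive step shows that once the leaf candidates in the subtree of $u$'s $i$th ancestor are exhausted, $u$ strictly beats every leaf candidate in the subtree of the $(i+1)$th ancestor. Without that vote-share argument (or an equivalent one), the hard half of the theorem remains unproven.
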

\begin{proof}
Suppose only one candidate at an internal node remains during the running of IRV; call this node $u$. We'll show inductively that $u$ cannot be eliminated before the candidates in the leaves of the subtree rooted at $u$'s $i$th ancestor (by induction on $i$, stepping up a level each time we exhaust candidates in the leaves in the current subtree). 

First, consider the base case, where there is a candidate remaining in a leaf $x$ of the subtree rooted at $u$ itself. 
\begin{enumerate}
  \item Suppose first that $u$ has even height $h = 2c$. Then there is a node $v$ halfway between $u$ and $x$ on the shortest path from $u$ to $x$, where $v$ has height $c$. There are $2^{c}-1$ nodes strictly closer to $x$ than to $u$ (those in the subtree under $v$ that contains $x$). Meanwhile, there are at least $2^{c}-1$ nodes strictly closer to $u$ than to any leaf (those in the subtree rooted at $u$ with height $\ge c+1$). The other nodes in the subtree rooted at $v$ are equidistant from $u$ and $x$, so they contribute equally to their vote shares. Meanwhile, $u$ also gets a contribution from the other height $c$ nodes in its subtree (which it may share with leaves), so $u$ has strictly larger vote share than $x$.
  \item Now suppose that $u$ has odd height $h = 2c+1$. All nodes in this subtree at height $c+1$ and higher, of which there are $2^{c+1} - 1$, are closer to $u$ than to any leaf. Meanwhile, consider the vote share of the leaf $x$. The only nodes closer to $x$ than $u$ are those in the subtree rooted at the height $c$ ancestor of $x$, as all paths to other nodes in the subtree pass through a node of height $c+1$ or higher. Thus the vote share of $x$ is at most $2^{c+1} -1$. Finally, since the tree has even height, $u$ cannot be the root, so $u$ gets additional votes from its ancestors and therefore has strictly larger vote share than $x$.
\end{enumerate}

Now we consider the inductive case, where there are no remaining candidates remaining in the leaves of the subtree rooted at $u$'s $i$th ancestor. We'll show than $u$ cannot be eliminated before a candidate in a leaf in the subtree rooted at $u$'s $(i+1)$th ancestor. This is easy, since if there are no candidates in the leaves of the subtree rooted at $u$'s $i$th ancestor, then every node in this subtree is closer to $u$ than to a leaf in the other subtree of $u$'s ($i+1$)th ancestor. Moreover, $u$'s $i+1$th ancestor also votes for $u$ over any leaf. Thus $u$ has a strictly larger vote share than any candidate in a leaf of the subtree rooted at $u$'s $(i+1)$th ancestor. The candidate at $u$ therefore survives elimination until no leaves remain in this subtree.

By induction, we thus find that $u$ first survives until all leaves below it are eliminated, then survives until all leaves in the subtree of its $i$th ancestor are eliminated, for $i=1, 2,\dots$. Once its $i$th ancestor is the root, $u$ is the last remaining candidate and wins. 

We now show that this exclusion zone is minimal. Consider the configuration from the proof of \Cref{thm:binary-tree-odd}, where the root can be eliminated against a collection of leaves. Taking this configuration and adding another layer of nodes at the bottom of the tree, we are left with a configuration in an even-height perfect binary tree where a collection of nodes at height $1$ defeat the root. Since the root is a Condorcet winner, these nodes at height $1$ must also be in the minimal exclusion zone by \Cref{prop:exclusion-facts}. We could have picked any height-1 nodes by symmetry, so in fact all height-1 nodes are in the minimal exclusion zone. Any of these height-1 nodes is defeated by a higher internal node in a pairwise contest, so every internal node is in the minimal exclusion zone by \Cref{prop:exclusion-facts}. 
\end{proof}

\subsection{IRV exclusion zones in real-world graphs}

\begin{figure}
\centering
\includegraphics[width=0.45\textwidth]{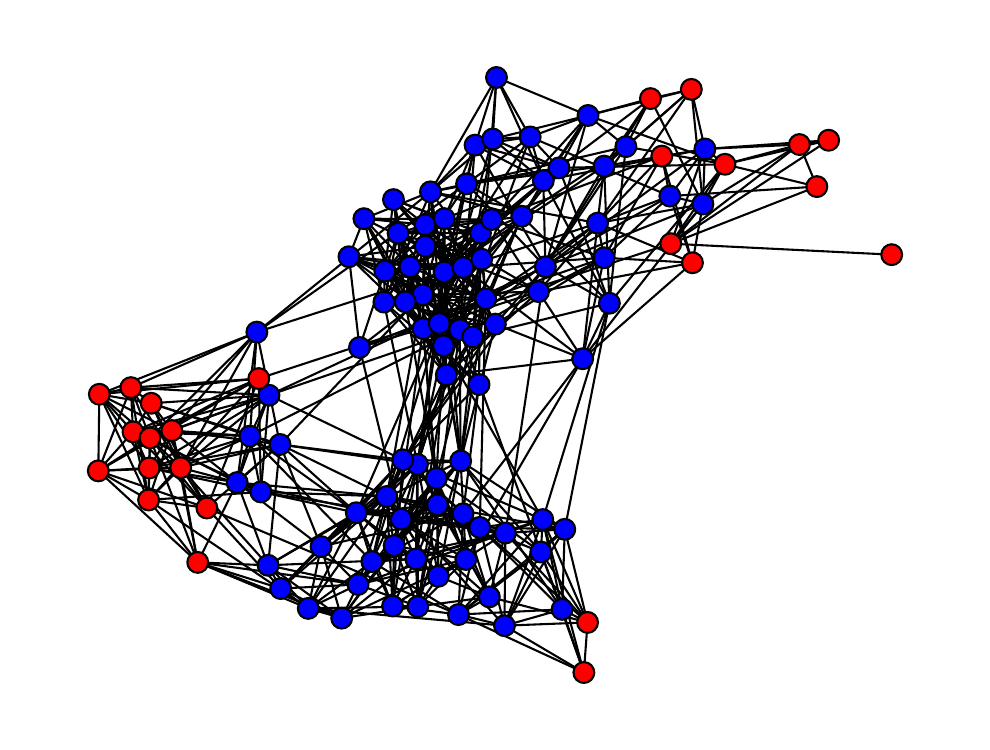}~
\includegraphics[width=0.45\textwidth]{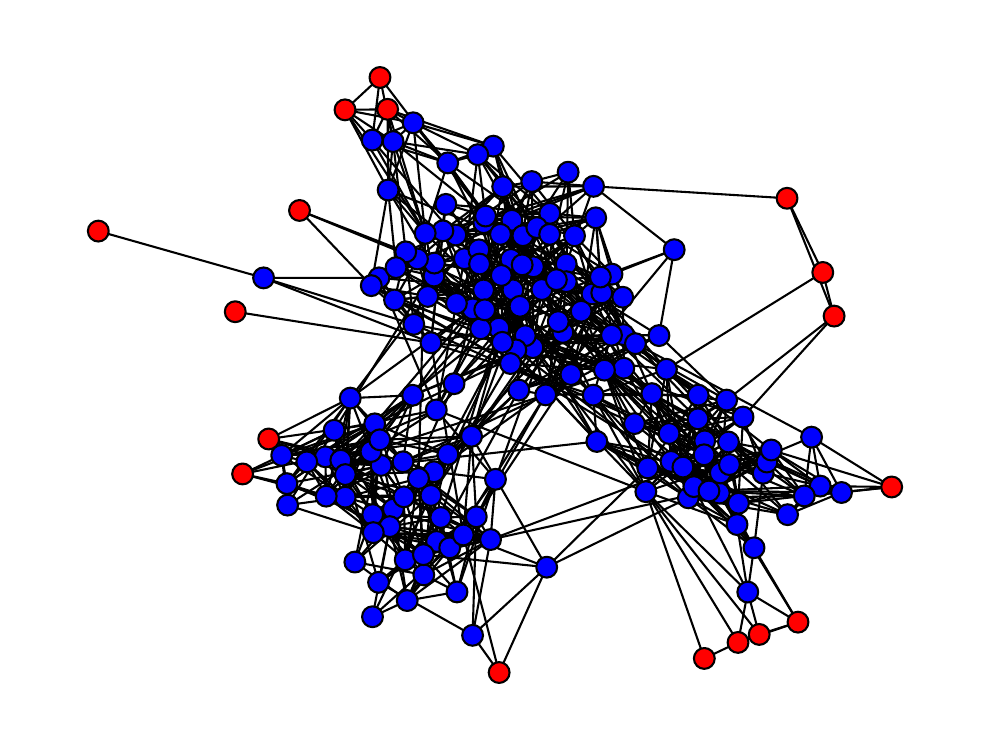}
\caption{Two of the school social networks from \citet{paluck2016data} (ID 5 on the left and 50 on the right) with probable 0.99-approximate IRV exclusion zones in blue and excluded nodes in red. Nodes that cannot win under IRV can include fringe nodes or entire communities.}  \label{fig:schools} 
\end{figure}

 We now ask whether real-world social networks, especially ones where we might expect distance-based preferences, have nontrivial IRV exclusion zones. To this end, we use a collection of 56 social networks from public middle schools in New Jersey~\cite{paluck2016changing,paluck2016data}, where edges represent pairs of students who spend time with each other. If IRV were used for a class president election under graph-based preferences, for instance, any exclusion zone in the graphs would tell us which nodes have a chance of winning and which could not win any election against a member of the exclusion zone. The graphs in this dataset have 110--844 nodes, with an average of 435 nodes, making them much too large for the exact algorithm we applied in the exhaustive search of small graphs. Instead, we apply our approximation algorithm from \Cref{thm:approx} with parameters $\epsilon = \delta = 0.01$ to find node sets that are $0.99$-approximate IRV exclusion zones with probability at least 0.99. We find that 23 of the 56 schools verifiably have no nontrivial IRV exclusion zones, owing to the fact that the algorithm returned the full node set (and any set returned by the algorithm must be a subset of the minimal exclusion zone). The remaining 33 school graphs (59\%) returned nontrivial probable approximate exclusion zones. Even when they are nontrivial, these (probable, approximate) exclusion zones tend to be large, comprising $92\%$ of the graph on average. See \Cref{fig:schools} for visualizations of two of the smaller schools, with the node sets returned by the approximation algorithm in blue. One of the node sets was a notable outlier, consisting of only 9\% of the graph (for school ID 26), whereas all others contained at least 77\% of the graph. We suspected this outcome was a failure of the randomized algorithm, which would be unsurprising given the number of runs (56 schools) and the 1\% failure rate we selected. Indeed, rerunning the algorithm with $\epsilon = \delta = 0.005$ for that single school returned a node set containing $98.5\%$ of the nodes. After this correction, the average size of the nontrivial approximate exclusion zones was $95\%$.
 
 In combination with the exhaustive search of small graphs and trees, these experiments on real-world networks demonstrate that nontrivial IRV exclusion zones frequently exist in practice. However, for real-world networks, they tend to be quite large, revealing only a small fraction of nodes that cannot win under IRV. Our approximation algorithm also allows us to explore IRV exclusion zones in graphs orders of magnitude larger than the exact algorithm.  See \url{https://github.com/tomlinsonk/irv-exclusion-zones} for all of our code and results, including Python and C implementations of our approximation algorithm.

\section{Discussion}

Exclusion zones are a relatively new addition to the landscape of voting theory, and their phenomenology is still not well understood following our initial results on one-dimensional IRV and plurality~\cite{tomlinson2024moderating}.  In this work, we have built up a much more general understanding of exclusion zones, answered some of the questions left open by our prior work, and provided new results about the complexity of computing exclusion zones.

In particular, we have characterized structural properties and resolved an open question about exclusion zones of IRV in higher-dimensional preference spaces, showing that with uniform voters over any $d$-dimensional hyperrectangle with $d >1$, there are no nontrivial IRV exclusion zones. In the graph voting setting, we showed that IRV exclusion zones are computationally difficult to identify, but also provided an efficient approximation algorithm. Through computational experiments, we have also discovered that nontrivial exclusion zones are relatively abundant in small graphs, and likely also present in many larger graphs arising in social network data.  While our focus has been on IRV, the ideas we develop around exclusion zones, such as the nesting of exclusion zones and the Condorcet Chain Lemma, apply to any voting system over any metric space. This suggests that exclusion zones may be a valuable tool in understanding the behavior of other voting systems, a promising line of inquiry for future work. 

The major open questions from our work concern which voting rules have nontrivial exclusion zones in which preference spaces. Perhaps the simplest metric preference space is the 1-Euclidean domain, where we already understand the exclusion zones of IRV, plurality, and Condorcet methods. Do other non-Condorcet methods, like approval, Borda, Coombs, and top-two runoff, have nontrivial exclusion zones with 1-Euclidean preferences? In higher dimensions, are there election sequences satisfying the Condorcet Chain Lemma for any of these other non-Condorcet methods? Whenever there are Condorcet winners, the minimal exclusion zone of any Condorcet method is the set of Condorcet winners, but there may still be something interesting to say about the particular nesting of all exclusion zones with Condorcet methods. For instance, in one dimension, the set of exclusion zones of Condorcet methods captures the essence of the Median Voter Theorem. When there is a Condorcet winning position in higher-dimensional preferences spaces (as with uniform voters over hyperrectangles), is there something we can say about the set of exclusion zones of Condorcet voting rules? 

In terms of improvements in proof techniques, the Condorcet Chain Lemma can only show that the minimal exclusion zone is trivial in preferences spaces where there are both Condorcet winners and losers. In cases with no Condorcet winners, is there another way to certify triviality? Note that there do exist preference spaces with no Condorcet winners and a trivial minimal exclusion zone: in work addressing different questions than exclusion zones, \citet[Figure 2]{skibski2023closeness} shows a 12-node graph with no Condorcet winner, and we have verified using the optimized exhaustive search from \Cref{sec:graph-general} that it has no nontrivial IRV exclusion zone. In the opposite case, when there are nontrivial exclusion zones, our main proof approach with IRV has exploited the structure of the voting system: to argue that $S$ is an exclusion zone, we show that the last candidate in $S$ cannot be eliminated by candidates outside of $S$. This takes advantage of the fact that IRV eliminates candidates one at a time, so if the election began with at least one candidate in $S$, then at some point only one will remain in $S$. However, as we have seen in one dimension, it is possible to argue through other means that voting systems like plurality and Condorcet methods have nontrivial exclusion zones, so we are optimistic that exclusion zones will prove useful beyond IRV in the ongoing endeavor to characterize voting systems.

\section*{Acknowledgments}
This work was supported in part by ARO MURI, a Simons Collaboration grant, a grant from the MacArthur Foundation, a Vannevar Bush Faculty Fellowship, AFOSR grant FA9550-19-1-0183, and NSF CAREER Award \#2143176.
Thanks to Kate Donahue, Jason Gaitonde, Raunak Kumar, Sloan Nietert, Katherine Van Koevering, and the attendees of the AMS Special Session on Mathematics of Decisions, Elections, and Games at the 2025 Joint Mathematics Meetings  for helpful discussions and feedback.

% Bibliography
\bibliographystyle{ACM-Reference-Format}
\bibliography{references}

% Appendix
\appendix
\section{Additional proofs}\label{app:proofs}

\subsection{IRV exclusion zones in higher dimensions}

\begin{proof}[Proof of \Cref{lemma:hyperrect-condorcet}]
First, we show that the center is a Condorcet winner. Consider any opponent at $x = (x_1, \dots, x_d)\ne c$. We need to show they lose to $c$ in a pairwise contest. Let $V$ be the region of voters that are closer to $x$ than $c$. Let $V'$ be the reflection of $V$ across $c$, mapping each point $y = (y_1, \dots, y_d) \in V$ to $y' = (w_1-y_1, \dots, w_d-y_d) \in V'$. Since $d_p(y, y') = 2d_p(c, y')$,
  \begin{align*}
    2d_p(c, y') = d_p(y, y')&\le d_p(x, y') + d_p(x, y) \tag{triangle inequality}\\
    \Rightarrow d_p(x, y') &\ge 2d_p(c, y') - d_p(x, y) \\
        &= d_p(c, y') + d_p(c, y) - d_p(x, y)\tag{since $d_p(c, y) = d_p(c, y')$}\\
        &> d_p(c, y') \tag{since $d_p(c, y) > d_p(x, y)$ by def.\ of $V$}
  \end{align*}
  Thus, $y'$ is closer to $c$ than to $x$. Thus, the vote share of $c$ is at least as large as $x$, since every point in $V'$ votes for $c$. To show that the vote share of $c$ is strictly larger, consider the $L_p$ ball around $c$ with radius $0 < \epsilon < \min\{\min_i\{w_i/2\}, d(c, x)/2\}$ (this is positive since $x$ is not at the center). All of these points in the ball are closer to $c$ than $x$, and none of them are in $V$ or $V'$ (since the ball is its own reflection about $c$). Thus $c$'s vote share is strictly larger than $x's$.
  
  Next, we show that the corners are Condorcet losers. WLOG, suppose the corner is the origin. Consider any opponent $x$ and let $V$ be the region of points closer to $o$ than to $x$. Let $V'$ be the reflection of $V$ across the midpoint of the hyperrectangle. For any point $y = (y_1, \dots, y_d) \in V$ (weakly) closer to the origin than to $x$, let $y' = (w_1 - y_1, \dots, w_d - y_d) \in V'$ be its reflection. We will show that $y'$ is (weakly) closer to $x$ than to the origin.

For any $i \in \{1, \dots, d\}$, consider the scalar function 
\begin{align*}
  f_i(x_i) = |x_i - y_i|^p + |w_i- x_i - y_i|^p 
\end{align*}
Both $|x_i - y_i|$ and $|w_i- x_i - y_i|$ are convex in $x_i$. Since $g(x_i) = x_i^p$ is convex and non-decreasing for $x_i \ge 0$, we then have that $f_i(x_i)$ is convex by convex composition rules. Moreover, $f_i(0) = f_i(w_i) = |y_i|^p + |w_i-y_i|^p$. Thus, $f_i$ achieves its maximum over $[0, w_i]$ at $x_i = 0$, by convexity. We therefore have:
\begin{align*}
&|x_i - y_i|^p + |w_i- x_i - y_i|^p  \le |y_i|^p + |1-y_i|^p \tag{$\forall i = 1, \dots, d$}\\
\Rightarrow \quad & \sum_{i = 1}^d |x_i - y_i|^p + \sum_{i = 1}^d|w_i- y_i - x_i |^p \le \sum_{i = 1}^d |y_i|^p + \sum_{i = 1}^d|1-y_i|^p \\
\Leftrightarrow \quad &  \|x - y\|_p^p + \|y'- x\|_p^p \le \|y\|_p^p + \|y'\|_p^p \\
   \Rightarrow \quad &  \|x - y\|_p^p + \|y'- x\|_p^p \le \|x - y\|_p^p + \|y'\|_p^p \tag{since $\|y\|_p \le \|x - y\|_p$}\\
   \Rightarrow \quad & \|y' - x\|_p^p \le \|y'\|_p^p. 
\end{align*}

That is, $y'$ is (weakly) closer to $x$ than to the origin. So, every point in $V'$ is (weakly) closer to $x$ than to $o$ (and the only point that could be shared by $V$ and $V'$ is the center). At worst, their vote shares are tied and the origin can lose. 
\end{proof}

To prove that the square has a no nontrivial IRV exclusion zone (\Cref{prop:l2-square}), we will need several lemmas establishing each step in the construction. In the first step, we find a configuration defeating the Condorcet winner (the center), which works for any $L_p$ distance.

\begin{lemma}\label{lemma:square-step1}
  With uniform $L_p$ voters over the unit square, the center $c = (1/2, 1/2)$ has the smallest vote share when there are four additional candidates $\ell = (x, 1/2), t = (1/2, 1-x), r = (1-x, 1/2),$ and $b = (1/2, x)$ for $1/2 - 1/\sqrt{5} < x < 1/2$ and any $p \ge 1$.
\end{lemma}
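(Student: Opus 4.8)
The plan is to reduce the entire computation to areas of axis-aligned rectangles, exploiting the fact that each of the four outer candidates shares exactly one coordinate with the center $c$. First I would observe that since $\ell = (x, 1/2)$ agrees with $c = (1/2, 1/2)$ in the second coordinate, for any voter $(u, w)$ the term $|w - 1/2|^p$ is common to $d_p((u,w), c)^p$ and $d_p((u,w), \ell)^p$; because $t \mapsto t^{1/p}$ is increasing, the voter is closer to $c$ than to $\ell$ exactly when $|u - 1/2| < |u - x|$, i.e.\ $u > (x + 1/2)/2$. Crucially this comparison is independent of $p$, so the $c$--$\ell$ bisector is the vertical line $u = (x + 1/2)/2 =: m$. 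The identical argument applied to $r$, $t$, and $b$ produces the axis-aligned bisectors $u = 1 - m$, $w = 1 - m$, and $w = m$, respectively.

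Next I would assemble the region on which $c$ is the closest of all five candidates. Since $c$ captures a voter exactly when it beats each of $\ell, r, t, b$, this region is the intersection of the four half-planes $m < u$, $u < 1 - m$, $m < w$, $w < 1 - m$, namely the central square $(m, 1-m)^2$, which lies inside the unit square because $0 < m < 1/2$. Its side length is $1 - 2m = 1/2 - x$, so $v(c) = (1/2 - x)^2$ (the equidistant bisector sets have measure zero and contribute nothing).

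Then I would use symmetry to determine the remaining shares. The configuration is invariant under the $90^\circ$ rotation about the center, $(u, w) \mapsto (1 - w, u)$, which is an $L_p$ isometry of the unit square (being a composition of a coordinate swap and a reflection, each of which preserves every $L_p$ metric) and which permutes $\ell, b, r, t$ cyclically while fixing $c$. Hence the four outer candidates share a common vote share, equal to $\bigl(1 - (1/2 - x)^2\bigr)/4$. Finally, $c$ has the strictly smallest share precisely when $(1/2 - x)^2 < \bigl(1 - (1/2 - x)^2\bigr)/4$, which rearranges to $5(1/2 - x)^2 < 1$, i.e.\ $1/2 - x < 1/\sqrt 5$, i.e.\ $x > 1/2 - 1/\sqrt 5$; together with $x < 1/2$ (needed so the outer candidates are distinct from $c$ and $m < 1/2$) this is exactly the hypothesized range.

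The main obstacle is really just the observation in the first paragraph: recognizing that the shared coordinate collapses each pairwise $L_p$ comparison to a one-dimensional one, making every bisector an axis-aligned line. This is what renders the result uniform in $p$ and the area computation completely elementary. After that, the only points requiring care are that the central region lies inside the unit square, that equidistant sets have zero area, and that the symmetry invoked is a genuine $L_p$ isometry and not merely an $L_2$ one.
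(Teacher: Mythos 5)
Your proof is correct and follows essentially the same route as the paper's: axis-aligned bisectors (valid for every $p\ge 1$) giving $c$ a central square of area $(1/2-x)^2$, equal shares for the four outer candidates by symmetry, and the resulting quadratic inequality $5(1/2-x)^2<1$. Your writeup merely supplies details the paper leaves implicit---the shared-coordinate cancellation that makes the bisectors $p$-independent, and the verification that the quarter-turn is an $L_p$ isometry.
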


\begin{proof}
For any $p \ge 1$, the points equidistant from the center and each of the four surrounding candidates form vertical and horizontal lines, with the center getting votes from all voters on $c$'s side of these lines. That is, from $c$'s perspective, the configuration looks as follows for any $p \ge 1$:
  \begin{center}
    \begin{tikzpicture}[x=6mm,y=6mm]
%     \fill[color=black!10] (4.5, 4.5) rectangle (5, 5);
%     \fill[color=black!10] (0, 0) rectangle (.5, .5);
%     \fill[color=black!10] (4.5, 0) rectangle (5, .5);
%     \fill[color=black!10] (0, 4.5) rectangle (.5, 5);

  \draw[draw=black, thick] (0, 0) rectangle ++(5, 5);
    
  \draw[dashed] (1.5, 3.5) -- (1.5, 1.5) -- (3.5, 1.5) -- (3.5, 3.5) -- (1.5, 3.5);
%   \draw[dashed] (1.5, 1.5) -- (0.5, 0.5);
%   \draw[dashed] (3.5, 3.5) -- (4.5, 4.5);
%   \draw[dashed] (1.5, 3.5) -- (0.5, 4.5);
%   \draw[dashed] (3.5, 1.5) -- (4.5, 0.5);

  \node[fill=red, circle,inner sep=2pt, label=$c$]  at (2.5, 2.5) {};
  \node[fill, circle,inner sep=2pt, label=0:$b$] (M) at (2.5, 0.5) {};
    \node[fill, circle,inner sep=2pt, label=0:$t$]  at (2.5, 4.5) {};
    \node[fill, circle,inner sep=2pt, label=$\ell$]  at (0.5, 2.5) {};
    \node[fill, circle,inner sep=2pt, label=$r$]  at (4.5, 2.5) {};
\end{tikzpicture}
  \end{center}
  
  The center candidate's vote share is $(1/2(1-2x))^2 = x^2 - x + 1/4$. Since the other candidate evenly divide the remaining vote share by symmetry, $c$ is first eliminated if $    x^2 - x + 1/4 < 1/5$. We can use the quadratic formula to find this occurs for $ 1/2 -  1/\sqrt{5} < x < 1/2 + 1/\sqrt{5}$. 
\end{proof}

Next, we find a configuration where the candidate $b$ is eliminated first.

\begin{lemma}\label{lemma:square-step2}
      With uniform $L_2$ votes over the unit square, a candidate at $b = (1/2, x)$ has the smallest vote share against candidates at $\ell = (0, 1/2)$ and $r = (1, 1/2)$ for $0 \le x \le 0.16$. 
\end{lemma}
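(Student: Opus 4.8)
The plan is to exploit the symmetry of the configuration and reduce the claim to a one-variable inequality on the area of $b$'s region. Since voters are uniform, vote share equals area. The candidates $\ell = (0, 1/2)$ and $r = (1, 1/2)$ are reflections of one another across the vertical line $u = 1/2$, and $b = (1/2, x)$ lies on that line, so the whole configuration is invariant under $u \mapsto 1 - u$. Hence $\ell$ and $r$ receive exactly equal vote shares, and it suffices to show that $b$'s vote share is strictly less than $1/3$; the three shares sum to $1$, so this forces $b$ to be the strictly smallest.

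First I would identify $b$'s cell by writing down the relevant perpendicular bisectors. A voter $(u, v)$ prefers $b$ to $\ell$ exactly when $(u - 1/2)^2 + (v-x)^2 \le u^2 + (v - 1/2)^2$, which simplifies, using $1 - 2x > 0$ for $0 \le x \le 0.16$, to $v \le (u - x^2)/(1 - 2x)$; the condition for $b$ over $r$ is the mirror image $v \le ((1-u) - x^2)/(1 - 2x)$. Thus $b$'s region is the ``tent'' lying under the lower envelope of these two lines (one increasing in $u$, one decreasing), with peak at $u = 1/2$. Before integrating, I would confirm that the ceiling $v \le 1$ of the square does not truncate the tent: the peak height is $(1/2 - x^2)/(1-2x)$, and this is at most $1$ precisely when $x^2 - 2x + 1/2 \ge 0$, i.e.\ for $x \le 1 - 1/\sqrt 2 \approx 0.293$, which holds throughout $[0, 0.16]$.

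With the ceiling inactive, the tent area integrates cleanly. By the left--right symmetry it equals twice $\int_{x^2}^{1/2} (u - x^2)/(1-2x)\, du$, yielding the closed form
\[
  A_b(x) = \frac{(1/2 - x^2)^2}{1 - 2x}.
\]
The remaining step is the scalar inequality $A_b(x) < 1/3$ on $[0, 0.16]$, equivalently $g(x) := (1 - 2x) - 3(1/2 - x^2)^2 > 0$. Expanding gives $g(x) = 1/4 - 2x + 3x^2 - 3x^4$, whose derivative $g'(x) = -2 + 6x - 12x^3$ is negative on $[0, 0.16]$ (since $6x - 12x^3 < 1$ there). So $g$ is decreasing on the interval and attains its minimum at the right endpoint $x = 0.16$, where a direct evaluation shows $g(0.16) > 0$; this establishes $A_b(x) < 1/3$ throughout and completes the argument.

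I expect the only delicate points to be verifying that the $v \le 1$ ceiling is genuinely inactive (so the simple tent-area formula applies rather than a truncated one) and handling the tightness at the right endpoint, since $A_b(0.16) \approx 0.331$ sits only just below $1/3$. The monotonicity of $g$ is what makes the endpoint the decisive case and keeps the final inequality a one-line endpoint check rather than a full quartic analysis.
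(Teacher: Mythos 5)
Your proof is correct and follows essentially the same route as the paper's: identify $b$'s region as a triangle via the perpendicular bisectors, compute its area as $(1/2-x^2)^2/(1-2x)$ (algebraically identical to the paper's $(1-2x^2)^2/(4-8x)$), confirm the ceiling $v\le 1$ is inactive for $x < 1 - 1/\sqrt 2$, and reduce the claim by symmetry to the quartic inequality $1/4 - 2x + 3x^2 - 3x^4 > 0$. The one genuine improvement is the last step: where the paper resolves this inequality by appealing to Mathematica, you give a self-contained argument---$g'(x) = -2 + 6x - 12x^3 < 0$ on $[0, 0.16]$, so only the endpoint value $g(0.16) \approx 0.0048 > 0$ needs checking---which removes the computer-algebra dependence.
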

  \begin{proof}
This configuration looks as follows:

  \begin{center}
   \begin{tikzpicture}[x=10mm,y=10mm]
  \draw[draw=black, thick] (0, 0) rectangle ++(5, 5);

  \draw[dotted] (0, 2.5) -- (2.5, 0.5);
  \draw[dotted] (5, 2.5) -- (2.5, 0.5);
    \draw[dotted] (2.5, 0.5) -- (5, 0.5);
  \draw[dotted] (0, 3.0625) -- (2.5, 3.0625);
  
\draw[dashed] (0.05, 0) -- (2.5, 3.0625);

\draw[dashed] (4.95, 0) -- (2.5, 3.0625);
  \draw[dashed] (2.5, 5) -- (2.5, 3.0625);
  
  \node[fill, circle,inner sep=2pt, label=45:$\ell$] (L) at (0, 2.5) {};
  \node[fill=red, circle,inner sep=2pt, label=270:$b$] (M) at (2.5, 0.5) {};
  \node[fill, circle,inner sep=2pt, label=0:$r$] (R) at (5, 2.5) {};

    \draw [decorate,decoration={
        brace,
        mirror,
        raise=1.5mm
    }] (5, 0) -- (5, 0.5) node [pos=0.5,anchor=west,xshift=3mm] {$x$}; 
    
    \draw [decorate,decoration={
        brace,
        raise=2mm
    }] (0, 0) -- (0, 3.0625) node [pos=0.5,anchor=east,xshift=-2.3mm] {$\frac{1 - 2 x^2}{2 - 4 x}$}; 

      \draw [decorate,decoration={
        brace,
        mirror,
        raise=2mm
    }] (0.07, 0) -- (2.5, 0) node [pos=0.5,anchor=north,yshift=-3mm] {$1/2 - x^2$}; 
\end{tikzpicture}
\end{center}
The midpoint of the line $b\ell$ is at $(1/4, 1/2 - (1/2 - x)/2) = (1/4, 1/4 + x/2)$ and the slope of $b \ell$ is $ -(1/2 - x) / (1/2) = 2x - 1$, so the slope of its perpendicular is $1/(1-2x )$. Thus, the height of the triangle of $b$'s voter region is $1/(1-2x)(1/2 - 1/4) + 1/4 + x/2 = \frac{1 - 2 x^2}{2 - 4 x}$ (plugging in $1/2$ into the point-slope form for the $\ell b$ perpendicular). Note that this triangular vote share configuration only occurs while $\frac{1 - 2 x^2}{2 - 4 x} < 1$. Solving for $x$ using the quadratic formula shows that this is satisfied while $x < 1 - 1/\sqrt{2}\approx 0.29$.

 Meanwhile, the $\ell b$ perpendicular intersects the $x$ axis at $z$ such that $1/(1-2x)(z - 1/4) + 1/4 + x/2 = 0$. Solving for $z$ yields $z = x^2$. Thus, $b$'s vote share is 
\begin{align*}
  \frac{1 - 2 x^2}{2 - 4 x}(1/2 - x^2) &= \frac{(1 - 2 x^2)^2}{4 - 8 x}
\end{align*}
This is the smallest if 
\begin{align*}
  &\frac{(1 - 2 x^2)^2}{4 - 8 x}< 1/3 \\
  \Leftrightarrow \quad &  ( 4 - 8 x)/3 - (1 - 2 x^2)^2 > 0\\
  \Leftrightarrow \quad & 1/3 - 8 x/3 + 4 x^2 - 4 x^4 > 0.
\end{align*}
In principle, we could solve this exactly with the (very messy) quartic formula; Mathematica gives that this is satisfied for $0 \le x \le 0.16$ (the exact bound is closer to 0.1645). 
\end{proof}

We can now eliminate $\ell$.

\begin{lemma}\label{lemma:square-step3}
    With uniform $L_p$ votes over the unit square, a candidate at $\ell = (0, 1/2)$ has the smallest vote share against candidates at $b = (0, x)$ and $t = (0, 1-x)$ for $1/6 < x < 1/2$ and any $p\ge 1$. 
\end{lemma}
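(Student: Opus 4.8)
The plan is to exploit the fact that all three candidates $b = (0,x)$, $\ell = (0,1/2)$, and $t = (0,1-x)$ share the same first coordinate, so the election collapses to a one-dimensional problem along the vertical axis. First I would observe that for a voter at $(v_1, v_2)$ the $L_p$ distance to a candidate at $(0, y)$ is $(v_1^p + |v_2 - y|^p)^{1/p}$; since the term $v_1^p$ is identical across all three candidates, comparing any two of these distances is equivalent to comparing $|v_2 - y|$. Hence a voter's preference among $b, \ell, t$ depends only on the second coordinate $v_2$, and this reduction holds uniformly for every $p \ge 1$, which is exactly why the conclusion can be stated $p$-independently.

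Next, because the voters are uniform over the unit square, the marginal distribution of $v_2$ is uniform on $[0,1]$, so each candidate's vote share equals the length of the interval of $y$-values for which that candidate is $y$-closest. The candidates occupy $y$-positions $x < 1/2 < 1-x$, and the two dividing points are the midpoints $(x + 1/2)/2 = 1/4 + x/2$ and $(1/2 + (1-x))/2 = 3/4 - x/2$. Reading off the three intervals, $b$ and $t$ each receive vote share $1/4 + x/2$ (by the symmetry about $y = 1/2$), while $\ell$ receives the middle interval, of length $1/2 - x$.

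Finally I would impose that $\ell$ has the strictly smallest share, namely $1/2 - x < 1/4 + x/2$, which rearranges to $1/4 < 3x/2$, i.e.\ $x > 1/6$. Combined with the hypothesis $x < 1/2$ (needed so that the three positions are distinct and correctly ordered), this gives precisely the claimed range $1/6 < x < 1/2$.

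This lemma is genuinely elementary, so I do not expect a real obstacle: the only step requiring any care is the opening reduction, where one must verify that the shared first-coordinate term cancels in the $L_p$ distance comparison for all $p \ge 1$. Once that observation is in place, the remainder is the standard one-dimensional three-candidate vote-share computation (the same kind used for the path graph and the one-dimensional interval), and the arithmetic is immediate.
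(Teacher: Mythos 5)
Your proof is correct and follows essentially the same route as the paper: both reduce the election to horizontal strips (the paper via its figure showing that the $L_p$ bisectors are horizontal lines, you via the explicit cancellation of the shared $v_1^p$ term), compute $\ell$'s share as $1/2 - x$, and derive $x > 1/6$. The only cosmetic difference is that you compare $\ell$'s share directly to $b$'s and $t$'s share of $1/4 + x/2$, while the paper compares it to the threshold $1/3$ using the symmetry of $b$ and $t$; these conditions are equivalent.
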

\begin{proof}
This configuration looks as follows, for any $p \ge 1$:
    \begin{center}
   \begin{tikzpicture}[x=10mm,y=10mm]
  \draw[draw=black, thick] (0, 0) rectangle ++(5, 5);

  \draw[dashed] (0, 1.7)  -- (5, 1.7); 

  \draw[dashed] (0, 3.3)  -- (5, 3.3);

  \node[fill, circle,inner sep=2pt, label=0:$b$] (L) at (0, 0.9) {};
  \node[fill=red, circle,inner sep=2pt, label=0:$\ell$] (M) at (0, 2.5) {};
  \node[fill, circle,inner sep=2pt, label=0:$t$] (R) at (0, 4.1) {};
  
      \draw [decorate,decoration={
        brace,
        raise=2mm
    }] (0, 4.1) -- (0, 5) node [pos=0.5,anchor=east,xshift=-2.3mm] {$x$}; 
        \draw [decorate,decoration={
        brace,
        raise=2mm
    }] (0, 0) -- (0, 0.9) node [pos=0.5,anchor=east,xshift=-2.3mm] {$x$}; 
\end{tikzpicture}
\end{center}
Candidate $\ell$ gets vote share $(1-2x)/2 = 1/2 - x$, which is the smallest if $1/2 - x < 1/3 \Leftrightarrow x > 1/6.$

\end{proof}

Finally, we find a configuration where a corner wins against $t$.

\begin{lemma}\label{lemma:square-step4}
  With uniform $L_2$ votes over the unit square, a candidate at $t = (0, 1-x)$ has the smallest vote share against candidates at $\ell = (0, 1)$ and $r = (1, 1)$ for $0 \le x \le 0.19$. 
\end{lemma}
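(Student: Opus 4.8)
The plan is to follow the same recipe as \Cref{lemma:square-step2} and \Cref{lemma:square-step3}: under the Euclidean metric the three candidates $t = (0, 1-x)$, $\ell = (0,1)$, and $r = (1,1)$ partition the unit square into three Voronoi cells, and since voters are uniform, each candidate's vote share is exactly the area of its cell. So I would write each cell's area as a function of $x$ and then argue that $t$'s cell is the smallest on the stated range $0 \le x \le 0.19$, with the upper endpoint emerging as the value of $x$ at which $t$ ceases to be the plurality loser.

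First I would record the three perpendicular bisectors for exactly these opponents. Because $t$ and $\ell$ both lie on the left edge $X = 0$, their bisector is the horizontal line $Y = 1 - x/2$, with $t$'s side below it. Because $\ell$ and $r$ both lie on the top edge $Y = 1$, their bisector is the vertical line $X = 1/2$. The bisector of $t$ and $r$ passes through the midpoint $(1/2,\, 1 - x/2)$ perpendicular to the direction $(1, x)$, which works out to $X + xY = \tfrac{1 + 2x - x^2}{2}$, with $t$ on the side of smaller $X + xY$. From these I would describe each cell as an intersection of half-planes with the square: $t$'s cell is $\{\, Y < 1 - x/2,\ X < \tfrac{1+2x-x^2}{2} - xY \,\}$, the cell of $\ell$ is the rectangle $\{\, X < 1/2,\ Y > 1 - x/2 \,\}$, and $r$'s cell is the remainder. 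Each cell is a polygon for small $x$ (one should check that the slanted $t$--$r$ bisector enters and leaves the square through the expected edges), so integrating gives the areas as explicit low-degree polynomials in $x$.

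The crux is then the comparison: reduce the statement that $\operatorname{area}(t)$ is at most both $\operatorname{area}(\ell)$ and $\operatorname{area}(r)$ to a single polynomial inequality in $x$ and solve it, by hand where it is quadratic and with a computer algebra system for any higher-degree factor, exactly as the quartic was dispatched in \Cref{lemma:square-step2}; this should locate the threshold near $x = 0.19$. The step I expect to be the main obstacle is the cell bookkeeping: unlike \Cref{lemma:square-step2}, this configuration has no left--right symmetry, so the three cells are genuinely distinct polygons and the identity of the smallest cell is sensitive to the exact corner placements. I would therefore verify carefully which candidate is the plurality loser throughout $[0, 0.19]$, and confirm that the piecewise-polynomial area formulas stay valid across the whole range, before reading off the threshold.
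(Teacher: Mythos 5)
Your proposal inherits, and is sunk by, a typo in the statement you were given: in the actual configuration, $\ell$ is the \emph{bottom-left corner} $(0,0)$, not $(0,1)$. This is unambiguous from the paper's proof (its figure places $\ell$ at the origin, and its vote-share formulas $v(t) = 1/4 + 3x/8 - x^3/8$ and $v(\ell) = 3/8 - x/4 - x^2/8$ only arise from that placement) and from the lemma's role in the Condorcet chain: after $t$ is eliminated, a Condorcet-losing corner must be able to win, and the last panel of \Cref{fig:square-sequence} shows that winner at the bottom-left corner. Taking $\ell = (0,1)$ literally, your own cell bookkeeping already refutes the claim: the cell of $\ell$ is exactly the rectangle $[0,1/2] \times [1 - x/2,\, 1]$, of area $x/4 \le 0.0475$ on the stated range, while $t$'s cell (everything below $Y = 1 - x/2$ on $t$'s side of the slanted bisector) has area roughly $1/2$. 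So $\ell$, not $t$, is the plurality loser for every $x \in (0, 0.19]$, no threshold near $0.19$ emerges, and the polynomial comparison you propose would disprove rather than prove the statement. Your closing caveat---that you would ``verify carefully which candidate is the plurality loser''---is exactly the check that would have exposed this, but as written the proposal commits to cells under which the lemma is false (and at $x=0$ your $t$ and $\ell$ even coincide, a degenerate duplicate the paper excludes).

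With the corrected $\ell = (0,0)$, your recipe is precisely the paper's. The bisectors become $Y = (1-x)/2$ for $t$--$\ell$, the anti-diagonal $X + Y = 1$ for $\ell$--$r$, and $X + xY = 1/2 + x - x^2/2$ for $t$--$r$, concurrent at the circumcenter $(1/2 + x/2,\, 1/2 - x/2)$. Then $t$'s cell is a trapezoid of area $v(t) = 1/4 + 3x/8 - x^3/8$, and since $v(r) - v(\ell) = x/8 + x^2/4 + x^3/8 \ge 0$, only the single comparison $v(t) < v(\ell)$ is needed, which reduces to the cubic inequality $x^3 - x^2 - 5x + 1 > 0$ with relevant root $\approx 0.194$. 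One small calibration to your plan: no computer-algebra step is needed here (that was the quartic in \Cref{lemma:square-step2}); the paper dispatches the cubic by noting its derivative $3x^2 - 2x - 5$ is negative on $[0,1]$ and its value at $x = 0.19$ is positive, so it is positive on all of $[0, 0.19]$.
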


\begin{proof}
This configuration looks as follows:

  \begin{center}
   \begin{tikzpicture}[x=10mm,y=10mm]
  \draw[draw=black, thick] (0, 0) rectangle ++(5, 5);

  \draw[dashed] (0, 2) -- (3, 2);
  \draw[dashed] (5, 0) -- (3, 2);

  \draw[dotted] (0, 4) -- (5, 5);
  
  \draw[dashed] (2.4, 5)  -- (3, 2); 
  \draw[dotted] (3, 2) -- (3, 5) ;

  \node[fill, circle,inner sep=2pt, label=180:$\ell$] (L) at (0, 0) {};
  \node[fill=red, circle,inner sep=2pt, label=45:$t$] (M) at (0, 4) {};
  \node[fill, circle,inner sep=2pt, label=0:$r$] (R) at (5, 5) {};
  
    \draw [decorate,decoration={
        brace,
        raise=2mm
    }] (0, 4.01) -- (0, 5) node [pos=0.5,anchor=east,xshift=-2.3mm] {$x$}; 
    
    \draw [decorate,decoration={
        brace,
        raise=2mm
    }] (0, 2) -- (0, 3.99) node [pos=0.5,anchor=east,xshift=-2.3mm] {$1/2 - x/2$}; 
              \draw [decorate,decoration={
        brace,
        mirror,
        raise=2mm
    }] (0.04, 2) -- (3, 2) node [pos=0.5,anchor=north,yshift=-3mm] {$1/2 + x/2$}; 
    
      \draw [decorate,decoration={
        brace,
        raise=2mm
    }] (2.4, 5) -- (3, 5) node [pos=0.5,anchor=south,yshift=2mm] {$x(1/2 + x/2)$}; 
\end{tikzpicture}
\end{center}
Candidate $t$'s vote share is
\begin{align*}
  v(m) &= (1/2 + x/2)^2 - x(1/2 + x/2)^2/2\\
  &= 1/4 + 3 x/8 - x^3/8.
\end{align*}
Candidate $\ell$ has smaller vote share than $r$, at
\begin{align*}
 v(\ell) &= (1/2 - x/2)(1/2 + x/2) + (1/2 - x/2)^2/2\\
 &= 3/8 - x/4 - x^2/8
\end{align*}
Thus, candidate $t$ has the smallest vote share if
\begin{align*}
  &1/4 + 3 x/8 - x^3/8 < 3/8 - x/4 - x^2/8\\
  \Leftrightarrow \quad & x^3/8 - x^2 /8 - 5/8x +1/8 > 0.
\end{align*}
Note that the derivative of this function, $3x^2/8 - x/4 - 5/8$, is negative for $x \in [0, 1]$. Moreover, $x^3/8 - x^2 /8 - 5/8x +1/8$ is positive for $x = 0.19$ (taking value $\approx 0.003$), so $t$ has the smallest vote share for all $x < 0.19$ (the exact root of the cubic is approximately 0.194, but doesn't admit a closed form in terms of real radicals).
\end{proof}

We can then chain together the above constructions.
\begin{proof}[Proof of \Cref{prop:l2-square}]
By \Cref{lemma:hyperrect-condorcet}, the center of the square is a Condorcet winner and the corners are Condorcet losers. \Cref{lemma:square-step1,lemma:square-step2,lemma:square-step3,lemma:square-step4} then give us a sequence of configurations satisfying \Cref{lemma:chain} (once the previous winner is eliminated, the next desired winner can win a tie-breaker). Thus the square has no nontrivial exclusion zones with uniform $L_2$ voters by \Cref{lemma:chain}. 
\end{proof}

\begin{proof}[Proof of \Cref{lemma:rect-to-hyper}]
Let $R = [0, w_1]\times\dots \times [0, w_d]$ be a hyperrectangle with dimension $d\ge 2$ and suppose that for any rectangle with uniform $L_p$ voters, there exists a sequence of configurations satisfying \Cref{lemma:chain}. 

  Suppose all candidates share all but two of their coordinates in $R$; i.e., they lie on a single plane $P$ parallel to a pair of axes. WLOG suppose the non-shared coordinates are the first and second. Let $a$ and $b$ be two candidates and consider any voter at a point $x$. Let $x'$ be the projection of $x$ on $P$. If $x'$ is closer to $a$ than to $b$, then
  \begin{align*}
    \|x' - a\|_p^p > \|x' - b\|_p^p\\
    \Leftrightarrow \quad & \sum_{i = 1}^d |x'_i - a_i|^p> \sum_{i = 1}^d |x'_i - b_i|^p \\
    \Leftrightarrow \quad &  \sum_{i = 1}^2 |x'_i - a_i|^p> \sum_{i = 1}^2 |x'_i - b_i|^p\tag{$x', a, b$ are all in $P$}\\
    \Leftrightarrow \quad &  \sum_{i = 1}^2 |x'_i - a_i|^p + \sum_{i = 3}^d |x_i - a_i|^p> \sum_{i = 1}^2 |x'_i - b_i|^p +  \sum_{i = 3}^d |x_i- a_i|^p\\
\Leftrightarrow \quad &  \sum_{i = 1}^d |x_i - a_i|^p > \sum_{i = 1}^d |x_i - b_i|^p \\
\Leftrightarrow \quad &  \|x - a\|_p^p > \|x - b\|_p^p.
  \end{align*}
  Thus, $x$ is also closer to $a$ than to $b$. That is, all points in $R$ vote in the same way as their projections onto $P$. Moreover, the voter distribution remains uniform after this projection, since $P$ is perpendicular to a pair of axes and $R$ is a hyperrectangle. Thus, vote shares are the same as if the election were in the rectangle formed by projecting $R$ onto $P$. 
  
  This allows us to apply a sequence of configurations in a rectangle within $R$. By \Cref{lemma:hyperrect-condorcet}, the center $c = (w_1 / 2, \dots, w_d/2)$ of $R$ is a Condorcet winner. Thus the minimal exclusion zone of $R$ contains $c$, by \Cref{prop:exclusion-facts}. Consider the plane $P = (x, y, w_3/2, \dots, w_d/2)$. The intersection $R \cap P$ forms a rectangle, in which we can apply a sequence of configurations satisfying \Cref{lemma:chain}. By the above reasoning, these elections play out the same way in $R$ as they do in the rectangle. Thus, repeated applications of \Cref{prop:exclusion-facts} show that every point in $R \cap P$ is in the minimal exclusion zone of $R$, including the point $(0, w_2/2, \dots, w_d/2)$. We can then repeat with the new plane $P' = (0, x, y, w_4/2, \dots, w_d / 2)$, with one more coordinate zeroed out, finding that $(0, 0, w_3/2, \dots, w_d/2)$ is in the minimal exclusion zone of $R$. Repeat the procedure $d$ times in total, and we find that the origin $(0, \dots, 0)$ is in the minimal exclusion zone of $R$. Since the origin is a Condorcet loser by \Cref{lemma:hyperrect-condorcet}, we then have a sequence of configurations satisfying \Cref{lemma:chain} in $R$, showing its minimal exclusion zone is trivial. 
\end{proof}

\begin{lemma}\label{lemma:rectangle-l1-corner-squeeze}
  Suppose we have uniform $L_1$ voters over the rectangle $[0, w] \times [0, 1]$, $w > 1$. With candidates at $m = (0, 1- c - \epsilon)$, $\ell = (0, c)$ , and $r = (w-c, 1)$ for $0 \le c < 1/2$ and $0 \le \epsilon < \min\{1/2 - c, \frac{1}{8w}\} $, $r$ is a possible IRV winner.
\end{lemma}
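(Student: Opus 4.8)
The plan is to exploit that with only three candidates IRV reduces to a single elimination followed by a pairwise runoff: the candidate with the fewest first-place votes is dropped, and the survivor of the head-to-head contest wins. To certify that $r$ is a possible winner it therefore suffices to establish two facts: (i) $m$ has the smallest first-round plurality share, so $m$ is eliminated first; and (ii) once $m$ is removed, $r$ weakly beats $\ell$ in the resulting pairwise contest, so $r$ can win (outright, or by a tiebreak in the degenerate case). I would prove (i) and (ii) by direct $L_1$ area computations and then combine them.

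I would dispatch the runoff (ii) first, as it is the clean part. After $m$'s elimination every voter picks the nearer of $\ell=(0,c)$ and $r=(w-c,1)$ under $L_1$. The $\ell$--$r$ bisector is piecewise linear, given by $x_0=\tfrac{w+1}{2}-c$ for $y_0\le c$ and by $x_0+y_0=\tfrac{w+1}{2}$ for $y_0\ge c$; all of these segments lie inside $[0,w]\times[0,1]$ precisely because $w>1$, so no clipping corrections are needed. Integrating, I expect $\ell$'s region to have area exactly $w/2-c^2/2$. Since $w/2-c^2/2\le w/2$, candidate $r$ receives at least half the votes and wins the runoff (strictly if $c>0$, by tiebreak if $c=0$).

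The substance lies in (i): showing $m$'s three-way cell is the smallest. Here I would use that the $\ell$--$m$ bisector is the horizontal line $y_0=(1-\epsilon)/2=:h$, so in the strip $y_0<h$ the contest is purely $\ell$ versus $r$, and in $y_0>h$ purely $m$ versus $r$. Integrating the $\ell$--$r$ and $m$--$r$ bisectors over these two strips (each piecewise linear, with breakpoints at $y_0=c$ and $y_0=1-c-\epsilon$, both of which I would first verify lie in the correct strip using $\epsilon<\tfrac12-c$) yields closed forms for the cell areas $A_\ell$ and $A_m$, and hence $A_r=w-A_\ell-A_m$. The essential comparison is $A_\ell-A_m=\tfrac18-\tfrac{w\epsilon}{2}-\tfrac{\epsilon}{4}+c\epsilon+\tfrac{\epsilon^2}{8}$: at $\epsilon=0$ this equals $1/8>0$, and the hypothesis $\epsilon<1/(8w)$ is exactly what keeps the dominant error term $w\epsilon/2$ below $1/16$, so the gap stays positive. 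A parallel, slacker estimate gives $A_m<A_r$ using $w>1$, so $m$ is strictly the smallest cell and is eliminated first.

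The main obstacle is bookkeeping rather than conceptual depth: the three-way $L_1$ partition has several piecewise-linear boundaries whose breakpoints must be checked to fall in the asserted ranges (this is where $\epsilon<\tfrac12-c$ is used), and the final comparison $A_m\le A_\ell$ is delicate because the advantage is only $1/8$ at $\epsilon=0$ and erodes linearly in $w\epsilon$—which is exactly why the hypothesis scales the admissible $\epsilon$ as $1/(8w)$. Once (i) and (ii) are both in hand, combining them shows $m$ is eliminated and $r$ then prevails over $\ell$, so $r$ is a possible IRV winner.
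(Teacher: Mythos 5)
Your proposal is correct and follows essentially the same route as the paper: compute the three cell areas, use $\epsilon < 1/(8w)$ to show the tight comparison $A_\ell - A_m > 0$ (your expression $1/8 - w\epsilon/2 - \epsilon/4 + c\epsilon + \epsilon^2/8$ matches the paper's exactly) and $w>1$ for the slack comparison $A_m < A_r$, then verify the runoff areas $w/2 - c^2/2$ versus $w/2 + c^2/2$ so that $r$ wins outright when $c>0$ and by tiebreak when $c=0$. The only difference is presentational—you integrate bisectors over horizontal strips while the paper reads the same areas off a geometric decomposition—so the two proofs are substantively identical.
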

\begin{proof}
This configuration divides up the rectangle as follows:

\begin{center}
   \begin{tikzpicture}[x=8mm,y=8mm]
  \draw[draw=black, thick] (0, 0) rectangle ++(10, 6);

  \draw[dashed] (5, 3) -- (7, 1) -- (7, 0);
  \draw[dashed] (0, 3) -- (5, 3);
  \draw[dashed] (5, 5) -- (4, 6);
  \draw[dashed] (5, 3) -- (5, 5);
    
  \draw[dotted] (9, 6) -- (9, 5);
    \draw[dotted] (0, 5) -- (9, 5);

  \draw[dotted] (5, 0) -- (5, 3);
  \draw[dotted] (0, 1) -- (7, 1);

  \node[fill, circle,inner sep=2pt, label=45:$\ell$] (L) at (0, 1) {};
  \node[fill=red, circle,inner sep=2pt, label=45:$m$] (M) at (0, 5) {};
  \node[fill, circle,inner sep=2pt, label=135:$r$] (R) at (9, 6) {};
  
    \draw [decorate,decoration={
        brace,
        raise=2mm
    }] (0, 5.01) -- (0, 6) node [pos=0.5,anchor=east,xshift=-2.3mm] {$c + \epsilon$}; 
    
        \draw [decorate,decoration={
        brace,
        raise=2mm
    }] (0, 0) -- (0, 1) node [pos=0.5,anchor=east,xshift=-2.3mm] {$c$};

      \draw [decorate,decoration={
        brace,
        mirror,
        raise=2mm
    }] (0.04, 3) -- (5, 3) node [pos=0.5,anchor=north,yshift=-3mm] {$(w + \epsilon)/2$}; 
    
        \draw [decorate,decoration={
        brace,
        raise=2mm
    }] (9, 6) -- (10, 6) node [pos=0.5,anchor=south,yshift=3mm] {$c$};
    
      \draw [decorate,decoration={
        brace,
        raise=2mm
    }] (0, 3) -- (0, 4.99) node [pos=0.5,anchor=east,xshift=-2.3mm] {$(1 - 2c - \epsilon)/2$};

        \draw [decorate,decoration={
        brace,
        mirror,
        raise=2mm
    }] (5, 0) -- (7, 0) node [pos=0.5,anchor=north,yshift=-3mm] {$(1 - 2c - \epsilon)/2$}; 
\end{tikzpicture}
\end{center}

The vote shares are:
\begin{align*}
  v(m)   &= (c +\epsilon)(w-c)/2 + (w+\epsilon)(1-2c-\epsilon)/4\\
  &= -c^2/2 + \epsilon/4 - c \epsilon - \epsilon^2/4 + w/4 + \epsilon w/4 \\[0.5em]
  v(\ell)   &= (w+\epsilon)(1-2c-\epsilon)/4 + c(1 + w -2c)/2 + (1-2c - \epsilon)^2/8\\
  &= 1/8 - c^2/2 - \epsilon^2/8 + w/4 - \epsilon w/4 \\[0.5em]
    v(r)   &= (w - \epsilon)/2 + (c+\epsilon)^2 / 2 - (1-2c - \epsilon)^2/8 - c(1-2c-\epsilon)/2\\
    &= -1/8 + c^2 - \epsilon/4 + c \epsilon + 3 \epsilon^2/8 + w/2
\end{align*}

Suppose $0 \le \epsilon < \frac{1}{8w}$. Then,
\begin{align*}
  v(\ell) - v(m) &= 1/8 - c^2/2 - \epsilon^2/8 + w/4 - \epsilon w/4 - (-c^2/2 + \epsilon/4 - c \epsilon - \epsilon^2/4 + w/4 + \epsilon w/4)\\
  &= 1/8 - \epsilon/4 + c \epsilon + \epsilon^2/8 - \epsilon w/2\\
  &> \frac{1}{8} - \frac{1}{32w} - \frac{1}{16}\\
  &> \frac{1}{8} - \frac{1}{32} - \frac{1}{16} = 1/32.
  \end{align*}
  Thus $m$ has a smaller vote share than $\ell$. Similarly, $m$ has a smaller vote share than $r$:
   \begin{align*}
  v(r) - v(m) &= -1/8 + c^2 - \epsilon/4 + c \epsilon + 3 \epsilon^2/8 + w/2 - (-c^2/2 + \epsilon/4 - c \epsilon - \epsilon^2/4 + w/4 + \epsilon w/4)\\
  &= -1/8 + 3 c^2/2 - \epsilon/2 + 2 c \epsilon + 5 \epsilon^2/8 + w/4 - \epsilon w/4 \\
  &> -1/8 - \frac{1}{16w} + w/4 - 1/32\\
  &> - 1/8 - 1/16 + 1/4 - 1/32 = 1/32.
  \end{align*}
  Thus $m$ is eliminated first. After $m$'s elimination, the new configuration looks as follows:
  
\begin{center}
   \begin{tikzpicture}[x=8mm,y=8mm]
  \draw[draw=black, thick] (0, 0) rectangle ++(10, 6);

  \draw[dashed] (2, 6) -- (7, 1) -- (7, 0);
%  \draw[dashed] (0, 3) -- (5, 3);
%  \draw[dashed] (5, 5) -- (4, 6);
%  \draw[dashed] (5, 3) -- (5, 5);
    
%  \draw[dotted] (9, 6) -- (9, 5);
%
%  \draw[dotted] (5, 0) -- (5, 3);
  \draw[dotted] (0, 1) -- (9, 1) -- (9, 6);

  \node[fill, circle,inner sep=2pt, label=45:$\ell$] (L) at (0, 1) {};
  \node[fill, circle,inner sep=2pt, label=135:$r$] (R) at (9, 6) {};

        \draw [decorate,decoration={
        brace,
        raise=2mm
    }] (0, 0) -- (0, 1) node [pos=0.5,anchor=east,xshift=-2.3mm] {$c$}; 

        \draw [decorate,decoration={
        brace,
        raise=2mm
    }] (9, 6) -- (10, 6) node [pos=0.5,anchor=south,yshift=3mm] {$c$};
    
        \draw [decorate,decoration={
        brace,
        mirror,
        raise=2mm
    }] (0, 0) -- (7, 0) node [pos=0.5,anchor=north,yshift=-3mm] {$(w + 1 - 2c)/2$}; 
    
\end{tikzpicture}
\end{center}

The vote shares are now:
\begin{align*}
  v(\ell)& = (1-c)(w - c)/2 + c(w+1-2c)/2\\
  &=  w/2-c^2/2\\[0.5em]
  v(r) &= (1-c)(w - c)/2 + c(1-c) + c(w - 1 + 2c)/2\\
  &= w/2 + c^2/2.
\end{align*}
Thus $r$ can win under IRV (with a tiebreak if $c=0$ and without when $c> 0$).
\end{proof}

\begin{lemma}\label{lemma:rect-minimal-l1}
  In any rectangle $R = [0, w] \times [0, h]$ with uniform $L_1$ voters, there is a sequence of candidate configurations satisfying \Cref{lemma:chain}.
\end{lemma}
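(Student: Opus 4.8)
The plan is to produce the required chain and invoke the Condorcet Chain Lemma (\Cref{lemma:chain}). By \Cref{lemma:hyperrect-condorcet}, in the rectangle $R = [0,w]\times[0,h]$ the center is a Condorcet winner and each corner is a Condorcet loser, so it suffices to exhibit a finite sequence of configurations whose successive IRV winners link the center to a corner: $C_1$ must contain the center but be winnable by some $w_1$, each $C_{i+1}$ must contain $w_i$ and be winnable by some $w_{i+1}$, and $w_n$ must land on a corner. After rescaling I would assume the shorter side has length $1$, i.e.\ $R = [0,w]\times[0,1]$ with $w \ge 1$, so the interesting regime is $w$ large (elongated rectangles); the near-square case behaves like the warmup \Cref{prop:l2-square}.

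The chain has two phases. In the first phase I would dislodge the center: placing candidates symmetrically around the center (in the spirit of \Cref{lemma:square-step1}, which already holds for every $L_p$) so that the center has the strictly smallest $L_1$ vote share and is eliminated first, after which an off-center candidate on the short axis can win the tiebreak. A few further steps, analogous in flavor to \Cref{lemma:square-step2,lemma:square-step3} but with $L_1$ vote shares recomputed for the rectangle's aspect ratio, walk the current winner onto a boundary edge and up toward the top of a vertical edge. Each such step is a three- or four-candidate configuration whose $L_1$ vote shares I would compute directly, checking that the intended candidate is eliminated first and that the previous winner is present as a candidate, so the hypotheses of \Cref{lemma:chain} are maintained.

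The second phase is the approach to the corner, where \Cref{lemma:rectangle-l1-corner-squeeze} does the work: from a candidate high on the left edge it makes the candidate $(w-c,1)$ near the top-right corner a possible IRV winner, and the limit $c \to 0$ places the winner exactly at the corner $(w,1)$, a Condorcet loser. I would apply this lemma repeatedly, decreasing the parameter $c$ from its initial value down to $0$, so that each application yields a new winner strictly closer to the corner while retaining the previous winner as a candidate in the configuration, thereby extending the chain all the way to the corner.

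The main obstacle is precisely this corner-approach stage. The squeeze lemma only permits an offset $\epsilon < 1/(8w)$, so a single step can advance the candidate by an amount that shrinks as the rectangle becomes more elongated; hence the number of steps must grow with $w$ (the ``number of steps growing with the narrowness'' flagged in the sketch of \Cref{thm:hyperrect-trivial-minimal}). I therefore have to (i) fix a uniform step size below $1/(8w)$, (ii) verify that finitely many such steps drive $c$ to $0$ while each intermediate configuration is a genuine chain link containing the preceding winner, and (iii) confirm by explicit $L_1$ vote-share computation that at every step the intended candidate is eliminated first and the intended successor wins the resulting tiebreak. The algebra is routine but must be done carefully and uniformly in $w$; once it checks out, \Cref{lemma:chain} immediately delivers the sequence claimed.
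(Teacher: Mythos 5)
Your overall skeleton (center $\to$ edge $\to$ corner, with \Cref{lemma:hyperrect-condorcet} supplying the Condorcet winner/loser endpoints and \Cref{lemma:rectangle-l1-corner-squeeze} as the key tool near the corner) matches the paper, but your second phase has a genuine gap: the iteration of the corner-squeeze lemma, as you describe it, does not produce a chain. The squeeze configuration consists of $m = (0, 1-c-\epsilon)$, $\ell = (0,c)$, and $r = (w-c,1)$; it \emph{consumes} a previous winner sitting on the \emph{left edge} (the candidate $m$) and \emph{produces} a winner near the \emph{top-right corner} (the candidate $r$). If you now apply the lemma again with a smaller parameter $c' < c$, the new configuration's candidates are $(0,1-c'-\epsilon')$, $(0,c')$, $(w-c',1)$ --- none of which is the previous winner $(w-c,1)$. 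So hypothesis (2) of \Cref{lemma:chain} fails between consecutive squeezes, and your claim that each application ``retain[s] the previous winner as a candidate'' is exactly what breaks. (Modifying the squeeze configuration to also include $(w-c_{\mathrm{prev}},1)$ as a fourth candidate would invalidate the lemma's vote-share analysis, since two nearly coincident candidates near the top-right corner would split that side's votes.)

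The paper closes this gap by inserting two ``reflection'' configurations between squeezes, which transport the winner from the top-right corner back to the left edge at a strictly \emph{higher} point: first the two-candidate configuration $\{(c,1), (w-c,1)\}$, a tie by symmetry, moving the winner to $(c,1)$; then $\{(c,1), (0, 5/6 + \tfrac{1}{11w})\}$, won outright by the left-edge candidate. Each round is therefore a triplet (squeeze + two reflections) that raises the left-edge winner by $\tfrac{1}{11w}$, which in turn permits the next squeeze to use a smaller $c$ (the constraint is that $m = (0,1-c-\epsilon)$ must coincide with the current left-edge winner); after $O(w)$ rounds $c$ reaches $0$ and the squeeze winner is exactly the corner $(w,1)$ --- no limit argument is needed, since the lemma allows $c=0$. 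A secondary, less serious issue: your first phase leans on \Cref{lemma:square-step1}, which is stated for the unit square; in an elongated rectangle the paper instead uses a dedicated three-candidate construction ($c=(x,1/2)$ squeezed by $t,b = (x-\epsilon, 1/2 \pm 2\epsilon)$) alternating with two-candidate steps, and this walk to the left edge itself takes $O(w)$ steps rather than ``a few.'' That part is incompleteness you might repair by computation, but the missing reflection mechanism in phase two is a structural omission without which the chain never connects to the corner.
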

\begin{proof}
WLOG suppose $w \ge h$. We can rescale the rectangle so that $h = 1$ without affecting elections, so suppose WLOG that $h = 1$. We know the center of $R$ is a Condorcet winner by \Cref{lemma:hyperrect-condorcet}. Consider the 3-candidate configuration with candidates at $c = (x, 1/2)$, $t = (x - \epsilon, 1/2 + 2\epsilon)$, and  $b = (x - \epsilon, 1/2 - 2\epsilon)$. The following visualization is zoomed in near the center of $R$, showing the boundaries of each candidate's voter share with $L_1$ voters:
 \begin{center}
 \begin{tikzpicture}[x=10mm,y=10mm]

  \draw[dotted] (0, 0) -- (0, 2) -- (1, 2) -- (1, 0) -- (1, -2) -- (0, -2) -- (0, 0) -- (1, 0) ;

  \draw[dashed] (3, 1.5) --  (1, 1.5) -- (0, 0.5) -- (-2, 0.5);

  \draw[dashed] (3, -1.5) --  (1, -1.5) -- (0, -0.5) -- (-2, -0.5);

  \node[fill, circle,inner sep=2pt,label=$t$] (B) at (0, 2) {};
  \node[fill=red, circle,inner sep=2pt, label=0:$c$] (C) at (1, 0) {};
  \node[fill, circle,inner sep=2pt, label=$b$] (T) at (0, -2) {};
\end{tikzpicture}
  \end{center}
  
  Note that $c$'s vote share is strictly less than $4 \epsilon h$, while $t$ and $b$'s vote share is larger than $(w - 4\epsilon)/2$. Pick $\epsilon = \min\{w / 12, 1 / 4\}$ (ensuring the candidates are all in the rectangle). Note that if $\epsilon = 1 / 4$, then $1 / 4 \le w / 12$, so $ w \ge 3$.   We find:
  
  \begin{align*}
    v(c) &< 4\epsilon \\
    &= \begin{cases}
      w /3, &\text{if $\epsilon = w / 12$}\\
      1, &\text{if $\epsilon = h / 4$}
    \end{cases}
  \end{align*}
  and
  \begin{align*}
    v(t) = v(b) &> (w - 4\epsilon)/2 \\
      &= \begin{cases}
      (w - w/3)/2, &\text{if $\epsilon = w / 12$}\\
      (w - 1) /2, &\text{if $\epsilon = 1 / 4$}
    \end{cases}\\
    &\ge \begin{cases}
      w/3, &\text{if $\epsilon = w / 12$}\\
      1, &\text{if $\epsilon = 1 / 4$} \tag{since $w \ge 3$ if $\epsilon = 1/4$}
    \end{cases}\\
    & = v(c).
  \end{align*}
  Thus $c$ is eliminated first. Then, in the two candidate configuration with either $t$ or $b$ and $c' = (x- \epsilon, 1 / 2)$, $c'$ wins with more than half of the vote. We can therefore make a sequence of configurations $C_1, \dots, C_{n}$ where $C_i$ is the above 3-candidate configuration with $x = 1/2 - \epsilon (i - 1) / 2 $ for odd $i$ and the 2-candidate configuration with $x = 1/2 - \epsilon i/2$ for even $i$ where the winner in $C_i$ is eliminated first in $C_{i+1}$ and the winner in $C_1$ is the Condorcet winner. At some even $n$, we will have $x = 0$ (which $n$ depends on whether $\epsilon = w / 12$ or $1/4$; in the first case, it's at $n = 12$, while in the second case, it's at $n \approx 4 w $---we may overshoot $x = 0$ in the second-to-last configuration, but we can  reduce the change in $x$ and preserve the construction). In this way, we get a sequence of chained configurations walking from the center to the left side of the rectangle.
  
  Then, let $C_{n+1}$ be the configuration with candidates at $(0, 1/2)$, $(0, 1/6)$, $(0, 5/6)$. The vote shares are tied at $w/3$, so the middle candidate can be first eliminated.  By symmetry, the candidate at $(0, 5/6)$ can win.
  
  Next, let $C_{n + 2}$ be the configuration from \Cref{lemma:rectangle-l1-corner-squeeze} with $\epsilon = \frac{1}{10w}$ and $c = 1/6 - \frac{1}{10w}$, which includes a candidate at $(0, 5/6)$, but where the candidate at $(w - c, 1)$ is a possible IRV winner. To get back to the left half of the rectangle let $C_{n+3}$ be the configuration with two candidates at $(c, 1)$ and $(w-c, 1)$. By symmetry, the candidate at $(c, 1)$ can win. Then, to get back to the left side of the rectangle, let $C_{n + 4}$ be the configuration with two candidates at $t=(c, 1)$ and $\ell = (0, 5/6 + \frac{1}{11w})$. Every point in the rectangle below $y = 5/6+\frac{1}{11w}$ is closer to $\ell$ than $t$ (in $L_1$ distance), so $\ell$ wins. 
  
  In configurations $C_{n+2}, C_{n+3}, C_{n+4}$, we have moved the winner from $(0, 5/6)$ to $(0, 5/6 + \frac{1}{11w})$. We can repeat this triplet of configurations many times, but with $c = 1/6 - \frac{1}{10w} - \frac{i}{11w}$, keeping $\epsilon = \frac{1}{10w}$ (in the last triplet, we may need a smaller $\epsilon$ so that $c \ge 0$, and we can stop once we have a winner at $(w, 1)$; no need to perform the final reflection). In $\approx 2 w$ such triplets, we reach a winner at $(w, 1)$. 
  
  Since the corners are Condorcet losers by \Cref{lemma:hyperrect-condorcet}, the minimal IRV exclusion zone is trivial by \Cref{lemma:chain}, with a chain of $O(w)$ configurations from the center to a corner.
  
\end{proof}

We now provide the equivalent constructions for uniform $L_2$ voters in a rectangle.

\begin{lemma}\label{lemma:rectangle-l2-corner-squeeze}
  Suppose we have uniform $L_2$ voters over the rectangle $[0, w] \times [0, 1]$, $w > 1$. With candidates at $m = (0, 1- c - \epsilon)$, $\ell = (0, c)$ , and $r = (w, 1-c)$ for $0 \le c < 1/2$ and $0 \le \epsilon < \frac{1 - 2c}{4w^2 + 2} $, $r$ is a possible IRV winner.
\end{lemma}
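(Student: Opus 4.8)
The plan is to analyze the two rounds of IRV on the three candidates $\{m, \ell, r\}$ separately: first show that $m$ is eliminated in round one (possibly via a tiebreak), and then show that $r$ wins the resulting pairwise contest against $\ell$. The second round is the easy part, and it is independent of $\epsilon$: the points $\ell = (0, c)$ and $r = (w, 1-c)$ are exchanged by the point reflection $(x, y) \mapsto (w - x, 1 - y)$ about the center $(w/2, 1/2)$ of the rectangle. This reflection is an $L_2$ isometry that maps the rectangle to itself and swaps $\ell$ with $r$, so it carries the set of voters closer to $\ell$ bijectively onto the set of voters closer to $r$. Hence $v(\ell) = v(r) = w/2$ exactly, the pairwise contest is a tie, and $r$ can win the tiebreak. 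It therefore remains only to show that in the three-candidate configuration $m$ has the (weakly) smallest vote share.

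For round one, I would set up the three perpendicular bisectors. The $m$--$\ell$ bisector is the horizontal line $y = (1 - \epsilon)/2$; the $\ell$--$r$ bisector passes through the center $(w/2, 1/2)$ with slope $-w/(1-2c)$; and the $m$--$r$ bisector passes through $(w/2, 1 - c - \epsilon/2)$ with slope $-w/\epsilon$. These three lines meet at the circumcenter of $m, \ell, r$, which one computes to be $(w/2 + \tfrac{\epsilon(1-2c)}{2w},\ (1-\epsilon)/2)$. Clipping the induced wedges to the rectangle $[0, w]\times[0,1]$ gives each Voronoi region as an explicit polygon, and integrating yields $v(m)$, $v(\ell)$, and $v(r)$ as low-degree polynomials in $c, \epsilon, w$; the only geometric subtlety is checking that the two slanted bisectors exit through the top and bottom edges rather than the left/right edges, which holds because $w > 1$.

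The clean anchor is the case $\epsilon = 0$, where all three bisectors meet at the rectangle's center and the regions are easy to read off: one finds $v(m) = w/4$, $v(\ell) = w/4 + \tfrac{1-2c}{8w}$, and $v(r) = w/2 - \tfrac{1-2c}{8w}$. Thus $m$ is strictly smallest at $\epsilon = 0$, with the gap to $v(r)$ of order $w$ (using $2w^2 > 1 - 2c$, which follows from $w>1$) and the gap to the nearer competitor $v(\ell)$ equal to $\tfrac{1-2c}{8w}$. Turning on $\epsilon > 0$ lowers the $m$--$\ell$ bisector by $\epsilon/2$, transferring a strip of area $\approx w\epsilon/4$ from $\ell$ to $m$, so $v(m)$ rises and $v(\ell)$ falls, each at rate $\approx w/4$, making $v(m) \le v(\ell)$ the binding inequality. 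The stated threshold $\epsilon < \tfrac{1-2c}{4w^2 + 2}$ is precisely what keeps the order-$\tfrac1w$ gap from closing once the second-order corrections (including the slight tilt of the $m$--$r$ bisector) are absorbed into the ``$+2$''. The main obstacle is thus the round-one bookkeeping: tracking how the two slanted bisectors clip the rectangle's corners as $\epsilon$ varies, and verifying $v(m) \le v(\ell)$ and $v(m) \le v(r)$ throughout the allowed range --- routine but error-prone area computations, entirely analogous to those in the $L_1$ case of \Cref{lemma:rectangle-l1-corner-squeeze}.
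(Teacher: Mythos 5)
Your proposal is correct and takes essentially the same route as the paper's proof: the same three perpendicular bisectors (your circumcenter $(w/2 + \tfrac{\epsilon(1-2c)}{2w}, (1-\epsilon)/2)$ matches the paper's intersection point), exact polynomial vote shares from the clipped Voronoi regions, verification that $v(m)$ is smallest with $v(m)\le v(\ell)$ as the binding inequality, and the point-reflection symmetry of $\ell$ and $r$ giving $r$ the tiebreak win. Your $\epsilon=0$ anchor values and first-order rates agree exactly with the paper's closed-form shares, and the deferred ``bookkeeping'' is precisely the algebra the paper carries out to confirm both gaps stay positive on the stated range of $\epsilon$.
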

\begin{proof}
This configuration divides up the rectangle as follows:

\begin{center}
   \begin{tikzpicture}[x=8mm,y=8mm]
  \draw[draw=black, thick] (0, 0) rectangle ++(10, 6);
  \draw[dashed] (4.6, 6) -- (4.778, 3);
  \draw[dashed] (0, 3) -- (4.778, 3);
  \draw[dashed] (6, 0) -- (4.778, 3);
%  \draw[dashed] (5, 3) -- (5, 5);
%    
  \draw[dotted] (4.778, 0) -- (4.778, 6);
%    \draw[dotted] (0, 5) -- (9, 5);
%
%  \draw[dotted] (5, 0) -- (5, 3);
%  \draw[dotted] (0, 1) -- (7, 1);

  \node[fill, circle,inner sep=2pt, label=45:$\ell$] (L) at (0, 1) {};
  \node[fill=red, circle,inner sep=2pt, label=45:$m$] (M) at (0, 5) {};
  \node[fill, circle,inner sep=2pt, label=135:$r$] (R) at (10, 5) {};
  
    \draw [decorate,decoration={
        brace,
        raise=2mm
    }] (0, 5.01) -- (0, 6) node [pos=0.5,anchor=east,xshift=-2.3mm] {$c + \epsilon$}; 
    
        \draw [decorate,decoration={
        brace,
        raise=2mm
    }] (0, 0) -- (0, 1) node [pos=0.5,anchor=east,xshift=-2.3mm] {$c$}; 
%    
%    
%    
%      \draw [decorate,decoration={
%        brace,
%        mirror,
%        raise=2mm
%    }] (0.04, 3) -- (5, 3) node [pos=0.5,anchor=north,yshift=-3mm] {$(w + \epsilon)/2$}; 
%    
        \draw [decorate,decoration={
        brace,
        mirror,
        raise=2mm
    }] (10, 5) -- (10, 6) node [pos=0.5,anchor=west,xshift=3mm] {$c$};
    
      \draw [decorate,decoration={
        brace,
        raise=2mm
    }] (4.778, 0.05) -- (4.778, 2.95) node [pos=0.5,anchor=east,xshift=-2.3mm] {$(1-\epsilon)/2$};  
    
      \draw [decorate,decoration={
        brace,
        mirror,
        raise=2mm
    }] (4.778, 3) -- (4.778, 5.95) node [pos=0.5,anchor=west,xshift=2.3mm] {$(1+\epsilon)/2$};

        \draw [decorate,decoration={
        brace,
        raise=2mm
    }] (4.6, 6) -- (4.778, 6) node [pos=0.5,anchor=south,yshift=3mm] {$\frac{\epsilon  (1+\epsilon)}{2 w}$};

            \draw [decorate,decoration={
        brace,
        mirror,
        raise=2mm
    }] (0, 0) -- (4.778, 0) node [pos=0.5,anchor=north,yshift=-3mm] {$\frac{w^2 + \epsilon - 2c\epsilon}{2w}$}; 
    
      \draw [decorate,decoration={
        brace,
        mirror,
        raise=2mm
    }] (4.8, 0) -- (6, 0) node [pos=0.5,anchor=north,yshift=-3mm] {$\frac{(1-2 c) (1-\epsilon)}{2 w}$}; 
\end{tikzpicture}
\end{center}

First, the $m\ell$ bisector is the line $y = (1-\epsilon) / 2$. 

Meanwhile, the $\ell r$ bisector has slope $ -\frac{w}{1-2c}$ and passes through $(w/2, 1/2)$, so can be written as $y = 1/2 -\frac{w}{1-2c}(x - w/2)$. Plugging in $y = (1-\epsilon)/2$ yields the $x$ coordinate of the dotted line, $x = \frac{w^2 + \epsilon - 2c\epsilon}{2w}$. Plugging $y = 0$ and subtracting the dotted line's $x$ coordinate, we find the width of lower triangle to be $\frac{(1-2 c) (1-\epsilon)}{2 w}$. 

Next, the $mr$ bisector has slope $-\frac{w}{\epsilon}$ and passes through $(w/2, 1 - c - \epsilon /2)$, so it can be written as $y  = 1 - c - \epsilon / 2 - \frac{w}{\epsilon}(x - w/2) $. Plugging in $y = 1$, we find the intersection with the top of the rectangle to be at $x =  \frac{w^2 -2 c \epsilon -\epsilon ^2}{2 w}. $ Subtracting this from the dotted line's $x$ coordinate yields the width of the top triangle, $ \frac{\epsilon  (1+\epsilon)}{2 w}$.

The vote shares are thus:
\begin{align*}
  v(m)   &= \frac{w^2 + \epsilon - 2c\epsilon}{2w} \cdot \frac{1+\epsilon}{2} - \frac{\epsilon  (1+\epsilon)^2}{8 w} \\
  &= \frac{(1+\epsilon) \left(2 w^2-4 c \epsilon-\epsilon ^2+\epsilon \right)}{8 w}\\[0.5em]
  v(\ell)   &= \frac{w^2 + \epsilon - 2c\epsilon}{2w} \cdot \frac{1-\epsilon}{2} + \frac{(1-2 c) (1-\epsilon)^2}{8 w} \\
  &= \frac{(1-\epsilon) \left(2 w^2-2 c (\epsilon +1)+\epsilon +1\right)}{8 w}\\[0.5em]
    v(r)   &= \frac{w^2 - \epsilon + 2c\epsilon}{2w} - \frac{(1-2 c) (1-\epsilon)^2}{8 w} + \frac{\epsilon  (1+\epsilon)^2}{8 w}\\
    &= \frac{4 w^2 + (2 c - 1 + \epsilon) (1+\epsilon)^2}{8 w}.
\end{align*}
Consider $v(\ell) - v(m)$; since we only care about the sign, we can multiply both by $8w$:
\begin{align*}
  8w[v(\ell) - v(m)] &= (1-\epsilon) \left(2 w^2-2 c (\epsilon +1)+\epsilon +1\right) - (1+\epsilon) \left(2 w^2-4 c \epsilon-\epsilon ^2+\epsilon \right)\\
  &=1 + 6 c \epsilon ^2+4 c \epsilon -2 c-4 w^2 \epsilon +\epsilon ^3-\epsilon ^2-\epsilon\\
  &\ge 1  -2 c-4 w^2 \epsilon -\epsilon ^2-\epsilon\\
  &> 1  -2 c-(4 w^2+2) \epsilon \tag{since $\epsilon < 1$}\\
  &> 1 - 2c -(4w^2 + 2) \frac{1 - 2c}{4w^2 + 2}\tag{since $\epsilon < \frac{1 - 2c}{4w^2 + 2}$}\\
  &= 0.
  \end{align*}
  Thus $m$ has a smaller vote share than $\ell$.

Now consider the sign of $v(r) - v(m)$, again multiplying for simplification:
\begin{align*}
  8w[v(r) - v(m)] &= 4 w^2 + (2 c - 1 + \epsilon) (1+\epsilon)^2 - (1+\epsilon) \left(2 w^2-4 c \epsilon-\epsilon ^2+\epsilon \right)\\
  &= 6 c \epsilon ^2+8 c \epsilon +2 c-2 w^2 \epsilon +2 w^2+2 \epsilon ^3+\epsilon ^2-2 \epsilon -1\\
  &\ge -2 w^2 \epsilon +2 w^2 -2 \epsilon -1\\
  &> -2w^2 / 6 + 2w^2 - 1/3 - 1 \tag{since $w > 1$ and $\epsilon <\frac{1-2c}{4w^2 + 2}$, $\epsilon < 1/6$}\\
  &= \frac{5}{3}w^2 - \frac{4}{3} > 0. \tag{since $w > 1$}
\end{align*}
Thus $m$ also has a smaller vote share than $r$ and $m$ is eliminated first. By symmetry of the remaining candidates $\ell$ and $r$, $r$ is a possible IRV winner.
\end{proof}

\begin{lemma}\label{lemma:rect-minimal-l2}
  In any rectangle $R = [0, w] \times [0, h]$ with uniform $L_2$ voters, there is a sequence of candidate configurations satisfying \Cref{lemma:chain}.
\end{lemma}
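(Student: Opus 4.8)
The plan is to mirror the $L_1$ construction of \Cref{lemma:rect-minimal-l1}, replacing each $L_1$ gadget with an $L_2$ analogue and reusing the corner squeeze already established in \Cref{lemma:rectangle-l2-corner-squeeze}. First I would assume without loss of generality that $w \ge h$ and rescale so that $h = 1$ (rescaling preserves all elections); then $w \ge 1$, the center $(w/2, 1/2)$ is a Condorcet winner, and the corners are Condorcet losers by \Cref{lemma:hyperrect-condorcet}. The goal is to exhibit a chain of configurations whose first winner defeats the center and whose last winner is a corner, so that \Cref{lemma:chain} applies.

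\emph{Phase 1 (center to left edge).} I would use a three-candidate squeeze: a near-central candidate $c = (x, 1/2)$ flanked by $t = (x - \epsilon, 1/2 + \delta)$ and $b = (x - \epsilon, 1/2 - \delta)$ placed symmetrically above and below and slightly to the left. Under $L_2$ the perpendicular bisectors of $\{c,t\}$ and $\{c,b\}$ are straight (tilted) lines, and together with the bisector of $\{t,b\}$ (the horizontal line $y = 1/2$) they confine $c$ to a thin region. The aim is to choose $\epsilon, \delta$ as functions of $w$ small enough that all candidates lie in $R$ and that $c$'s share falls strictly below the common share of $t$ and $b$, so that $c$ is eliminated; a subsequent two-candidate configuration then lets a candidate at $(x-\epsilon, 1/2)$ win outright with a majority. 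Alternating these two configurations and decreasing $x$ by $\Theta(\epsilon)$ each time steps the winner leftward to $(0, 1/2)$ in $O(w/\epsilon)$ steps.

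\emph{Phase 2 (edge to corner).} Starting from a winner at $(0, 1/2)$, I would first use a vertically symmetric three-candidate configuration (the $L_2$ analogue of the $(0,1/2),(0,1/6),(0,5/6)$ gadget used in \Cref{lemma:rect-minimal-l1}) to shift the winner up the left edge. I would then invoke \Cref{lemma:rectangle-l2-corner-squeeze} to push the winner from the left edge to the right edge near the top, reflect by symmetry to return, and iterate this triplet with a shrinking offset $c$ so that the winner creeps toward the corner $(w, 1)$. Because the corner squeeze requires $\epsilon < \frac{1 - 2c}{4w^2 + 2}$, the per-step progress is $\Theta(1/w^2)$, so this phase uses $O(w^2)$ configurations, finite but growing with the aspect ratio, exactly as anticipated in the sketch of \Cref{thm:hyperrect-trivial-minimal}. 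Once the winner reaches a corner (a Condorcet loser), the chain is complete and \Cref{lemma:chain} yields the claim.

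The main obstacle is Phase 1: verifying the squeeze inequalities for $L_2$ uniformly over all $w \ge 1$. Unlike the $L_1$ case, where each candidate's region is an axis-aligned rectangle or triangle with an essentially trivial area, the $L_2$ bisectors are slanted lines, so $c$'s region and the flanking regions are bounded by tilted lines whose areas require genuine integration and careful analysis near the rectangle boundary (in particular when $w$ is large and the confining band is long and thin). The crux is a single choice of $\epsilon$ and $\delta$, analogous to $\epsilon = \min\{w/12, 1/4\}$ in \Cref{lemma:rect-minimal-l1}, that simultaneously keeps every candidate inside $R$ and keeps $v(c)$ below the flank shares for the whole range of aspect ratios. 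I expect this to reduce, as in the component lemmas of \Cref{prop:l2-square}, to showing that a low-degree polynomial in $x$, $\epsilon$, and $w$ has a definite sign on the relevant interval, which can be confirmed by elementary if tedious estimates.
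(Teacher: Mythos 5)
Your plan mirrors the paper's proof almost step for step: the same alternating squeeze-then-two-candidate walk from the center to the left edge, the same tied gadget $(0,1/6), (0,1/2), (0,5/6)$ to move the winner to $(0,5/6)$, and the same iteration of \Cref{lemma:rectangle-l2-corner-squeeze} followed by a reflection, with per-step progress $\Theta(1/w^2)$ and $O(w^2)$ total configurations, ending at a corner that is a Condorcet loser by \Cref{lemma:hyperrect-condorcet}. However, there is a genuine gap: the Phase 1 inequality---that the central candidate $c$ is in fact eliminated---is exactly the crux, and you leave it unproven, anticipating ``genuine integration'' and a polynomial sign analysis. The paper closes this step with a short argument that needs neither, and it is worth knowing because it dissolves the difficulty you flag about long thin rectangles. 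Fix the flank offset at $\delta = 1/6$, i.e., take $t = (x-\epsilon, 2/3)$ and $b = (x-\epsilon, 1/3)$, and set $\epsilon = \frac{1}{80w}$. The perpendicular bisector of $\{c,t\}$ passes through $(x - \epsilon/2, 7/12)$ with slope $\epsilon/\delta = 6\epsilon = \frac{3}{40w}$, so across the entire width of the rectangle it stays below $7/12 + 3/40 = 79/120 < 2/3$; symmetrically, the $\{c,b\}$ bisector stays above $41/120 > 1/3$. Hence $c$'s voter region is contained in a horizontal band of height strictly less than $1/3$, so $v(c) < w/3$, while $t$ and $b$ split the remaining share equally and each receive more than $w/3$. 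The bound is uniform in $x$ and in the aspect ratio, and no areas of tilted regions ever need to be computed---only the extreme heights of two lines.

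Two smaller points. First, \Cref{lemma:rectangle-l2-corner-squeeze} is stated only for $w > 1$, so after rescaling to $h = 1$ your Phase 2 does not apply to the square $w = 1$; the paper handles that case by citing the chain already constructed in \Cref{prop:l2-square}, and your proof must do likewise to cover ``any rectangle.'' Second, in the $L_2$ construction each Phase 2 round is a pair of configurations (corner squeeze, then one reflection across the vertical midline), not a triplet as in \Cref{lemma:rect-minimal-l1}: the $L_2$ squeeze already delivers the winner at $(w, 1-c)$, at the correct height on the right edge, so a single two-candidate reflection returns it to $(0,1-c)$ and the offset can then shrink. This is cosmetic, but the first point is needed for the lemma as stated, and without the explicit Phase 1 estimate the chain required by \Cref{lemma:chain} has not actually been exhibited.
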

\begin{proof}
WLOG suppose $w \ge h$. We can rescale the rectangle so that $h = 1$ without affecting elections, so suppose WLOG that $h = 1$. The case $w = 1$ is already known from the square analysis, so suppose $w > 1$. 

Consider the 3-candidate configuration with candidates at $c = (x, 1/2)$, $t = (x - \epsilon, 2/3)$, $b = (x - \epsilon, 1/3)$ with $\epsilon = \frac{1}{80 w}$. The $ct$ bisector passes through $(x - \epsilon / 2, 7/12)$ and has slope $6\epsilon$, so it intersects the right side of the rectangle no higher than (upper bounding by setting $x = \epsilon / 2$):
\begin{align*}
  \frac{7}{12} + w\cdot  6\epsilon &= \frac{7}{12} + \frac{6w}{80w}\\
   &= \frac{79}{120} < \frac 2 3.
\end{align*}
By symmetry, the $bc$ bisector intersects the right side of the rectangle at a point above $1/3$. Thus, $c$'s vote share is strictly less than $1/3$ and $c$ is first eliminated (angles exaggerated in the visualization for clarity; construction has subtler angles so that even when $c$ is close to the left edge, $c$'s vote share is contained within the dotted lines):

 \begin{center}
 \begin{tikzpicture}[x=6mm,y=6mm]
 
  \draw[draw=black, thick] (0, 0) rectangle ++(10, 6);

  \draw[dashed] (0, 3.2018) -- (4.97, 3.5) -- (10, 3.8036);
  \draw[dashed] (0, 2.7982) -- (4.97, 2.5) -- (10, 2.1964);
  
  \draw[dotted] (0, 2) -- (10, 2);
  \draw[dotted] (0, 4) -- (10, 4);

  \node[fill, circle,inner sep=2pt,label=$t$] (B) at (4.94, 4) {};
  \node[fill=red, circle,inner sep=2pt, label=0:$c$] (C) at (5, 3) {};
  \node[fill, circle,inner sep=2pt, label=270:$b$] (T) at (4.94, 2) {};
\end{tikzpicture}
  \end{center}
Just as in the proof of \Cref{lemma:rect-minimal-l1}, we can thus make progress towards the left edge, as either $t$ or $b$ wins in the above configuration (for any $x \in [\epsilon, w/2]$). Thus, let $C_1, \dots, C_{\lceil80w \rceil}$ be the sequence of configurations where $C_i$ is the three-candidate configuration above with $x = w/2 - (i-1)/2 \epsilon$ for odd $i$ and the two candidate configuration $\{(w/2 - i \epsilon / 2, 1/2)$, $(1/2 - i \epsilon/2, 7/12)\}$ for even $i$. If $w$ is not an integer and the last configuration would make $x < \epsilon$, we can just make $\epsilon$ smaller in the final step so the last configuration ends at $x$-coordinate $0$.   

Next, we use the configuration $C_{n+1}$ (letting $n = \lceil80w \rceil$) with three candidates $(0, 1/6)$, $(0, 1/2)$,and $(0, 5/6)$, with a possible winner at $(0, 5/6)$ (the three have tied vote shares $w/3$). We can then begin using the construction from \Cref{lemma:rectangle-l2-corner-squeeze}, with $\epsilon = \frac{1}{8w^2 + 5}$ and $c = 1/6 - \epsilon$; note that this satisfies the requirements on $\epsilon$ and $c$ from \Cref{lemma:rectangle-l2-corner-squeeze}. This gives a configuration with a candidate at $(0, 5/6)$ (maintaining the chain) and a possible winner at $(w, 1-c)$. To get back to the left side of the rectangle, we can use the two-candidate configuration $C_{n+2} =  \{(0, 1-c), (w, 1-c)\}$, with a possible winner at $(0, 1-c)$ by symmetry. In the configurations $C_{n+1}, C_{n+2}$ we have thus moved the possible winner from $(0, 5/6)$ to $(0, 5/6 + \frac{1}{8w^2 + 5})$. We then repeat this pair of configurations with the same $\epsilon$, but with $c = 1/6 - i \epsilon$, for $i = 2, \dots, \lceil \frac{8w^2 + 5}{6} \rceil$   $O(w^2)$, each time making progress towards the top left corner. In the final iteration we may overshoot, so we can make $c = 1$. This finally gives us a possible winner at a corner of the rectangle (after $O(w^2)$ steps), which is a Condorcet loser by \Cref{lemma:hyperrect-condorcet}. Thus, the sequence of candidate configurations we have constructed satisfies \Cref{lemma:chain}: we began at the Condorcet winner, ended at a Condorcet loser, and a possible winner at each step lost in the next.

\end{proof}

\begin{proof}[Proof of \Cref{thm:hyperrect-trivial-minimal}]
By \Cref{lemma:rect-to-hyper}, it suffices to show that for any rectangle with uniform $L_1$ or $L_2$ voters, there is a sequence of configurations satisfying \Cref{lemma:chain}. This is precisely what \Cref{lemma:rect-minimal-l1} and \Cref{lemma:rect-minimal-l2} give.
\end{proof}

%\begin{proof}[Proof of \Cref{thm:barbell}]
%First, the total area of $B$ is $1 + 8/10 + 4 = 5.8$. Suppose there is only one candidate $r$ remaining in $S$. We will show that any voter $v$ in the larger square above $y = 1/10$ votes for $r$ (making their vote share at least $3.8$, so they win). If $r$ is in the larger square, we are done: the largest $L_1$ distance between two points in the square is $4$, while every point outside of $S$ is distance $>4$ from the larger square. If $r$ is in the pipe between the squares, then consider a shortest path from $v$ to $r$ which first goes to the left of the large square, then drops down to the top of the pipe, then travels left to $r$'s $x$ coordinate, and then drops down to $r$. Also consider a shortest path to any candidate not in $S$, which starts the same way, but continues traveling left beyond $r$'s $x$ coordinate. Due to the slant at the left boundary of $S$, such a shortest path will necessarily travel further to leave $S$ than to reach $r$.
%\end{proof}

\subsection{Voting on graphs}

\begin{proof}[Proof of \Cref{thm:exclusion-co-np}]
  If $S$ is not an IRV exclusion zone, then there exists some configuration of candidates with at least one in $S$ where a candidate outside of $S$ wins (perhaps depending on tie-breaks). In such a case, a co-NP verifier can provide the counterexample (including the tie-breaking order), which we can check in polynomial time by running IRV.
  
  We now show NP-hardness by reduction from (co)-RX3C (restricted exact cover by 3-sets, where every item appears in exactly 3 sets, which is NP-complete)~\cite{gonzalez1985clustering}. Let $(X, C)$ be an instance of RX3C, where $X$ is a set of $3n$ items for some integer $n$ and $C = \{C_1, \dots, C_{3n}\}$ is a collection of size-3 subsets of $X$ (since every item appears in exactly three 3-sets, the total number of 3-sets is also $3n$). 
   Construct an instance of \textsc{IRV-Exclusion} $(G, S)$ as follows.
  
  Nodes of $G$:
  \begin{itemize}
    \item $(5n-1)/2$ \emph{item nodes} $x_1, \dots, x_{(5n-1)/2}$ for each item $x \in X$,
    \item $9n$ \emph{set-item nodes} $y_{i, x}$ for each set $C_i$ and each $x \in C_i$
    \item one \emph{set node} $c_i$ for each set $C_i \in C$,
    \item two \emph{winning nodes} $s_1$ and $s_2$,
    \item $3n+2$ \emph{bonus nodes} $b_1, \dots, b_{k+2}$.
  \end{itemize}
  
  Edges of $G$:
  \begin{itemize}
    \item Set nodes connect to their item nodes and their set-item nodes: for each $c_i$, the edges $(c_i, x_1), \dots, (c_i, x_{(5n-1)/2})$ and $(c_i, y_{i, x})$ for each $x \in C_i$,
    \item A clique among the set and winning nodes: edges connecting all pairs of nodes among $\{c_1, \dots, c_k, s_1, s_2\}$,
    \item Set nodes each connect to their bonus node: edges $(c_i, b_i)$ for each $i = 1\dots k$,
    \item Winning nodes each connect to their bonus node: the edges $(s_1, b_{k+1})$ and $(s_2, b_{k+2})$,
    \item Winning nodes connect to every set-item node: the edges $(s_1, y_{i, x})$ and $(s_2, y_{i, x})$ for each $C_i\in C$ and each $x \in C$.
  \end{itemize}  
  Claim: $S = \{s_1, s_2\}$ is an IRV exclusion zone of $G$ if and only if $(X, C)$ does not have an exact cover by 3-sets.
  
  Suppose $(X, C)$ has an exact cover by 3-sets, $C'\subseteq C$ (which must have exactly $n$ sets). Consider the configuration with a one candidate at $s_1$ and one in $c_i$ for each $C_i \in C'$ (of which there are $n$). Nodes vote as follows:
  \begin{itemize}
    \item bonus nodes connected to a candidate vote for the candidate,
    \item bonus nodes not connected to a candidate are equally shared among all candidates (they are all 2 hops away from every candidate), same for non-occupied $c_i$ nodes (they are 1 hop from every candidate), and for $s_2$ and $b_{k+2}$; call this total evenly-split vote share $c$,
    \item $x$ nodes vote for the $c_i$ covering them (all $(5n-1)/2$ copies),
    \item the $3n$ $y$ nodes corresponding to set-item pairs in the exact cover split their vote between their $c_i$ and $s_1$,
    \item the other $6n$ $y$ nodes vote fully for $s_1$.
  \end{itemize}
  Thus, the total vote share of $s_1$ is
  \begin{align*}
    \underbrace{3n/2}_{\text{split $y$ nodes}} + \underbrace{6n}_{\text{owned $y$ nodes}} + \underbrace{1}_{s_1} + \underbrace{1}_{b_{k+1}} + \underbrace{c/(n+1)}_{\text{evenly-split votes}} = 15n/2 + 2 + c/(n+1)
  \end{align*}
  Meanwhile, the total vote share of each occupied $c_i$ node is:
  \begin{align*}
    \underbrace{3 \cdot (5n-1)/2}_{\text{$x$ nodes}} +\underbrace{3/2}_{\text{split $y$ nodes}} + \underbrace{1}_{c_i} + \underbrace{1}_{b_i} + \underbrace{c/(n+1)}_{\text{evenly-split votes}} = 15n/2 + 2 + c/(n+1)
  \end{align*}
  Thus, $s_1$ can be eliminated next by a tie-breaker, and $\{s_1, s_2\}$ is not an IRV exclusion zone.
  
Now suppose $(X, C)$ does not have an exact cover by 3-sets. We want to show that $\{s_1, s_2\}$ is an IRV exclusion zone. So, consider any configuration with at least one candidate from $\{s_1, s_2\}$. At some point, only one of them will remain; since they have the exact same connectivity, we can call the last remaining node $s_1$ WLOG. First, we'll show that $s_1$ is guaranteed to outlive every $x$ node, every $y$ node, and every bonus node. First, consider the bonus node $b_{k+1}$ attached to $s_1$. If occupied, it has vote share 1, but $s_1$ has vote share $1$ plus a share of $s_2$ (which we know is unoccupied), so $s_1$ outlives $b_{k+1}$ and then gets vote share 2 (plus its share of other unoccupied nodes). Any other bonus node then has smaller vote share than $s_1$, since they only get vote share 1 plus a share of the unoccupied $c_i$ nodes (which they are equally sharing with $s_1$). Thus, at some point all of the bonus nodes are eliminated, with $s_1$ still in contention. Item nodes (both $x$ and $y$) also only get vote share 1 plus a share of adjacent unoccupied $c_1$ nodes (and a share of $s_2$, for $y$ nodes), but these are also evenly split with $s_1$. Thus every $x$ and $y$ node is eliminated before $s_1$. We therefore only need to worry about what happens when only $s_1$ and $c_i$ nodes remain. As we saw, if there is an exact cover by 3-sets, candidates occupying those nodes will exactly tie $s_1$. But now we know no such cover exists. Thus, regardless of which $c_i$ nodes are occupied, either: (1) there is at least one uncovered element; or (2), there is at least one multiply-covered element. As we will see, this will cause inefficiency in the vote counts for set nodes that tips the balance in favor of winning nodes.

Suppose there are $k$ set nodes remaining. Let $\ell \le 3k$ be the number of covered items and let $v'$ denote votes from only $x$ and $y$ nodes (note that the rest of the votes are equal: one self-vote, one vote from connected bonus node, and the evenly-split votes as above).  Then $s_1$'s vote share from $y$ nodes is $3k/2 + (9n - 3k) = 9n - 3k/2$---it depends only on the number of set nodes remaining. Note that $s_1$ may also get votes from $x$ nodes that are uncovered, but these will be equally shared with set nodes and can be ignored; regardless, we have $v'(s_1) \ge 9n - 3k/2$.

 We'll show by cases for $k$ that some $c_i$ node always has fewer votes than $s_1$ (when there is no exact cover). To do this, we'll use the following lemma.
 \begin{lemma}\label{lemma:average-coverage}
 Suppose there are $k> cn$ occupied set nodes. Define the item coverage counts of a set $C_i$ to be the number of times each item $x\in C_i$ is covered. There must be some set node whose sum of item coverage counts is $> 3c$. 
 \end{lemma}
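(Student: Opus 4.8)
The plan is to prove the lemma by a double-counting and averaging argument over the occupied set nodes. Let $\mathcal{K}$ denote the collection of occupied set nodes, so $|\mathcal{K}| = k > cn$, and for each item $x \in X$ let $\deg(x)$ be its coverage count, i.e., the number of sets in $\mathcal{K}$ that contain $x$. Since every item of an RX3C instance lies in exactly three of the $3n$ sets, we have $0 \le \deg(x) \le 3$, and by definition the sum of item coverage counts of an occupied set $C_i$ is exactly $\sum_{x \in C_i} \deg(x)$. The goal is to exhibit some $C_i \in \mathcal{K}$ with $\sum_{x \in C_i}\deg(x) > 3c$, and I would do this by showing that the \emph{average} of this quantity over $C_i \in \mathcal{K}$ already exceeds $3c$.

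First I would establish the key identity
\[
\sum_{C_i \in \mathcal{K}} \sum_{x \in C_i} \deg(x) \;=\; \sum_{x \in X} \deg(x)^2,
\]
by reversing the order of summation: each item $x$ contributes $\deg(x)$ once for each of the $\deg(x)$ occupied sets containing it, hence contributes $\deg(x)^2$ in total. Combined with the fact that every set has size $3$, which gives $\sum_{x \in X} \deg(x) = 3|\mathcal{K}| = 3k$, this reduces the problem to lower-bounding $\sum_x \deg(x)^2$.

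Next I would apply Cauchy--Schwarz over the universe of $3n$ items, which gives $\bigl(\sum_{x \in X} \deg(x)\bigr)^2 \le 3n \sum_{x \in X} \deg(x)^2$, and therefore
\[
\sum_{x \in X} \deg(x)^2 \;\ge\; \frac{(3k)^2}{3n} \;=\; \frac{3k^2}{n}.
\]
Dividing by $k = |\mathcal{K}|$ shows that the average of $\sum_{x \in C_i}\deg(x)$ over $C_i \in \mathcal{K}$ is at least $3k/n$, and since $k > cn$ this average is strictly greater than $3c$. Because the maximum over $\mathcal{K}$ is at least the average, some occupied set has sum of item coverage counts $> 3c$, as required.

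The argument is short, and I do not expect a genuine obstacle: the double-counting identity is routine once the order of summation is reversed, and the Cauchy--Schwarz step is standard. The only points needing care are using the universe size $3n$ (rather than the incidence count $3k$) as the number of terms in Cauchy--Schwarz, and verifying that the strict hypothesis $k > cn$ propagates to the strict conclusion $> 3c$.
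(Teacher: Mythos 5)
Your proof is correct and takes essentially the same route as the paper: both rest on the double-counting identity $\sum_{C_i \in \mathcal{K}} \sum_{x \in C_i} \deg(x) = \sum_{x} \deg(x)^2$ and an averaging argument showing the per-set average of coverage sums exceeds $3c$, so some set must too. The only difference is presentational---the paper argues by contradiction and justifies the key inequality informally (averaging over item slots ``overcounts'' multiply-covered items, hence upper-bounds the plain average), whereas you make that exact step rigorous via Cauchy--Schwarz.
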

 \begin{proof}
Since there are $k > cn$ occupied set nodes, there are $3k > 3cn$ items in all occupied sets. There are only $3n$ items, so on average, each item is covered $> c$ times (call this the average coverage). To get an upper bound on average coverage, we can average over the item coverage counts of the occupied sets (this will overcount items covered by multiple sets, hence the upper bound). Suppose no set has a sum of coverage counts $> 3c$, so the sum of all set coverage counts is $\le 3kc$. Averaging over items, we get an upper bound on average coverage of $\le c$ (since there are $3k$ items total across occupied sets), contradicting that the average coverage is $> c$. Thus some set has sum of coverage counts $> 3c$. 
 \end{proof}
 
 We can now proceed by cases on $k$, the number of set nodes remaining. In each case, we will find some $c_i$ with $v'(c_i) < 9n-3k/2 \le v'(s_1)$, which is eliminated before $s_1$.
 \begin{enumerate}
  \item $k < n$. Then $v'(s_1) > 9n - 3n/2 = 15n/2$. The total vote share $c_i$ nodes get from $x$ and $y$ nodes is $3k/2 + \ell (5n-1)/2 \le 3k/2 + 3k(5n-1)/2$. Thus, the average $v'(c_i)$ is at most $3/2 + 3(5n-1)/2 = 15n/2$, so by the pigeonhole principle some node $c_i$ has $v'(c_i) \le 15n/2 < 15n/2 + 3n/2 - 3k/2 = 9n-3k/2 \le v'(s_1)$.
  \item $k = n$. Since there is no exact cover, we must have $\ell < 3k$. The pigeonhole argument is the same as above, except now we find that there must be some $c_i$ with $v'(c_i) < 15n/2 = 15n/2 + 3n/2 - 3k/2 = 9n - 3k/2 \le v'(s_1)$. 
  \item $n < k \le 5n/3$. Then $v'(s_1) \ge 9n - 3(5n/3)/2 = 6.5n$. Each item is covered $> 1$ times on average, so by \Cref{lemma:average-coverage} there must be a set whose coverage counts are element-wise at least $(2, 1, 1)$. This set has $v'(c_i) \le 3/2 + 2(5n-1)/2 + (5n-1)/4 = 6.25n + 1/4 <  6.5n = 9n - 3(5n/3)/2 \le v'(s_1)$.
  \item $5n/3 < k \le 8n/3$. Then $v'(s_1) \ge 9n - 3(8n/3)/2 = 5n$. Each item is covered $> 5/3$ times on average, so by \Cref{lemma:average-coverage} there must be a set whose coverage counts are element-wise at least $(2, 2, 2)$ or at least $(3, 2, 1)$. In the first case, this set has $v'(c_i) \le 3/2 + 3(5n-1)/4 = 3.75n + 3/4 < 5n =  9n - 3(8n/3)/2 \le  v'(s_1)$. In the second case, this set has $v'(c_i) \le 3/2 + (5n-1)/2 + (5n-1)/4 + (5n-1)/6 = 4.58\overline{3}n + 7/12  <  v'(s_1)$.
  \item $8n/3 < k \le 3n$. Then $v'(s_1) \ge 9n - 3(3n)/2 = 4.5n$. Each item is covered $> 8/3$ times on average, so  by \Cref{lemma:average-coverage} there must be a set whose coverage counts are $(3, 3, 3)$. This set has $v'(c_i) = 3/2 + 3(5n-1)/6 = 2.5n + 1 < 4.5n = 9n - 3(3n)/2 \le v'(s_1)$. 
\end{enumerate}
Regardless of $k$, we thus find that some $c_i$ node has a smaller vote share than $s_1$ (when there is no exact cover), so $s_1$ cannot be eliminated and is the IRV winner. Thus, $\{s_1, s_2\}$ is an IRV exclusion zone.

This construction has size polynomial in the size of the RX3C instance. Thus, \textsc{IRV-Exclusion} is co-NP-complete. 
\end{proof}

\begin{proof}[Proof of \Cref{thm:approx}]

%Given a graph $G$, suppose we want to find some (hopefully small) set of nodes which is probably approximately an IRV exclusion zone. If we place a candidate at every node, then the set of possible winners must all be in the minimal exclusion zone. Unfortunately, determining whether some candidate can win under tie-breaking is NP-complete in general~\cite{conitzer2009preference}. But we can do a number of realizations to get some starting set of possible winners (which may be incomplete).

First, we show the first claim, that the algorithm returns a set $S$ which is a subset of the minimal exclusion zone. Any winner in the configuration with a candidate at every node must be in the minimal IRV exclusion zone, or we would have a counterexample. Thus $S$ is initialized to a subset of the minimal IRV exclusion zone. By \Cref{prop:exclusion-facts}, if $u$ is in the minimal exclusion zone, then any node in $d_L(u)$ must be too (since a path in the loss graph gives us a sequence of configurations to which we can apply \Cref{prop:exclusion-facts}). Moreover, whenever we add a node $w$ to $S$ in step 4(b), it must also be in the minimal exclusion zone, since otherwise the configuration would be a counterexample. Thus all the of nodes we add to $S$ during the algorithm must be in the minimal IRV exclusion zone. 

  Now we show the second claim. Let $X_S$ be a random variable taking value $1$ if a uniformly random configuration of candidates with no clones and at least one candidate in $S$ has IRV winner (with uniform tie-breaking) outside of $S$ and $0$ otherwise. By definition, $S$ is an $\epsilon$-approximate IRV exclusion zone iff $E[X_S] < \epsilon$. By Hoeffding's inequality,
  
  \begin{align*}
    \Pr\left(\left|\frac{1}{t}\sum_{i = 1}^{t} (X_S - E[X_S]) \right|\ge \epsilon \right) \le 2\exp(- 2 t\epsilon^2 )
  \end{align*}
  
  Since we observe $t$ independent draws where $X_S = 0$ (for the $S$ returned by the algorithm), the probability that $E[X_S]\ge \epsilon$ is at most  $2\exp(- 2 t \epsilon^2 )$. We want $2\exp(- 2 t \epsilon^2 ) \le \delta $, so solving for $t$ yields $t \ge \log(2/\delta)/(2 \epsilon^2)$. This is precisely how we picked the number of iterations of the algorithm, satisfying this inequality. Thus, the probability that the $S$ returned by the algorithm is a $(1-\epsilon)$-approximate IRV exclusion zone is at least $1-\delta$. 
  
  Finally, we show runtime. There are $O(n^2)$ pairwise contests. Each contest can be resolved by a single BFS pass taking $O(n + m)$ time, so it takes $O(n^3 + n^2m)$ time to construct $L$ in step 1, which has $n$ nodes and $O(n^2)$ edges. Running $IRV$ on $G$ with a candidate at every node takes $O(n^2+nm)$ time, since we need to run $n$ BFS passes (one for each elimination). Thus the total runtime of step 2 is $O(n^3 + n^2 m)$. We can run step 3 in time $O(n^3)$ by performing a BFS from each node in $L$ to find $d_L(u)$ (this takes $O(n^3)$ time since we run $n$ searches on a graph with $O(n^2)$ edges). We iterate at most $O(n\log(1/\delta)/\epsilon^2)$ times in step 4, since every $\log(1/\delta)/\epsilon^2$ steps we either terminate or add a node to $S$. Each iteration takes $O(n^2 + nm)$ time for up to $n$ BFS passes while running IRV. Combining the four steps makes the overall running time $O((n^3 +n^2m)\log(1/\delta)/\epsilon^2)$.

\end{proof}
\end{document}